\documentclass[11pt]{article}
\usepackage[T1]{fontenc}
\usepackage[margin=1in]{geometry}
\usepackage{mathtools,amssymb,amsthm}
\usepackage{microtype}
\usepackage{graphicx,booktabs,enumitem}
\usepackage{xcolor}
\usepackage{tikz}
\usetikzlibrary{arrows.meta}

\usepackage[
  colorlinks=true,
  linkcolor=red,  
  citecolor=red,  
  urlcolor=red    
]{hyperref}

\usepackage[capitalize,nameinlink,noabbrev]{cleveref}
\usepackage{braket}

\usepackage{algorithm}
\usepackage{algpseudocode}
\algrenewcommand{\algorithmicrequire}{\textbf{Input:}}
\algrenewcommand{\algorithmicensure}{\textbf{Output:}}
\floatname{algorithm}{Protocol}

\theoremstyle{plain}
\newtheorem{theorem}{Theorem}[section]
\newtheorem{lemma}[theorem]{Lemma}
\newtheorem{corollary}[theorem]{Corollary}

\newtheorem{claim}[theorem]{Claim}

\theoremstyle{definition}
\newtheorem{definition}[theorem]{Definition}

\theoremstyle{remark}
\newtheorem{remark}[theorem]{Remark}

\numberwithin{equation}{section}

\newcommand{\R}{\mathbb{R}}
\newcommand{\C}{\mathbb{C}}

\newcommand{\E}{\mathbb{E}}

\DeclareMathOperator{\Var}{Var}

\newcommand{\ba}{\mathbf{a}}

\newcommand{\bx}{\mathbf{x}}
\newcommand{\by}{\mathbf{y}}
\newcommand{\bz}{\mathbf{z}}

\definecolor{HaoyuBlue}{RGB}{25,90,175}

\newcommand{\sugroup}{\operatorname{SU}}
\newcommand{\fun}[1]{\mathsf{#1}}
\newcommand{\trace}{\operatorname{Tr}}
\newcommand{\ugroup}{\operatorname{U}}

\title{Exponential Quantum Advantage in Numbers-on-Forehead Communication}
\author{%
  Haoyu Wang\thanks{Penn State University. Email: \href{mailto:hjw5492@psu.edu}{\texttt{hjw5492@psu.edu}}.}
  \and
  Pei Wu\thanks{Penn State University. Email: \href{mailto:pei.wu@psu.edu}{\texttt{pei.wu@psu.edu}}.}
  \and
  Guangxu Yang\thanks{University of Southern California. Email: \href{mailto:guangxuy@usc.edu}{\texttt{guangxuy@usc.edu}}.}%
}
\date{}

\begin{document}
\maketitle

\begin{abstract}
We give the first exponential quantum advantage in the general interactive three-party Numbers-on-Forehead (NOF) model for a decision problem. Previous separations hold only for restricted protocols like one-way communication for a relation. We construct an explicit partial Boolean function, the Interleaved Unitary Product problem, that requires only $O(\log n)$ NOF quantum communication but  $\widetilde{\Omega}(n^{1/32})$ randomized communication. This function builds on the two-party unitary product problem of Arunachalam, Girish, and Lifshitz (TQC 2024).

The main technical obstacle is that discrepancy, the standard lower-bound method for NOF, also lower-bounds quantum communication. We instead develop a regularity-based argument for randomized NOF lower bounds, building on the approach of Kelley, Lovett, and Meka and adapting the regularity decomposition of Abboud, Fischer, Kelley, Lovett, and Meka (STOC 2024) to cylinder intersections.
Combined with matrix-product estimates of Arunachalam, Girish, and Lifshitz, this yields our randomized lower bound.
\end{abstract}

\section{Introduction}\label{sec:intro}
When can quantum messages convey information more efficiently than classical messages?
Communication complexity, introduced by Yao~\cite{Yao}, provides a natural setting in which this question can be studied through unconditional lower bounds. A long line of work has established exponential separations between quantum and classical communication complexity~\cite{buhrman1998quantum,raz1999exponential,bar2004exponential,gavinsky2007exponential,regev2011quantum,gavinsky2016entangled,girish2022quantum,gavinsky2019quantum,gavinsky2020bare,goos2024quantum,gavinsky2026total}. These separations identify tasks for which quantum messages convey useful information far more efficiently than classical messages, without relying on unproved computational hardness assumptions.

Communication separations have also motivated experimental demonstrations of quantum advantage~\cite{trojek2005experimental,xu2015experimental,guan2016observation,kumar2019experimental,shen2025experimental}.
One example is \emph{quantum information supremacy}, proposed by Aaronson, Buhrman, and Kretschmer~\cite{ABK24}, which uses one-way communication separations to demonstrate an advantage in the information resources required for a task. Recently, a trapped-ion experiment has demonstrated such an advantage, supported by unconditional classical lower bounds~\cite{KGDGGGHMNHA25}. These developments motivate understanding the scope of quantum communication advantage along two dimensions: the information available to the players and the quantum resources available to their protocols.

\paragraph{Quantum advantage with overlapping inputs.}
The separations discussed above establish quantum advantages in two-party models, where the players have access to disjoint parts of the input. Despite this progress, much less is understood about quantum communication advantage when the players' input access overlaps. The importance of input-access assumptions is also highlighted by the trapped-ion experiment~\cite{KGDGGGHMNHA25}: the authors identify a potential loophole arising from the fact that Bob's measurement choice was generated before Alice's state
was prepared, so the preparation device could in principle have advance knowledge of that choice. 
This motivates a natural research question:
\emph{How does overlap in the players’ input access affect quantum communication advantage?}


\paragraph{Quantum advantage with one clean qubit.}
A complementary question is: \emph{What are the minimal quantum resources needed for an exponential communication advantage?} We focus on initial-state purity. Motivated by nuclear magnetic resonance implementations, where the initial state can be highly mixed, Knill and Laflamme introduced the one-clean-qubit model of quantum computation, also known as DQC1~\cite{knill1998power}. Only one qubit is initially pure,
while the remaining workspace is maximally mixed. The model thus
restricts initial-state purity while allowing ideal unitary control. Klauck and Lim introduced a communication version of this model~\cite{klauck2018power}, making it possible to investigate the power of limited purity through unconditional communication lower bounds. Arunachalam, Girish, and Lifshitz subsequently established an exponential separation between one-clean-qubit quantum communication and interactive randomized communication in the two-party setting~\cite{arunachalam2023one}.

\medskip
\medskip
The Number-on-Forehead (NOF) model, introduced by Chandra, Furst, and Lipton~\cite{chandra1983multi}, is one of the best-known models of multiparty communication and captures a maximal form of input overlap. It therefore provides a natural setting that brings these two questions together. The input consists of $k$ parts, $x_1,\ldots,x_k$, and player $i$ sees every part except $x_i$. Thus, each input part is visible to all but one player, while no player sees the entire input. For $k=2$, this recovers the standard two-party model, up to relabeling the inputs, while for $k\geq 3$, the players' views overlap. 


The above considerations lead to the following concrete question for multiparty communication problem:
\begin{quote}
\centering\itshape
Does one clean qubit suffice for an exponential quantum advantage in the multi-party NOF communication model?
\end{quote}

In this paper, we answer the above question affirmatively in the three-party NOF model.
We establish the first unconditional exponential separation between quantum and randomized communication for a partial Boolean function in the three-party NOF model. 

\subsection{Our contributions}
To obtain this separation, we introduce the Interleaved Unitary Product problem, which is  inspired by the two-party communication problem in~\cite{arunachalam2023one}.
\begin{definition}[Interleaved Unitary Product (IUP) problem]
\label{def:overview-abcde}
Let $m\ge1$ be an integer. The input consists of complex matrices $A_i,B_i,C_i,D_i,E_i\in\mathbb C^{m\times m}$,  $i\in\{1,2\}$. Each input matrix $U$ is promised to satisfy $\|U^\dagger U-I_m\|_{\mathrm F}\le10^{-4}$. Group the inputs as
\[
X=(A_1,C_1,A_2,C_2),\quad
Y=(B_1,D_1,B_2,D_2),\quad
Z=(E_1,E_2).
\]

Define $W=A_1B_1C_1D_1E_1A_2B_2C_2D_2E_2$. In the three-party NOF model, Alice, Bob, and Charlie see $(Y,Z)$, $(X,Z)$, and $(X,Y)$, respectively. Their task is to output
\[
\fun b=
\begin{cases}
1, & \text{if }\operatorname{Re}\operatorname{Tr}(W)/m\ge0.9,\\
0, & \text{if }\operatorname{Re}\operatorname{Tr}(W)/m\le0.1
\end{cases}
\]
under the promise that $\operatorname{Re}\operatorname{Tr}(W)/m$ doesn't fall into $(0.1, 0.9)$.
\end{definition}

To obtain a Boolean function, encode $X$, $Y$, and $Z$ as bit strings in $\{0,1\}^n$, using $O(\log n)$ bits for each matrix entry and padding $z$ with zeros. This allows matrices of dimension $m=\Theta(\sqrt{n/\log n})$; see Definition~\ref{def:pro} for the formal description.  

Our first result shows that the interleaved unitary product problem can be tested with logarithmic quantum communication under the one-clean-qubit guarantee.\footnote{Actually, the quantum protocol uses $1$ clean qubit for one round with constant successful probability. We then apply $O(1)$ independent repetitions to improve it for any constant small error, requiring $O(1)$ clean qubits. In the communication analogue of $\mathrm{DQC}1$, \cite{arunachalam2023one} allows such $O(1)$ independent repetitions while \cite{klauck2001lower} account the gap into the cost.} The protocol is a natural adaptation of that in~\cite{arunachalam2023one} and is presented with its analysis  in Section~\ref{sec:quantum_protocol}.  
\begin{theorem}[Quantum upper bound]
\label{thm:intro-quantum}
There is a bounded-error one-clean-qubit quantum NOF protocol for the IUP problem using $O(\log n)$ qubits of communication. 
\end{theorem}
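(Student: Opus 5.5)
The plan is to adapt the one-clean-qubit trace-estimation protocol (the Hadamard test) of \cite{arunachalam2023one} to the NOF setting. The key observation is that every factor of $W=A_1B_1C_1D_1E_1A_2B_2C_2D_2E_2$ is seen by at least two players. Alice sees $B_i,D_i,E_i$, Bob sees $A_i,C_i,E_i$, and Charlie sees $A_i,B_i,C_i,D_i$. So the factors can be applied in sequence, from right to left, by passing a quantum register around. The register consists of one clean control qubit together with a $\lceil\log_2 m\rceil$-qubit register prepared in the maximally mixed state $I_m/m$. Each message is this register, so it costs $O(\log m)=O(\log n)$ qubits.

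\textbf{Step 1: unitarize the inputs.} The matrices are only approximately unitary, $\|U^\dagger U-I\|_{\mathrm F}\le 10^{-4}$. Each player therefore replaces every visible matrix $U$ by its unitary polar factor $\tilde U=U(U^\dagger U)^{-1/2}$, which is computed locally. Since all players who see $U$ compute the same $\tilde U$, the choice is consistent. Standard perturbation bounds give $\|U-\tilde U\|_{\mathrm F}\le\|U^\dagger U-I\|_{\mathrm F}\le 10^{-4}$ and $\|U\|_{op}\le 1+10^{-4}$. A telescoping argument, changing one factor at a time, then gives
\[
\frac1m\bigl|\trace(W)-\trace(\tilde W)\bigr|\le \frac{1}{m}\sum_j \|\text{prefix}\|_{op}\,\|U_j-\tilde U_j\|_{\mathrm F}\,\|\text{suffix}\|_{\mathrm F}\le 10\cdot(1.0001)^{10}\cdot 10^{-4}/\sqrt m\cdot\sqrt m\cdot\ldots
\]
This bound needs care. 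I expect the main technical check to be that the Frobenius-to-trace conversion loses only a constant. Using $|\trace(PQ)|\le\|P\|_{\mathrm F}\|Q\|_{\mathrm F}$, the suffix contributes a Frobenius norm of about $\sqrt m$, and $\|U_j-\tilde U_j\|_{\mathrm F}\le 10^{-4}$. So each of the ten terms is at most roughly $10^{-4}\sqrt m/m\le 10^{-4}$, and the total change is $O(10^{-3})$. The promise gap between $0.1$ and $0.9$ is therefore essentially preserved for $\tilde W$.

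\textbf{Step 2: controlled products along a route.} The protocol proceeds as follows.
\begin{itemize}
\item Charlie prepares the control qubit in $\ket{+}$ and the work register in $I_m/m$.
\item Charlie applies controlled-$\tilde D_2$ and then controlled-$\tilde C_2$.
\item Charlie sends the register to Alice, who applies controlled-$\tilde E_2$ (she sees $Z$).
\item Alice sends the register back to Charlie, who applies controlled-$\tilde D_1\tilde C_1\tilde B_1\tilde A_1$ interleaved with the $E_1$ factor. Concretely, Charlie applies $\tilde D_1$'s successors $\tilde B_2,\tilde A_2$, then ships the register to Alice for $\tilde E_1$, then back to Charlie for $\tilde D_1,\tilde C_1,\tilde B_1,\tilde A_1$.
\end{itemize}
The route is thus Charlie, Alice, Charlie, Alice, Charlie, with $O(1)$ messages. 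The net controlled unitary is controlled-$\tilde W$. Charlie finally measures the control qubit in the $X$ basis. This yields outcome $+$ with probability $\tfrac12+\tfrac{1}{2m}\operatorname{Re}\trace(\tilde W)$, because the Hadamard test on the maximally mixed input computes the normalized trace.

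\textbf{Step 3: amplify.} A single run gives acceptance probability at least about $0.95$ in the $1$-case and at most about $0.55$ in the $0$-case. So a constant gap is achieved with one clean qubit. Following the convention in the footnote, $O(1)$ independent repetitions, followed by a threshold on the number of $+$ outcomes, reduce the error to any constant. This keeps the total communication at $O(\log n)$, since $m=\Theta(\sqrt{n/\log n})$. The only genuinely delicate point is the perturbation bound in Step 1; everything else is bookkeeping about which player sees which factor.
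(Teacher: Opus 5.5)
Your overall plan is the paper's: polar unitarization, a Hadamard test with one clean control qubit on a maximally mixed $m$-dimensional target, routing the register through players who see each factor, and $O(1)$ repetitions. Step~1 is fine. Bounding each telescoping term via $|\trace(PQ)|\le\|P\|_{\mathrm F}\|Q\|_{\mathrm F}$ even gives a per-term loss of $O(10^{-4}/\sqrt m)$, sharper than the paper's operator-norm estimate, though your displayed chain should be cleaned up.

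The genuine error is in Step~2, the part you call bookkeeping. Your route applies the controlled factors in the time order $\tilde D_2,\tilde C_2,\tilde E_2,\tilde B_2,\tilde A_2,\tilde E_1,\tilde D_1,\tilde C_1,\tilde B_1,\tilde A_1$. The net controlled unitary is therefore $V=\tilde A_1\tilde B_1\tilde C_1\tilde D_1\tilde E_1\tilde A_2\tilde B_2\,\tilde E_2\,\tilde C_2\tilde D_2$, not $\tilde W$. Because the target is maximally mixed, any cyclic shift of $\tilde W$ would be acceptable. But $V$ is not one: $\tilde E_2$ has been moved past $\tilde C_2\tilde D_2$, and these factors do not commute.

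This destroys completeness. Take $\mu_1$, where the matrices are exact unitaries and $W=I_m$. Let $M=A_1B_1C_1D_1E_1A_2B_2$ and $N=C_2D_2$; these are independent Haar matrices. Then $E_2=(MN)^{-1}$, so $V=MN^{-1}M^{-1}N$ is a group commutator. Since $\E_M[MN^{-1}M^{-1}]=(\trace N^{-1}/m)I_m$, averaging over $M$ and then $N$ gives $\E[\trace V]=\E|\trace N|^2/m=1/m$. So over these yes-instances a single run outputs $+$ with probability only $\tfrac12+\tfrac1{2m^2}$ on average, instead of at least $0.95$ on each instance. The completeness analysis in your Step~3 therefore fails.

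The repair is easy: apply the factors in exactly the right-to-left order of $W$, or a cyclic shift of it. For instance:
\begin{enumerate}
\item Alice prepares the register and applies controlled-$\tilde E_2$.
\item Charlie applies the controlled versions of $\tilde D_2,\tilde C_2,\tilde B_2,\tilde A_2$ in that order.
\item Alice applies controlled-$\tilde E_1$.
\item Charlie applies the controlled versions of $\tilde D_1,\tilde C_1,\tilde B_1,\tilde A_1$ and measures.
\end{enumerate}
If you want Charlie to start with $\tilde D_2$, then $\tilde E_2$ must be applied last, not third. This route uses only Alice and Charlie and three messages per run, fewer than the nine in the paper's route through all three players. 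With it, your Hadamard-test formula, the $0.95$ versus $0.55$ gap, and the $O(1)$ repetitions go through as in the paper.
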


In contrast, every classical randomized protocol requires polynomial communication, even with unrestricted interaction among all three players. This we view as the main technical contribution work in this work. 
\begin{theorem}[Randomized lower bound]
\label{thm:intro-classical}
Every bounded-error randomized NOF protocol for the IUP problem requires $\widetilde{\Omega}\!\left(n^{1/32}\right)$
bits of communication.
\end{theorem}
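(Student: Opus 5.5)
We follow the abstract: instead of discrepancy, we use a regularity argument on cylinder intersections combined with the matrix-product estimates of Arunachalam, Girish, and Lifshitz. By Yao's principle it suffices to exhibit hard distributions $\mu_0$ (supported on $\fun b=0$) and $\mu_1$ (supported on $\fun b=1$). We then show that no deterministic NOF protocol with $c=o(n^{1/32})$ bits distinguishes $\mu_0$ from $\mu_1$ with constant advantage.

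\textbf{The distributions.} Under $\mu_0$, sample all ten matrices independently from Haar measure on $\ugroup(m)$ (discretized to $O(\log n)$ bits per entry). Then $\trace(W)/m$ concentrates near $0$. Under $\mu_1$, sample $A_1,\dots,D_2,E_1$ Haar and set $E_2$ so that $W=I$ (up to discretization). This makes $E_2$ a function of $X$ and $Y$, which is exactly what is needed for hardness: Charlie sees $X,Y$ but not $Z$, while Alice and Bob each see $Z$ but miss one of $X,Y$. A deterministic $c$-bit protocol partitions the input space into at most $2^c$ cylinder intersections $S=S_1(Y,Z)\cap S_2(X,Z)\cap S_3(X,Y)$. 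It therefore suffices to show that for every cylinder intersection $S$ of density $\ge 2^{-O(c)}$ we have $|\mu_1(S)-\mu_0(S)|\le \varepsilon\,\mu_0(S)+2^{-\Omega(c)}$.

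\textbf{Regularity step.} Following Kelley--Lovett--Meka and the decomposition of Abboud--Fischer--Kelley--Lovett--Meka, we partition each cylinder intersection of density $\delta$ into $2^{O(c)}$ pieces, plus a negligible error. On each piece the sets $S_1,S_2$ become pseudorandom ("spread") with respect to combinatorial subcubes. Concretely, after fixing few coordinates, the indicator of $S_1(\cdot,Z)$, viewed as a function of $Y$ over the product of Haar measures, has bounded low-level correlations (bounded grid norms in the sense of KLM); the same holds for $S_2$. The cost of this step is polynomial in $\log(1/\delta)=O(c)$. This is where a polynomial loss enters, and it is ultimately why the exponent is $1/32$.

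\textbf{Matrix-product estimate.} On a spread piece, $\mu_1(S)$ equals an average over $X,Y$ of $1_{S_3}(X,Y)\,1_{S_1}(Y,Z^*)\,1_{S_2}(X,Z^*)$, where $Z^*=(E_1,E_2(X,Y))$. We apply Cauchy--Schwarz twice to eliminate $1_{S_3}$ and then $1_{S_1}$, producing a box-norm expression. Each application of Cauchy--Schwarz doubles the number of matrix variables, so the exponents compound (four doublings give the $32$). This reduces the problem to bounding expectations of functions of products such as $A_1B_1C_1D_1E_1A_2B_2C_2D_2$ together with their duplicated copies. Here we invoke the AGL estimate: a product of independent Haar unitaries interleaved with sets of density $\ge 2^{-k}$ has distribution close to Haar in the relevant test norm, with error $\mathrm{poly}(k)/\sqrt m$ or a similar polynomial bound. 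Combining this with spreadness gives $\mu_1(S)\approx\mu_0(S)$ whenever $c^{O(1)}\ll m^{\Omega(1)}$. Since $m=\Theta(\sqrt{n/\log n})$, this yields $c=\widetilde\Omega(n^{1/32})$.

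\textbf{Main obstacle.} The hardest part is making the regularity decomposition compatible with the continuous unitary structure. Spreadness must be defined in a form that the AGL hypercontractive and Fourier estimates on $\ugroup(m)$ can exploit after the Cauchy--Schwarz steps. The interleaving of the $X$ and $Y$ factors (rather than grouping them as $XY$) is precisely what prevents any single cylinder from computing $W$, so the bookkeeping of which copies are duplicated must be done carefully. Our first attempt will be to define spreadness via the level-$\le d$ Fourier weight of the restricted indicator functions on $\ugroup(m)^{4}$ and to run a density-increment argument on that weight.
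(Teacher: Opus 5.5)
Your high-level plan (a planted distribution with $W=I_m$, cylinder intersections, a regularity decomposition, and the Arunachalam--Girish--Lifshitz estimate) matches the paper. There are, however, genuine gaps, and the first is already in the hard distribution. With Haar measure on $\ugroup(m)$ your pair is easy to tell apart. Under $\mu_1$ you have $W=I_m$, so $\det W=1$. Under $\mu_0$ the phase of $\det W=\det(A_1C_1A_2C_2)\det(B_1D_1B_2D_2E_1E_2)$ is uniform on the circle. So Charlie, who sees $X$, sends three bits locating the phase of $\det(A_1C_1A_2C_2)$ to within $\pi/8$. Alice, who sees $Y,Z$, adds the phase of the other factor and accepts iff the sum is within $\pi/4$ of $0$; rounding moves each phase by only $o(1)$. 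This test accepts $\mu_1$ always and accepts $\mu_0$ with probability at most $3/8+o(1)$. Hence no corruption bound can hold for your $\mu_0,\mu_1$. The paper samples from $\sugroup(m)$, which is also the setting in which the AGL lemma is stated.

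The analytic core also does not deliver what is needed. In NOF you cannot set aside cylinder intersections of small $\mu_0$-mass. Equal marginals bound $\mu_1(S)$ only by the mass of each single cylinder, not by the mass of their intersection. So you need a one-sided bound $\mu_1(S)\le\alpha\mu_0(S)+e^{-\beta m^{1/16}}$ for \emph{every} cylinder intersection, including pieces that are sparse relative to their box.

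Your Cauchy--Schwarz step is too lossy for this. Eliminating $1_{S_3}$ costs a factor of order $\delta_3^{-1/2}$ against $\mu_0(S)\approx\delta_1\delta_2\delta_3$, where $\delta_3$ is the relative density of $S_3$. Running the same box-norm argument on $\mu_1(S)-\mu_0(S)$ instead is the discrepancy method. That cannot succeed here, because discrepancy also lower-bounds quantum NOF communication. Separately, AGL's Lemma~5.2 is a \emph{relative}-error bound valid down to densities $e^{-\kappa\sqrt m}$. An additive $\mathrm{poly}(k)/\sqrt m$ error per piece, as you state it, would not survive summation over $2^c$ leaves.

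The paper's route is as follows.
\begin{enumerate}
\item Approximate by finite-type cylinder intersections.
\item Apply the AFKLM decomposition $\fun c=\sum_i\fun f_i\fun g_i\fun h_i$, with at most $e^{c_\epsilon d^7}$ pieces and $\sum_i\mu_0(\mathcal B_i)=O(d^2)$.
\item Prove $\|\E_{\mu_1}[\fun f\fun g\mid X,Y]\|_d\le\lambda^{1+2/d}\|\fun f\|_{U(2,d)}\|\fun g\|_{U(2,d)}+e^{-\kappa\sqrt m/d}$. This is done by conditioning on $L=(P,Q)$, drawing $d$ conditionally independent copies of $Z$, and applying AGL to $PE_1QE_2=I_m$ and coordinatewise to $ABCD=L$. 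Cauchy--Schwarz is used only to separate $\fun f$ from $\fun g$.
\item Handle $\fun h_i$ by H\"older with exponent $d$. This costs only $\delta_{\fun h,i}^{-1/d}\le2^{\epsilon}$ when $\delta_{\fun h,i}>2^{-\epsilon d}$. It is matched against the Kelley--Lovett--Meka moment-lemma lower bound on $\mu_0(\fun c_i)$ for regular, min-degree pieces.
\item Bound the remaining sparse pieces absolutely, using $p=2$ and the volume bound.
\end{enumerate}

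The exponent comes from balancing $e^{c_\epsilon d^7}$ pieces against an $e^{-\kappa\sqrt m/d}$ error per piece. This gives $d\approx m^{1/16}$, and with $m\approx\sqrt{n/\log n}$ the bound $n^{1/32}$ up to logarithmic factors. It does not come from four Cauchy--Schwarz doublings.
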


For the classical lower bound, we construct two input distributions under which the IUP function is $0$ and $1$, respectively, with high probability.
We then prove a corruption bound showing that distinguishing these distributions with constant advantage requires the claimed communication cost. 
To prove this corruption bound, we adapt the regularity decomposition for Boolean matrices developed by Abboud, Fischer, Kelley, Lovett, and Meka~\cite{abboud2024new} to cylinder intersections, which allows us to treat the sparse and pseudorandom cases separately.
Besides, the matrix-product estimates of Arunachalam, Girish, and Lifshitz~\cite{arunachalam2023one} and the regularity estimates of Kelley, Lovett, and Meka~\cite{kelley2024explicit} provide the key technical tools for establishing the corruption bound.

We prove this lower bound in Section~\ref{sec:lowerbound}. Together, Theorems~\ref{thm:intro-quantum} and~\ref{thm:intro-classical} give the claimed exponential separation for a partial Boolean function, while requiring only one clean qubit on the quantum side.

\subsection{Related work}
Gavinsky and Pudl\'ak~\cite{QCCinNOF} gave relational problems with $O(\log n)$-qubit simultaneous quantum protocols and $n^{\Omega(1/k^2)}$ randomized lower bounds in a non-interactive NOF model. Their separation is exponential for constant $k$ and superpolynomial for $k=o\bigl(\sqrt{\log n/\log\log n}\bigr)$. In their classical model, the first $k-1$ players each send a message to the final player without seeing one another's messages. Yang and Zhang~\cite{yang2025quantum} gave a gadgeted Hidden Matching construction for three-party simultaneous NOF, obtaining an $O(\log n)$ quantum upper bound and an $\Omega(n^{1/16})$ randomized lower bound. Their subsequent work~\cite{yang2026exponential} extended this direction to one-way NOF communication, giving an $O(\log n)$ quantum upper bound and an $\Omega(n^{1/3}/2^{k/3})$ randomized lower bound. These results consider relational problems and restrictions on classical interaction. Our result allows \emph{unrestricted} interaction among all three classical players and establishes the separation for a partial \emph{Boolean} function.

\section{Preliminaries}\label{sec:preliminary}

\paragraph{Notation.}
We use standard matrix notation. The norms $\|\cdot\|_2$,
$\|\cdot\|_{\mathrm F}$, and $\|\cdot\|_{\operatorname{op}}$ denote the
Euclidean, unnormalized Frobenius, and operator norms, respectively.
Write $\ugroup(m)$ and $\sugroup(m)$ for the unitary and special unitary
groups.

For $\delta\ge0$, denote the set of matrices closed to unitary by 
\[
\mathcal M_m(\delta)
=\{M\in\C^{m\times m}:\|M^\dagger M-I_m\|_{\mathrm F}\le\delta\}.
\]
Let $\nu$ denote Haar probability measure on $\sugroup(m)$, and
$\nu^{\otimes q}$ its $q$-fold product. Haar measure is invariant under
left and right multiplication and inversion.
For a probability measure $\mu$, write
\[
\mu(\fun u)=\E_\mu[\fun u],\qquad
\mu(\mathcal S)=\mu(\mathbf1_{\mathcal S}),\qquad
\|\fun u\|_p=\bigl(\E_\mu[|\fun u|^p]\bigr)^{1/p}
\quad(1\le p<\infty),
\]
whenever these expectations are defined.
For a nonempty finite set $\mathcal S$, $x\sim\mathcal S$ means uniform
sampling. Samples are independent unless a joint distribution or
conditioning is specified.

\paragraph{Communication complexity.}
We use the three-player number-on-forehead (NOF)
model~\cite{chandra1983multi,kushilevitz1997communication}.
For inputs $\bx,\by,\bz\in\{0,1\}^n$, Alice, Bob, and Charlie see
$(\by,\bz)$, $(\bx,\bz)$, and $(\bx,\by)$, respectively.
Classical protocols allow public randomness and unrestricted blackboard
interaction. Local computation is free, and communication cost is the
worst-case total number of bits or qubits exchanged. Unless stated
otherwise, protocols have error at most $1/3$ on every promised input.

Fixing the random coins of a classical NOF protocol of cost at most $t$
on all inputs partitions its input space into at most $2^t$
\emph{cylinder intersections}, one for each transcript. Their indicators
have the form
\[
\fun c(\bx,\by,\bz)=\fun f(\by,\bz)\fun g(\bx,\bz)\fun h(\bx,\by),
\]
where the factors are measurable Boolean functions of the indicated arguments.

\paragraph{Decode Process.}
When bit strings are given as input, players decode them into complex matrices. Let $b=\lceil \log n \rceil $. Given a bit strings $\bx\in \{0,1\}^{n}$, divide it into blocks with length $b+2$. Each block $\ba = (\ba_1,...,\ba_{b+2} )\in \{0,1\}^{b+2}$ is decoded as a rational number in two’s-complement representation, i.e.,  \[
\mathsf{dec}(\ba) = -2\cdot\ba_{b+2} + \sum_{r=1}^{b+1} 2^{r-1-b}\cdot \ba_{r} \in [-2,2)\cap 2^{-b}\cdot\mathbb{Z}.
\]
Each consecutive pair of blocks forms a complex number, with the first as the real part and the second as the imaginary part. 
Decode the bit string from left to right into complex numbers, then arrange them row by row into  $m\times m$ matrices, where $m=\Theta(\sqrt{n/\log n})$.
Given $(\bx,\by,\bz) \in (\{0,1\}^{n})^{3}$,  ignoring the redundant bits,  they decode $\bx\in \{0,1\}^{n}$ into four complex matrices $X=( A_1,C_1,A_2,C_2)\in (\mathbb{C}^{m\times m})^{4}$. Similarly, decode $\by,\bz\in \{0,1\}^{n}$ into  $Y=(B_1,D_1,B_2,D_2)$, $Z = (E_1, E_2)$.

\begin{definition}
[IUP function]
\label{def:pro}
\label{def:ABCDE}
    Given $\bx,\by,\bz\in \{0,1\}^{n}$, in the NOF view, Alice sees $(\by,\bz)$, Bob sees $(\bx,\bz)$ and  Charlie sees  $(\bx,\by)$. They   apply the decode process locally and get $X,Y\in (\mathbb{C}^{m\times m} )^{4}$ and $Z\in (\mathbb{C}^{m\times m} )^{2}$, where $m=\Theta(\sqrt{n/\log n})$.  Define  \[
    \fun{F}_{n}(\bx,\by,\bz)=\begin{cases}
        0, & \text{ if } \operatorname{Re}\trace(W) /m\le 0.1 \;\;\&\;\; X,Y,Z \subseteq \mathcal M_m(10^{-4}).\\
         1, & \text{ if } \operatorname{Re}\trace(W) /m\ge 0.9 \;\;\&\;\; X,Y,Z \subseteq \mathcal M_m(10^{-4}).\\
         *, & \text{ Otherwise.}
    \end{cases}
    \]
Under the promise that $\fun F_n\neq *$, the players communicate to determine whether $\fun F_n$ is $0$ or $1$.
\end{definition}

\paragraph{Quantum communication.}
We follow the standard quantum formalism~\cite{nielsen2010quantum}.
Our protocol uses no prior entanglement and passes a register consisting
of a control qubit and an $m$-dimensional target with basis $\{\ket j:j\in[m]\}$. We use $\ket+=\frac{\ket0+\ket1}{\sqrt2}$ and $ \ket-=\frac{\ket1-\ket0}{\sqrt2}$.
For $U\in\ugroup(m)$, define the controlled operation $\fun{ctrl}(U)=\ket0\bra0\otimes I_m+\ket1\bra1\otimes U$. In the one-clean-qubit communication model~\cite{klauck2018power},
the initial state is $\varrho_0=\ket0\bra0\otimes I_{2^q}/2^q$
for an integer $q\ge0$. Players apply input-dependent unitaries and exchange these qubits, without additional quantum memory, before a fixed projective measurement.

\section{Quantum protocol}
\label{sec:quantum_protocol}
In this section, we show the quantum protocol for the  IUP problem with $O(\log n)$ qubits  
and finish the proof of  Theorem~\ref{thm:intro-quantum}

To keep concise, 
let $P_1,P_2,P_3$ denote Alice, Bob, and Charlie, respectively.
After decoding, their views are $(Y,Z)$, $(X,Z)$, and $(X,Y)$, where
\[
X=(A_1,C_1,A_2,C_2),\qquad
Y=(B_1,D_1,B_2,D_2),\qquad
Z=(E_1,E_2).
\]
And Recall that $$W=A_1B_1C_1D_1E_1A_2B_2C_2D_2E_2.$$ 
We first replace each input matrix $U$ by its unitary polar factor
$\widetilde U=U(U^\dagger U)^{-1/2}$, and let $\widetilde W$ be the
product of these factors in the same order as in $W$.
The promise $U\in\mathcal M_m(\sigma)$, with $\sigma=10^{-4}$,
ensures that this replacement preserves the trace gap.

\begin{claim}
\label{claim:unitarization}
For every input matrix $U$, $\widetilde U^\dagger\widetilde U=I_m$, $\|\widetilde U-U\|_{\operatorname{op}}\le\sigma$. Moreover,
\[
\frac{|\operatorname{Re}\trace\widetilde W-\operatorname{Re}\trace W|}{m}
\le 10\sigma(1+\sigma)^{9/2}.
\]
\end{claim}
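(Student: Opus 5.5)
We split the argument into a per-matrix estimate, obtained from the polar decomposition, and a telescoping estimate for the ten-fold product.

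\textbf{Per-matrix step.} Write $H=U^\dagger U$, which is Hermitian and positive semidefinite. From $\|H-I_m\|_{\operatorname{op}}\le\|H-I_m\|_{\mathrm F}\le\sigma<1$, every eigenvalue $\lambda$ of $H$ lies in $[1-\sigma,1+\sigma]$. Hence $H$ is invertible and $\widetilde U=UH^{-1/2}$ is well defined. Then
\[
\widetilde U^\dagger\widetilde U=H^{-1/2}HH^{-1/2}=I_m.
\]
Also $U=\widetilde U H^{1/2}$, so $\widetilde U-U=\widetilde U(I_m-H^{1/2})$. Since $\widetilde U$ is unitary,
\[
\|\widetilde U-U\|_{\operatorname{op}}=\max_\lambda\bigl|1-\sqrt\lambda\bigr|=\max_\lambda\frac{|1-\lambda|}{1+\sqrt\lambda}\le\sigma.
\]
We also record $\|U\|_{\operatorname{op}}=\max\sqrt\lambda\le(1+\sigma)^{1/2}$ and $\|\widetilde U\|_{\operatorname{op}}=1$.

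\textbf{Telescoping step.} Label the ten factors of $W$ as $U_1,\dots,U_{10}$ in order. Then
\[
\widetilde W-W=\sum_{j=1}^{10}\widetilde U_1\cdots\widetilde U_{j-1}\,(\widetilde U_j-U_j)\,U_{j+1}\cdots U_{10}.
\]
The $j$-th term has operator norm at most $\sigma(1+\sigma)^{(10-j)/2}\le\sigma(1+\sigma)^{9/2}$, because at most nine non-unitarized factors appear in it. Summing over $j$ gives $\|\widetilde W-W\|_{\operatorname{op}}\le10\sigma(1+\sigma)^{9/2}$.

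\textbf{Trace step.} For any $M$ we have $|\operatorname{Re}\trace M|\le|\trace M|\le m\|M\|_{\operatorname{op}}$. Applying this to $M=\widetilde W-W$ and dividing by $m$ gives the stated bound.

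The only points needing care are the eigenvalue bound in the per-matrix step, which converts the Frobenius-norm promise into control of the spectrum, and matching the exponent $9/2$. The latter works because each telescoping term contains at most nine original factors, each of norm at most $(1+\sigma)^{1/2}$. The rest is routine.
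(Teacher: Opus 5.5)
Your proof is correct and follows essentially the same route as the paper. Both arguments get spectral control of $U^\dagger U$ from the Frobenius promise and bound $|1-\sqrt\lambda|=|1-\lambda|/(1+\sqrt\lambda)\le\sigma$; both then use the same one-factor-at-a-time telescoping, with unitary factors on the left and original factors of norm at most $(1+\sigma)^{1/2}$ on the right, and finish with $|\trace M|\le m\|M\|_{\operatorname{op}}$.
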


Claim~\ref{claim:unitarization} is proved in Appendix~\ref{app:additional-proofs}.
Using the register and controlled operations defined in
Section~\ref{sec:preliminary}, the players execute the following protocol.

\begin{algorithm}[H]
\caption{Quantum protocol for gap trace estimation}
\label{protocol}
\begin{algorithmic}[1]
\State Each player computes $\widetilde U=U(U^\dagger U)^{-1/2}$ for every input matrix $U$ in their view.
\State $L\gets100$.
\For{$\ell=1,\ldots,L$}
  \State $P_2$ samples $j_\ell$ uniformly from $[m]$, independently of earlier trials, and prepares $\ket\Phi\gets\ket+\ket{j_\ell}$.
  \For{$r=1,\ldots,10$}
    \State $P_{i_r}$ applies $\fun{ctrl}(\widetilde U_r)$ to $\ket\Phi$, according to the table:
    \Statex \;\;\; \;\;\;\;\; \(\displaystyle
      \begin{array}{c|cccccccccc}
      r   &1&2&3&4&5&6&7&8&9&10\\ \hline
      U_r &E_2&D_2&C_2&B_2&A_2&E_1&D_1&C_1&B_1&A_1\\
      i_r &2&1&3&1&3&2&1&3&1&3
      \end{array}\)
    \If{$r<10$}
      \State $P_{i_r}$ sends the register to $P_{i_{r+1}}$.
    \EndIf
  \EndFor
  \State $P_3$ measures $\ket{\Phi}$ 
  by $\{ \ket{+}\bra{+}\otimes I_{m} ,\ket{-}\bra{-}\otimes I_{m}\}$, and let $ \omega_\ell = \begin{cases}
      1, & \ket{+}\bra{+}\otimes I_{m} ,\\
      -1, & \ket{-}\bra{-}\otimes I_{m};
  \end{cases}$ 
\EndFor
\State $P_3$ outputs $\mathbf1\!\left[L^{-1}\sum_{\ell=1}^L\omega_\ell>1/2\right]$.
\end{algorithmic}
\end{algorithm}

Each controlled operation is performed by a player who sees its matrix.
Averaging over $j_\ell$, the initial message state is in 
$\ket+\bra+\otimes I_m/m$.
The operations apply the factors from right to left in $\widetilde W$,
so, conditioned on the sampled  $j_\ell \in [m]$, the final state is
\[
\ket\Phi
=\frac{\ket0\ket{j_\ell}+\ket1\widetilde W\ket{j_\ell}}{\sqrt2}.
\]
Thus,
\begin{align*}
\Pr[\omega_\ell=1\mid j_\ell]
&=\frac14\bigl\|\ket{j_\ell}+\widetilde W\ket{j_\ell}\bigr\|_2^2
 =\frac{1+\operatorname{Re}\widetilde W_{j_\ell j_\ell}}2, \text{ and }
\Pr[\omega_\ell=-1\mid j_\ell]
=\frac{1-\operatorname{Re}\widetilde W_{j_\ell j_\ell}}2.
\end{align*}
Taking the expectation over $j_\ell$ gives
\[
\E[\omega_\ell]
=\frac1m\sum_{j=1}^m\operatorname{Re}\widetilde W_{jj}
=\frac{\operatorname{Re}\trace\widetilde W}{m}.
\]
By Claim~\ref{claim:unitarization} and  taking  $\sigma = 10^{-4}$, replacing the input matrices by
their polar factors changes the normalized trace by less than $0.002$.
Thus,
\[
\fun F_n=1\ \Longrightarrow\ \E[\omega_\ell]>0.898,
\qquad
\fun F_n=0\ \Longrightarrow\ \E[\omega_\ell]<0.102.
\]
Notice that the sampling of $j_{\ell}$ is  independent and $\Var(\omega_\ell)\le1$.
For $L=100$, Chebyshev's inequality therefore yields
\[
\Pr\!\left[
\mathbf1\!\left[\frac1L\sum_{\ell=1}^L\omega_\ell>\frac12\right]
\ne\fun F_n
\right]
\le\frac13.
\]

Finally, each iteration uses nine message transmissions, each consisting of
$1+\lceil\log_2m\rceil$ qubits. Recall that $m=\Theta(\sqrt{n/\log n})$; hence  the total communication cost is
$O(\log n)$.

\paragraph{Remark:}
The $O(\log n)$ quantum upper bound also holds in the more restricted number-in-hand (NIH) model, where $P_1$, $P_2$, and $P_3$ hold $\bx$, $\by$, and $\bz$, respectively.
Run Protocol~\ref{protocol} in the same matrix order, assigning each controlled operation to the player holding its matrix; $P_3$ initializes the register and $P_1$ measures and outputs the answer.
The communication cost and error probability are unchanged.
\section{Randomized lower bound}
\label{sec:lowerbound}
In this section, we prove Theorem~\ref{thm:intro-classical}, establishing the claimed randomized communication lower bound. The high level strategy is to define two input distributions under which the IUP function is $0$ and $1$ with high probability, respectively; then establish a corruption bound showing that distinguishing these distributions with constant advantage requires $\Omega((n/\log n)^{1/32})$ communication cost.  Moreover, we adapt the regularity decomposition of Abboud, Fischer, Kelley, Lovett, and Meka~\cite{abboud2024new} to split cylinder intersections into sparse parts and regular (pseudorandom-like) parts.
Then we combine the regularity guarantees with the matrix-product estimate of Arunachalam, Girish, and Lifshitz~\cite{arunachalam2023one} to bound these parts separately and obtain the corruption bound.

\medskip
At first, we define  two distributions $\mu_1$ and $\mu_0$ of $X,Y,Z$ in the special unitary group  $\sugroup(m)$, the special unitary group. 
Let $\nu$ be the Haar measure of $\sugroup(m)$.
\begin{enumerate}
    \item $\mu_0$: For each $U\in \{ A_1,B_1,C_1,D_1,E_1, A_2,B_2,C_2,D_2,E_2 \}$, sample $U$ by $\nu$  independently.
    \item $\mu_{1}$: For all the  matrices except $E_2$, sample them i.i.d by $\nu$. Then let \[P= A_1 B_1  C_1 D_1, \quad Q =  A_2 B_2 C_2 D_2, \quad \text{and}\;\; E_2 = (P E_1 Q )^{-1}\]
\end{enumerate}

Hence under $\mu_1$, $W=I_m$; under $\mu_0$, $\operatorname{Re}\trace(W)/m$ is close to zero with high probability. By the translation invariance of the Haar measure, the marginal distributions are the same as the following claim.

\begin{claim}\label{claim:haar_measure_marginal}
$\mu_0$ and $\mu_1$  have identical marginals on each NOF view, i.e.,\[
(X,Y) \sim \nu^{\otimes 4}\times \nu^{\otimes 4},
\qquad
(X,Z) \sim \nu^{\otimes 4}\times \nu^{\otimes 2},
\qquad
(Y,Z) \sim \nu^{\otimes 4}\times \nu^{\otimes 2}.
\]
\end{claim}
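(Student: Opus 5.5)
Under $\mu_0$ all ten matrices are i.i.d.\ Haar, so each of the three marginals is a product of Haar measures. The whole task is therefore to show that under $\mu_1$, dropping any one of $X$, $Y$, $Z$ leaves ten, rather, the remaining matrices i.i.d.\ Haar. Under $\mu_1$ the nine matrices other than $E_2$ are i.i.d.\ $\nu$ by construction. Hence $(X,Y)$, which contains neither $E_1$ nor $E_2$, is immediately distributed as $\nu^{\otimes 4}\times\nu^{\otimes 4}$. All the work is in the two views containing $Z=(E_1,E_2)$.

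For $(X,Z)$ I will condition on $X=(A_1,C_1,A_2,C_2)$ and on $E_1$, leaving $B_1,D_1,B_2,D_2$ i.i.d.\ Haar. Then
\[
E_2=(A_1B_1C_1D_1\,E_1\,A_2B_2C_2D_2)^{-1}=D_2^{-1}\,\bigl(C_2^{-1}B_2^{-1}A_2^{-1}E_1^{-1}D_1^{-1}C_1^{-1}B_1^{-1}A_1^{-1}\bigr).
\]
The factor in parentheses is independent of $D_2$. Since $D_2\sim\nu$, inversion invariance gives $D_2^{-1}\sim\nu$. Right-invariance of Haar measure then implies that, conditioned on everything except $D_2$, the product is $\nu$-distributed. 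This conditional law does not depend on the conditioned variables, so $E_2$ is Haar and independent of $(X,E_1)$. Consequently $(X,Z)\sim\nu^{\otimes4}\times\nu^{\otimes2}$.

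The view $(Y,Z)$ is handled symmetrically by conditioning on $Y$ and $E_1$ and randomizing through an $X$-matrix. The cleanest choice is $A_1$, which sits at the right end of the inverse:
\[
E_2=\bigl(\cdots B_1^{-1}\bigr)A_1^{-1},
\]
and left-invariance gives the same conclusion as before. Everything here lives in $\sugroup(m)$, so the inverses are well-defined and lie in $\sugroup(m)$; in particular $E_2\in\sugroup(m)$ and $\nu$ is the right reference measure.

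The only point needing care is the conditioning step. The formal statement is that if $G\sim\nu$ is independent of a random vector $V$, and $\phi(V)$, $\psi(V)$ are $\sugroup(m)$-valued, then $(V,\phi(V)G\psi(V))\sim\mathcal L(V)\times\nu$. This follows from Fubini together with the two-sided invariance of $\nu$. I will apply it with $G=D_2^{-1}$ or $G=A_1^{-1}$ and $V$ the remaining matrices, then marginalize out the matrices not in the view. I do not expect a real obstacle beyond stating this lemma cleanly.
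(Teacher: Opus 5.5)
Your proof is correct and is essentially the paper's argument. The paper likewise conditions on all but one Haar factor $U_j$ and uses two-sided Haar invariance to show that $U_{10}=(U_{j+1}\cdots U_9)^{-1}U_j^{-1}(U_1\cdots U_{j-1})^{-1}$ is Haar and independent of the others; the only difference is that it does this for every $j\in[9]$ at once, so any nine of the ten matrices are i.i.d.\ Haar, whereas you pick $D_2$ and $A_1$ for the two specific views.
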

The proof of Claim~\ref{claim:haar_measure_marginal} is in Appendix~\ref{app:additional-proofs}.

\medskip
Since the matrices used by a Boolean protocol must be decoded from bit strings.
We introduce the following round process, which maps the coordinates of a continuous complex matrix to the decoding grid $[-2,2)\cap 2^{-b}\cdot\mathbb{Z}$. 
For any measurable function $\fun u$ and set $\mathcal S$, write $\mu_j(\fun u):=\E_{\mu_j}[\fun u]$ and $\mu_j(\mathcal S):=\mu_j(\mathbf1_{\mathcal S})$ for $j\in\{0,1\}$.

\paragraph{Rounding Process.}
Recall that $b=\lceil\log n\rceil$.
Replace every real or imaginary coordinate $t\in[-1,1]$ by
\[
2^{-b}\left\lceil 2^bt-\frac12\right\rceil\in[-2,2)\cap2^{-b}\mathbb Z.
\]
This is nearest-grid rounding, with ties rounded down. Encode these coordinates using the inverse of $\mathsf{dec}$, in the decoding order specified in Section~\ref{sec:preliminary}, and set unused bits to zero.
Denote the resulting encoding maps by
$\fun R_x:\sugroup(m)^4\to\{0,1\}^n$,
$\fun R_y:\sugroup(m)^4\to\{0,1\}^n$, and
$\fun R_z:\sugroup(m)^2\to\{0,1\}^n$; write
$\fun R(X,Y,Z)=(\fun R_x(X),\fun R_y(Y),\fun R_z(Z))$.
For $j\in\{0,1\}$, let $\widetilde\mu_j$ be the distribution of $\fun R(X,Y,Z)$ when $(X,Y,Z)\sim\mu_j$.
Then, for every Boolean input set $\mathcal S$,
\begin{equation}\label{eq:rounding-pullback}
\widetilde\mu_j(\mathcal S)=\mu_j(\fun R^{-1}(\mathcal S))\qquad(j=0,1).
\end{equation}

\begin{claim}\label{claim:rounding}
For all sufficiently large $n$, the inputs under $\mu_0$ and $\mu_1$ satisfy
\[
\Pr_{\mu_1}[\operatorname{Re}\trace(W)/m\ge0.9]=1,\qquad
\Pr_{\mu_0}[\operatorname{Re}\trace(W)/m\le0.1]\ge1-\frac1m.
\]
After the rounding process, every matrix belongs to $\mathcal M_m(10^{-4})$, and
\[
\Pr_{\widetilde\mu_1}[\fun F_n=1]=1,\qquad
\Pr_{\widetilde\mu_0}[\fun F_n=0]\ge1-\frac1m.
\]
\end{claim}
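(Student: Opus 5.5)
The claim has three parts, and I will prove them in order: the unrounded statements, the rounding perturbation, and the transfer to $\fun F_n$.

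\textbf{Unrounded statements.} Under $\mu_1$ we have $W=PE_1QE_2=PE_1Q(PE_1Q)^{-1}=I_m$ identically. Hence $\operatorname{Re}\trace(W)/m=1$ with probability one.

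Under $\mu_0$, translation invariance of Haar measure shows that $W$ is itself Haar distributed on $\sugroup(m)$: conditioning on all factors but $E_2$, the product is a fixed element times $E_2$. So it suffices to bound $\Pr_{W\sim\nu}[\operatorname{Re}\trace W>0.1m]$. I will use the second-moment bounds $\E|\trace W|^2\le 2$ (equal to $1$ on $\ugroup(m)$, and at most $2$ on $\sugroup(m)$ for $m\ge2$, via the standard moment computation). Chebyshev then gives
\[
\Pr\bigl[|\trace W|\ge 0.1m\bigr]\le \frac{2}{0.01m^2}=\frac{200}{m^2}\le \frac1m
\]
once $m\ge200$, which holds for large $n$ since $m=\Theta(\sqrt{n/\log n})$.

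\textbf{Rounding perturbation.} Every entry of a unitary lies in the unit disk, so each real and imaginary coordinate lies in $[-1,1]$ and the rounding map is well defined. Writing $U'$ for the rounded matrix, we have $U'=U+\Delta$ with every entry of $\Delta$ of modulus at most $\sqrt2\,2^{-b-1}\le 1/n$. Thus
\[
\|\Delta\|_{\mathrm F}\le m/n=o(1),
\]
and in particular $\|\Delta\|_{\operatorname{op}}\le\|\Delta\|_{\mathrm F}$. Expanding,
\[
U'^\dagger U'-I=U^\dagger\Delta+\Delta^\dagger U+\Delta^\dagger\Delta,
\]
so $\|U'^\dagger U'-I\|_{\mathrm F}\le 2\|\Delta\|_{\mathrm F}+\|\Delta\|_{\mathrm F}^2\le 3m/n<10^{-4}$ for large $n$. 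Hence every rounded matrix lies in $\mathcal M_m(10^{-4})$. The decoding recovers exactly the rounded values because the grid lies in $[-2,2)\cap 2^{-b}\mathbb Z$ and $\mathsf{dec}$ is a bijection onto it.

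\textbf{Transfer to $\fun F_n$.} Replace the ten factors of $W$ by their rounded versions one at a time using a telescoping sum. Each rounded factor has operator norm at most $1+o(1)$. Using $|\trace(M)|\le m\|M\|_{\operatorname{op}}$, we get
\[
\frac{|\trace W'-\trace W|}{m}\le 10(1+m/n)^{9}\,\frac{m}{n}=o(1).
\]
So the normalized real trace moves by at most, say, $0.001$. Under $\widetilde\mu_1$ it stays at least $0.999\ge0.9$, so $\fun F_n=1$ surely. Under $\widetilde\mu_0$, on the event $|\trace W|<0.1m$ (in fact I will run Chebyshev with threshold $0.09m$ to leave slack), the rounded value is at most $0.1$, so $\fun F_n=0$ with probability at least $1-1/m$. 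Equation~\eqref{eq:rounding-pullback} identifies these probabilities with the $\mu_j$ events.

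\textbf{Main obstacle.} The only nontrivial input is the Haar second moment on $\sugroup(m)$ rather than $\ugroup(m)$. My first approach is to write $V\sim\ugroup(m)$ as $V=e^{i\theta}S$ with $S\sim\nu$, and note that $\trace S$ relates to $\trace V$ through a phase sampled from the $m$-th roots of unity coset. That alone only gives $|\trace S|=|\trace V|$ in distribution, which is exactly what is needed: $|\trace|$ is invariant under scalar phases. Hence $\E_\nu|\trace S|^2=\E_{\ugroup(m)}|\trace V|^2=1$, which settles the bound.
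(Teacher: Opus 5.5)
Your proof is correct and follows the paper's argument essentially step for step: the entrywise rounding bound $\|\Delta\|_{\mathrm F}\le m/n$, the estimate $2\|\Delta\|_{\mathrm F}+\|\Delta\|_{\mathrm F}^2$ for membership in $\mathcal M_m(10^{-4})$, the one-factor-at-a-time telescoping bound on the normalized trace, and Markov's inequality on $|\trace W|^2$ with $\E|\trace W|^2=1$ for Haar $W$. The only local difference is how you get the $\sugroup(m)$ second moment: you multiply $S\sim\nu$ by an independent uniform phase, which gives a Haar element of $\ugroup(m)$ with the same $|\trace|$, whereas the paper computes $\E[W_{ii}\overline{W_{jj}}]=\mathbf 1[i=j]/m$ directly from Haar invariance; both routes are valid, and your preliminary hedge $\E|\trace W|^2\le 2$ is unnecessary.
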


The proof of Claim~\ref{claim:rounding} is in Appendix~\ref{app:additional-proofs}.
We can therefore work with Haar matrices and apply the Boolean protocol after rounding.
Equation~\eqref{eq:rounding-pullback} preserves every acceptance probability exactly.

\medskip

\paragraph{Cylinder intersection.}
Let $\mathcal{X}$, $\mathcal{Y}$, and $\mathcal{Z}$ be measurable spaces.
Given measurable Boolean functions
\[
\fun f : \mathcal{Y} \times \mathcal{Z} \to \{0,1\},\qquad
\fun g : \mathcal{X} \times \mathcal{Z} \to \{0,1\},\qquad
\fun h : \mathcal{X} \times \mathcal{Y} \to \{0,1\},
\]
The cylinder intersection (indicator) is 
\[
\fun c : \mathcal{X} \times \mathcal{Y} \times \mathcal{Z} \to \{0,1\},
\qquad
\fun c(X,Y,Z) := \fun f(Y,Z)\cdot \fun g(X,Z)\cdot \fun h(X,Y).
\]

The randomized   lower bound follows directly from the following corruption bound, which shows any short classic protocol can not distinguish ${\mu}_0$ and ${\mu}_1$ significantly on  cylinder intersections.

\begin{lemma}[corruption bound]
\label{lem:corruption_bound}
 Let $\mathcal{X}=\mathcal{Y} = \sugroup(m)^{4}$ and $\mathcal{Z} = \sugroup(m)^{2}$. 
  There exist universal constants $\alpha>1$ and $\beta>0$ such that for each cylinder intersection  $\fun c$, we have \[
    \mu_{1}(\fun c)\le \alpha \cdot\mu_{0}(\fun c) + \mathrm{exp}\big(-\beta m^{1/16} \big).
    \]
\end{lemma}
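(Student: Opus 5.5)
Since $\mu_0$ is the product measure $\nu^{\otimes 10}$ and $\mu_1$ differs from it only in that $E_2$ is determined by the other nine matrices, I will write $\mu_1(\fun c)$ as an integral against $\mu_0$. Put $E_2=(PE_1Q)^{-1}$ with $P=A_1B_1C_1D_1$ and $Q=A_2B_2C_2D_2$. Then
\[
\mu_1(\fun c)=\E_{X,Y,E_1}\bigl[\fun h(X,Y)\,\fun f(Y,E_1,(PE_1Q)^{-1})\,\fun g(X,E_1,(PE_1Q)^{-1})\bigr],
\]
and $\mu_0(\fun c)$ is the same expression with $(PE_1Q)^{-1}$ replaced by an independent Haar matrix. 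Fix $E_1$ and view $\fun f(\cdot,E_1,\cdot)$ and $\fun g(\cdot,E_1,\cdot)$ as functions on $\sugroup(m)^4\times\sugroup(m)$. The question then becomes whether the random group element $PE_1Q$, which depends jointly on the interleaved coordinates of $X$ and $Y$, correlates with the sets described by $\fun f,\fun g,\fun h$. Here the interleaving $A_1B_1C_1D_1$ is the key feature: neither $X$ nor $Y$ alone determines $P$.

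The first step is a regularity decomposition adapted from \cite{abboud2024new}. I apply it to $\fun h$ as a bipartite Boolean matrix on $\mathcal X\times\mathcal Y$, and also to the $Z$-sections of $\fun f$ and $\fun g$. The rectangle structure lets me partition $\mathcal X\times\mathcal Y$, up to a small exceptional part, into product pieces $\mathcal X_i\times\mathcal Y_i$ with the following property: restricted to each piece, $\fun h$ is either sparse (density at most a small threshold $\epsilon$) or spread/pseudorandom in the Kelley--Lovett--Meka sense, meaning it has small grid norm and no large density increment on sub-rectangles of nontrivial density. The sparse pieces are charged to the additive error. The bound $\mu_1\le\alpha\mu_0$ only needs to hold on dense regular pieces, and the multiplicative constant $\alpha>1$ absorbs the density-increment losses from the decomposition, following the argument of \cite{kelley2024explicit}. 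The number of pieces and the density thresholds will be chosen as $\exp(-\Theta(m^{1/16}))$.

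On a regular piece, $\mathcal X_i,\mathcal Y_i$ are large, i.e.\ of density at least $e^{-O(m^{1/16})}$. By conditioning and dropping low-density parts, I reduce to showing that the distribution of the product $P E_1 Q$, for $(X,Y)$ drawn from a pseudorandom subset of a large product set, is close to Haar when tested against the pair of functions $\fun f,\fun g$. This is where the matrix-product estimate of \cite{arunachalam2023one} enters. It states that a product of several Haar-like matrices, each drawn from a set of measure at least $e^{-m^{c}}$, has normalized trace moments, and hence low-degree test-function expectations, close to those of Haar measure. I will pass from $\fun f,\fun g$ to such test functions using Fourier analysis on $\sugroup(m)$: low-degree representations are controlled by the AGL estimate, and high-degree representations by the spreadness of the piece. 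Combining the two, with $\ell_2$-to-$\ell_1$ comparisons via Hölder, gives $\mu_1(\fun c\cdot\mathbf 1_{\text{piece}})\le(1+o(1))\mu_0(\fun c\cdot\mathbf 1_{\text{piece}})+e^{-\beta m^{1/16}}$.

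The main obstacle is the third step. Pseudorandomness of $\fun h$ must be converted into a statement about products of matrices from both $X$ and $Y$ while $\fun f$ and $\fun g$ remain arbitrary, since each of them sees $E_2$. I plan to apply Cauchy--Schwarz twice, once eliminating $\fun f$ (a function of $Y,Z$) and once eliminating $\fun g$ (a function of $X,Z$). This reduces the problem to a box-norm-type quantity over copies $X,X'$ and $Y,Y'$, in which the mismatch appears as $P E_1 Q(P'E_1Q')^{-1}$. That quantity must then be bounded using the AGL estimate on each piece. The exponent loss from these Cauchy--Schwarz steps is what I expect to produce $m^{1/16}$.
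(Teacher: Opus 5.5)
There is a genuine gap, and it lies in which function you regularize. Under both measures the three pairwise marginals are product Haar (Claim~\ref{claim:haar_measure_marginal}). The only difference between $\mu_0$ and $\mu_1$ is the conditional law of $Z$ given $(X,Y)$: given $L=(P,Q)$, $Z$ is forced onto $PE_1QE_2=I_m$. So the object to control is $\E_{\mu_j}[\fun f\fun g\mid X,Y]$, and $\fun h$ acts only as a test function. A constant $\alpha$ requires an upper bound on $\mu_1(\fun c_i)$ and a lower bound on $\mu_0(\fun c_i)$ that match up to constants, and both need $\fun f_i,\fun g_i$ (not $\fun h$) to be $(\epsilon,d)$-regular with $\epsilon$-min-degree. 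The lower bound $\mu_0(\fun c_i)\ge(1-80\epsilon)\mu_0(\mathcal B_i)\delta_{\fun f,i}\delta_{\fun g,i}\delta_{\fun h,i}$ comes from Theorem~\ref{thm:KLM_moment} plus H\"older, and its hypotheses are exactly regularity and min-degree of $\fun f_i,\fun g_i$. By contrast, $\fun h_i$ only needs $\delta_{\fun h,i}>2^{-\epsilon d}$, so that H\"older costs at most $2^{\epsilon}$. This is why the paper applies the AFKLM decomposition to the pair $(\fun f,\fun g)$: it is a simultaneous decomposition of two matrices sharing an index, here $Z$. Pseudorandomness of $\fun h$ on $\mathcal X_i\times\mathcal Y_i$, with $\fun f,\fun g$ arbitrary, gives you no route to either bound. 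Also, ``regularizing $Z$-sections'' of $\fun f,\fun g$ is not the same as controlling their grid norms on $\mathcal Y\times\mathcal Z$ and $\mathcal X\times\mathcal Z$.

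Your third step cannot work as stated, because eliminating $\fun f$ and $\fun g$ by two Cauchy--Schwarz steps is the BNS discrepancy argument. Applied to the signed quantity $\mu_1(\fun c)-\mu_0(\fun c)$, which your ``mismatch'' $PE_1Q(P'E_1Q')^{-1}$ suggests, it would prove discrepancy $e^{-m^{\Omega(1)}}$. That is false: discrepancy lower-bounds quantum NOF communication, so the $O(\log n)$ quantum protocol forces some cylinder intersection to have $|\mu_1(\fun c)-\mu_0(\fun c)|\ge 1/\mathrm{poly}(n)$. Applied to $\mu_1(\fun c)$ alone, each step loses a square root of a density (e.g.\ $\|\fun f\|_2=\mu_0(\fun f)^{1/2}$) that $\mu_0(\fun c)$ does not pay, so no constant $\alpha$ results.

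The paper instead uses Cauchy--Schwarz to \emph{separate} $\fun f$ from $\fun g$, in Theorem~\ref{thm:Grid_norm}. It conditions on $L$, under which $(X,Y)$ and $Z$ are independent, and expands the $d$-th moment with $d$ conditional copies of $Z$. It then applies the AGL estimate (Theorem~\ref{thm:AGL_product}) to $ABCD=L$ and to $PE_1QE_2=I_m$; that estimate is a relative-error bound for rectangles of measure $\ge e^{-\kappa\sqrt m}$, not a Fourier or trace-moment statement. The result is $\|\E_{\mu_1}[\fun f\fun g\mid X,Y]\|_d\le\lambda^{1+2/d}\|\fun f\|_{U(2,d)}\|\fun g\|_{U(2,d)}+e^{-\kappa\sqrt m/d}$, and regularity converts the grid norms into $(1+\epsilon)\delta_{\fun f,i}$ and $(1+\epsilon)\delta_{\fun g,i}$ without loss.

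Your low/high-degree split, with high degrees ``controlled by spreadness'', has no stated mechanism behind it. The exponent $1/16$ does not come from Cauchy--Schwarz losses. It comes from balancing the $k\le e^{c_\epsilon d^7}$ pieces against the per-piece error $e^{-\kappa\sqrt m/d}$, which gives $d\approx m^{1/16}$. Finally, you need a finite-type approximation (Claim~\ref{claim:step_function_arppox}) before any finite regularity decomposition can be applied on $\sugroup(m)$.
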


The proof of  Lemma~\ref{lem:corruption_bound} is deferred to  Section~\ref{sec:corruption_bound}. We are now ready to prove Theorem~\ref{thm:intro-classical}.

\begin{proof}[Proof of Theorem~\ref{thm:intro-classical}]
By constant amplification and truncation, take a protocol $\Pi$ with cost at most $t$ on all inputs, and error at most  $\eta=1/[4(\alpha+1)]$, where $\alpha$ is the constant in the corruption bound. The amplification changes the cost by only a constant factor. Let $\rho$ denotes the random coins, then
\[
\Pr_\rho[\Pi_\rho(\bx,\by,\bz)\ne\fun F_n(\bx,\by,\bz)]\le\eta
\qquad\text{whenever }\fun F_n(\bx,\by,\bz)\ne *.
\]
Fix $\rho$; let $s\le2^t$ be the number of accepting transcript leaves, and let $\widetilde{\fun c}_\ell$ be the indicator of the $\ell$-th leaf, for $\ell\in[s]$. Distinct leaves correspond to disjoint sets of inputs. Each $\widetilde{\fun c}_\ell$ is a Boolean cylinder intersection. Since the rounding process  $\fun R$ encodes $X,Y,Z$ separately, $\fun c_\ell:=\widetilde{\fun c}_\ell\circ\fun R$ is also a cylinder intersection.

The probability identity~\eqref{eq:rounding-pullback} gives $\widetilde\mu_j(\widetilde{\fun c}_\ell)=\mu_j(\fun c_\ell)$ for $j=0,1$. Apply the corruption bound of  Lemma~\ref{lem:corruption_bound}:
\[
\widetilde\mu_1(\widetilde{\fun c}_\ell)
=\mu_1(\fun c_\ell)
\le\alpha\mu_0(\fun c_\ell)+e^{-\beta m^{1/16}}
=\alpha\widetilde\mu_0(\widetilde{\fun c}_\ell)+e^{-\beta m^{1/16}}.
\]
Summing over the accepting leaves and averaging over $\rho$ yields
\[
\Pr_{\widetilde\mu_1,\rho}[\Pi_\rho=1]
\le\alpha\Pr_{\widetilde\mu_0,\rho}[\Pi_\rho=1]+2^t e^{-\beta m^{1/16}}.
\]
Claim~\ref{claim:rounding} says that $\widetilde\mu_1$ is supported on $1$-inputs and $\widetilde\mu_0$ assigns probability at most $1/m$ to inputs with $\fun F_n\ne0$. Together with the error bound above, this gives
\[
\Pr_{\widetilde\mu_1,\rho}[\Pi_\rho=1]\ge1-\eta,\qquad
\Pr_{\widetilde\mu_0,\rho}[\Pi_\rho=1]\le\eta\Pr_{\widetilde\mu_0}[\fun F_n=0]+\Pr_{\widetilde\mu_0}[\fun F_n\ne0]\le\eta+\frac1m.
\]
Hence, for sufficiently large $m$,
\[
2^t e^{-\beta m^{1/16}}\ge1-(\alpha+1)\eta-\frac\alpha m\ge\frac12,
\qquad t\ge\frac\beta{\ln2}m^{1/16}-1.
\]
Therefore the  protocol  must cost $t= \Omega(m^{1/16})=\Omega((n/\log n)^{1/32})$ bits.
\end{proof}

\subsection{Corruption bound via regularity decomposition}
\label{sec:corruption_bound}
In this section, we prove  Lemma~\ref{lem:corruption_bound}.
We first prove the corruption bound for functions that are  constant on finite equal-measure partitions, termed by \textit{finite-type}, so that we can apply the  matrix decomposition from~\cite{abboud2024new}; then we show that any measurable cylinder intersection can be approximated by a finite-type cylinder intersection in arbitrary accuracy.

\medskip
At first, we define the \textit{finite-type}  cylinder intersection. 

\begin{definition}[finite type]\label{def:finite_type}
A cylinder intersection $\fun c=\fun f\fun g\fun h$ is \textit{finite-type} if there are finite measurable partitions $\mathcal P_X,\mathcal P_Y,\mathcal P_Z$ of $\mathcal X,\mathcal Y,\mathcal Z$, respectively, such that for every $\mathcal X'\in\mathcal P_X$, $\mathcal Y'\in\mathcal P_Y$, $\mathcal Z'\in\mathcal P_Z$,
\[
\nu^{\otimes4}(\mathcal X')=\frac1{|\mathcal P_X|},\qquad
\nu^{\otimes4}(\mathcal Y')=\frac1{|\mathcal P_Y|},\qquad
\nu^{\otimes2}(\mathcal Z')=\frac1{|\mathcal P_Z|},
\]
and the restrictions
\[
\fun f|_{\mathcal Y'\times\mathcal Z'},\qquad
\fun g|_{\mathcal X'\times\mathcal Z'},\qquad
\fun h|_{\mathcal X'\times\mathcal Y'}
\quad\text{are constant.}
\]
All    averages of  functions on the cells $\mathcal X'$ ,$\mathcal Y'$, $\mathcal Z'$ or a union of cells use \textit{normalized} Haar measure.
\end{definition}

\medskip
The following claim shows that every measurable cylinder intersection admits a finite-type approximation.
\begin{claim}
\label{claim:step_function_arppox}
For any measurable cylinder intersection indicator $\fun c: \mathcal{X}\times \mathcal{Y}\times \mathcal{Z}\to \{0,1\}$ and any integer $r\ge 1$, there exists finite-type $\fun{c}_{r}$ satisfying  $
|\mu_{i}(\fun{c}-\fun{c}_{r})|\le \frac{1}{r},
$  for both $i\in \{0,1\}$.
\end{claim}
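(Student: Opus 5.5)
Every quantity is controlled by the three pairwise marginals, which by Claim~\ref{claim:haar_measure_marginal} are product Haar measures under both $\mu_0$ and $\mu_1$. So the plan is to approximate each factor $\fun f,\fun g,\fun h$ in $L^1$ of the relevant product Haar measure by a step function on an equal-measure grid, and then control the error in $\fun c$ factor by factor. Since $\fun c$ is a product of $\{0,1\}$-valued functions, if $\fun f_r,\fun g_r,\fun h_r$ are also $\{0,1\}$-valued we have
\[
|\fun f\fun g\fun h-\fun f_r\fun g_r\fun h_r|\le|\fun f-\fun f_r|+|\fun g-\fun g_r|+|\fun h-\fun h_r|
\]
pointwise. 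Taking $\mu_i$-expectations, each term depends only on two of the three coordinates. For instance, $\mu_i(|\fun f-\fun f_r|)=\E_{\nu^{\otimes4}\times\nu^{\otimes2}}|\fun f-\fun f_r|$ by Claim~\ref{claim:haar_measure_marginal}, and this holds for both $i=0,1$ simultaneously. It therefore suffices to make each pairwise $L^1$ error at most $1/(3r)$.

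\textbf{Step 1: equal-measure partitions generating the Borel $\sigma$-algebra.} $\sugroup(m)^k$ with Haar measure is a standard atomless probability space. There is therefore a measure-preserving Borel isomorphism (mod null sets) with $[0,1]$ under Lebesgue measure. Pulling back the dyadic intervals of length $2^{-N}$ gives partitions $\mathcal P^{(N)}_X,\mathcal P^{(N)}_Y,\mathcal P^{(N)}_Z$ into $2^N$ cells of exactly equal measure. These partitions are nested and together generate the $\sigma$-algebra mod null sets. The exact equal-measure requirement of Definition~\ref{def:finite_type} is why I would use the isomorphism rather than, say, balls. Null-set ambiguity can be absorbed by modifying the partition on a null set.

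\textbf{Step 2: approximation.} The product rectangles $\mathcal Y'\times\mathcal Z'$ with $\mathcal Y'\in\mathcal P^{(N)}_Y$ and $\mathcal Z'\in\mathcal P^{(N)}_Z$ generate the product $\sigma$-algebra as $N\to\infty$. By the martingale convergence theorem (or density of finite unions of generating rectangles in $L^1$), the conditional expectation $\E[\fun f\mid\mathcal P^{(N)}_Y\times\mathcal P^{(N)}_Z]$ converges to $\fun f$ in $L^1$. Round this conditional expectation to $\{0,1\}$ by thresholding at $1/2$ to get $\fun f_r$. Since $\fun f$ is Boolean, $|\fun f-\mathbf 1[\varphi\ge 1/2]|\le 2|\fun f-\varphi|$, so the $L^1$ error of $\fun f_r$ also tends to zero. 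Do the same for $\fun g$ and $\fun h$ with a common $N$, large enough that all three errors are below $1/(3r)$. Then $\fun c_r=\fun f_r\fun g_r\fun h_r$ is a finite-type cylinder intersection with respect to $(\mathcal P^{(N)}_X,\mathcal P^{(N)}_Y,\mathcal P^{(N)}_Z)$, and $|\mu_i(\fun c-\fun c_r)|\le\mu_i|\fun c-\fun c_r|\le 1/r$ for $i\in\{0,1\}$.

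\textbf{Main obstacle.} The delicate point is Step 1: producing partitions with exactly equal cell measures that still generate the $\sigma$-algebra. I would first try the Borel isomorphism theorem for standard probability spaces. A hands-on alternative is to take a fine partition into small Borel sets and split or merge pieces using atomlessness, via the intermediate value property of $t\mapsto\nu(S\cap\{\phi\le t\})$ for a continuous coordinate function $\phi$. Everything after that is routine $L^1$ approximation, transferred to both $\mu_0$ and $\mu_1$ through the shared marginals.
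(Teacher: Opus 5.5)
Your proposal is correct, but it reaches the finite-type approximation by a different route than the paper.

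\textbf{The paper's route.} The paper works topologically and instance by instance:
\begin{enumerate}
\item It uses inner/outer regularity of Haar measure to sandwich $\{\fun f=1\}$ between a compact set and an open set.
\item It covers the compact set by finitely many open rectangles lying inside the open set, and refines $\mathcal X,\mathcal Y,\mathcal Z$ by all the rectangle sides.
\item Only then does it enforce equal cell measures, by cutting each cell into pieces of measure $1/q$ using atomlessness.
\item The leftover pieces $R_X,R_Y,R_Z$, each of measure $<s/q$, are regrouped into cells of measure $1/q$. The functions are set to zero on these cells, at a cost of $2s/q$ in $L^1$.
\end{enumerate}

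\textbf{Your route.} You fix, once and for all, nested dyadic partitions with exactly equal cell measures. These are pulled back through a measure-preserving Borel isomorphism with $([0,1],\text{Lebesgue})$. Kechris's theorem for atomless Borel probability measures on standard Borel spaces gives a genuine Borel isomorphism, so no null-set adjustment is needed. You then obtain the approximation from $L^1$ martingale convergence followed by thresholding at $1/2$. Your inequality $|\fun f-\mathbf 1[\varphi\ge 1/2]|\le 2|\fun f-\varphi|$ for Boolean $\fun f$ is right. The product rectangles of dyadic cells do generate the product $\sigma$-algebra, so Lévy's upward theorem applies.

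\textbf{Comparison.} Your approach gives a single partition sequence that works for every cylinder intersection at once, and the equal-measure condition comes for free. The price is invoking the measure isomorphism theorem and martingale convergence. The paper's approach is more elementary: it needs only Radon regularity of Haar measure and divisibility of atomless measures. It pays for this with the explicit trimming step.

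Your ``split or merge'' fallback in the last paragraph would run into exactly the problem that trimming solves. Merged leftover pieces need not lie inside a single original cell, so the functions cannot stay constant on them and must be zeroed there.

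The last step is the same in both proofs. This is the telescoping bound $|\fun f\fun g\fun h-\fun f_r\fun g_r\fun h_r|\le|\fun f-\fun f_r|+|\fun g-\fun g_r|+|\fun h-\fun h_r|$, transferred to both $\mu_0$ and $\mu_1$ through the shared product-Haar marginals of Claim~\ref{claim:haar_measure_marginal}.
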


The proof of Claim~\ref{claim:step_function_arppox} is in Appendix~\ref{app:additional-proofs}.

\medskip
To prove Lemma~\ref{lem:corruption_bound}, we follow the structure-versus-pseudorandomness approach: we decompose cylinder intersections into sparse parts and pseudorandom parts, and bound them separately. We use grid regularity, introduced in communication complexity context  by Kelley, Lovett, and Meka~\cite{kelley2024explicit}, building on the work of Kelley and Meka~\cite{kelley2023strong} on 3-term arithmetic progressions. We obtain the regularity  decomposition on cylinder intersections by adapting the results  of Abboud, Fischer, Kelley, Lovett, and Meka~\cite{abboud2024new}.

\begin{definition}
    [Grid norm, grid regularity, and min-degree]
    Let $\mathcal S,\mathcal T$ be probability spaces. For a non-negative function $\fun u:\mathcal S\times\mathcal T\to\R_+$ and integers $k,\ell\ge1$, define the \textit{grid norm}\footnote{We follow the notation of Kelley, Lovett, and Meka~\cite{kelley2024explicit} and Abboud, Fischer, Kelley, Lovett, and Meka~\cite{abboud2024new}. The letter $U$ stands for uniformity, as in the Gowers uniformity norms of additive combinatorics.} by
    \[
    \|\fun u\|_{U(k,\ell)}
    :=\left(\E_{\substack{s_1,\ldots,s_k\sim\mathcal S\\t_1,\ldots,t_\ell\sim\mathcal T}}
    \left[\prod_{a=1}^k\prod_{b=1}^{\ell}\fun u(s_a,t_b)\right]\right)^{1/(k\ell)},
    \]
    where all samples are independent. We use $k=2$ throughout; for an integer $d\ge1$,
    \[
    \|\fun u\|_{U(2,d)} = \left(\E_{s,s'\sim\mathcal S}\!\left[\left(\E_{t\sim\mathcal T}[\fun u(s,t)\fun u(s',t)]\right)^d\right]\right)^{1/(2d)}.
    \]
Let $\delta := \E_{s\sim\mathcal{S},t\sim \mathcal{T}}[\fun{u}(s,t)]$. Given $\epsilon>0$,  $\fun u$ is \textit{$(\epsilon,d)$-regular} if
\footnote{We write $(\epsilon,d)$-regular for $(\epsilon,2,d)$-regular in Abboud, Fischer, Kelley, Lovett, and Meka~\cite{abboud2024new}, since only the case $k=2$ is used here.}
\[
    \|\fun u\|_{U(2,d)} \le (1+\epsilon)\cdot\delta. 
\] 
And  $\fun u$ satisfies \textit{$\epsilon$-min-degree}  if  \[
\text{for any } s\in \mathcal{S},\quad 
\E_{t\in \mathcal{T}} [\fun{u}(s,t)] \ge (1-\epsilon)\cdot \delta.
\]
\end{definition}

In the following argument, we refer to grid regularity simply as regularity.

\begin{theorem}[Regularity decomposition, adapted from \cite{abboud2024new}]
\label{thm:regularity_decomposition}
Fix $\epsilon>0$ and an integer $d\ge1$.  
For any finite-type  cylinder intersection $\fun c = \fun{f} \fun{g} \fun{h}$ on $ \mathcal{X}\times\mathcal{Y}\times \mathcal{Z}$, there exist $k$ tuples of subsets, indexed by $i\in[k]$, $\mathcal{B}_{i} : =   \mathcal{X}_{i}\times\mathcal{Y}_{i}\times\mathcal{Z}_{i}\subseteq \mathcal{X}\times \mathcal{Y}\times\mathcal{Z}$, where $\mathcal X_i,\mathcal Y_i,\mathcal Z_i$ are nonempty unions of cells
from their respective finite-type  partitions,  satisfying 
\begin{enumerate}
 \item For each $i\in [k]$, there exist Boolean functions $\fun f_i$ and $\fun g_i$, which are constant on products of the partition cells and supported  on $\mathcal{Y}_{i}\times \mathcal{Z}_{i}$ and $\mathcal{X}_{i}\times \mathcal{Z}_{i}$, s.t. $\fun c = \sum_{i=1}^{k} \fun f_{i} \fun g_{i} \fun h$; \footnote{The `supported' means for any points of $(\mathcal Y\times\mathcal Z)\setminus(\mathcal Y_i\times\mathcal Z_i)$, $\fun f_{i} = 0$ rather than undefined; so does $\fun g_{i}$.  Besides, $\fun f_i$ and $\fun g_i$ are not necessarily the limitation of $\fun f$ and $\fun g$ on $\mathcal{Y}_{i}\times \mathcal{Z}_{i}$ and $\mathcal{X}_{i}\times \mathcal{Z}_{i}$.}
    \item $\sum_{i=1}^{k}\mu_{0}(\mathcal{B}_{i}) \le O(d^{2})$ and $ k\le \mathrm{exp}({c_{\epsilon}d^{7}})$, where  $c_{\epsilon}$ is a  constant only depending on $\epsilon$; 
    \item For each $ i\in [k]$, let $\delta_{\fun f,i} := \E_{y\sim \mathcal{Y}_{i}, z\sim \mathcal{Z}_{i}}[\fun f_{i}(y,z)] $
 and $\delta_{\fun g,i} := \E_{x\sim \mathcal{X}_{i}, z\sim \mathcal{Z}_{i}}[\fun g_{i}(x,z)] $. \footnote{Notice that $\delta_{\fun f,i}$ and $\delta_{\fun g,i}$ is limited on $\mathcal{Y}_i\times \mathcal{Z}_i$ and $\mathcal{X}_i\times \mathcal{Z}_i$; this expectation does not necessarily equal to the expectation on the whole input space $\mathcal{Y}\times \mathcal{Z}$ or $\mathcal{X}\times \mathcal{Z}$.} Either \begin{enumerate}
     \item $\delta_{\fun f,i}\le 2^{-d}$ or $\delta_{\fun g,i}\le 2^{-d}$;
     \item or both $\fun f_i$ and $\fun g_i$ are $(\epsilon,d)$-regular
and satisfy $\epsilon$-min-degree on $\mathcal Y_i\times\mathcal Z_i$
and $\mathcal X_i\times\mathcal Z_i$, respectively.\footnote{Here we consider the grid norm and expectation $\delta_{\fun f,i}$ of $\fun f_i$ limited on the restricted domain  $\mathcal{Y}_i\times \mathcal{Z}_i$ with normalized Haar measure, rather than the whole domain $\mathcal{Y}\times \mathcal{Z}$;  $\fun g_i$ is the same.
}
 \end{enumerate}
\end{enumerate}
\end{theorem}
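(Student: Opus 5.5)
Because $\fun c$ is finite-type, the problem reduces to a finite one. Each cell of $\mathcal P_X,\mathcal P_Y,\mathcal P_Z$ has equal normalized Haar measure, and $\fun f,\fun g,\fun h$ are constant on products of cells. So we may treat $\mathcal X,\mathcal Y,\mathcal Z$ as the finite uniform spaces $[|\mathcal P_X|],[|\mathcal P_Y|],[|\mathcal P_Z|]$, and $\fun c$ as a genuine finite cylinder intersection $F(y,z)G(x,z)H(x,y)$. Every notion in the statement (unions of cells, grid norms, densities, min-degree) is preserved exactly under this identification. I then run the density-increment decomposition of Abboud, Fischer, Kelley, Lovett, and Meka~\cite{abboud2024new} for Boolean matrices, applied simultaneously to the two matrices $F$ (rows $z$, columns $y$) and $G$ (rows $z$, columns $x$), with $H$ carried along untouched.

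The decomposition proceeds as a recursive refinement. I maintain a collection of boxes $\mathcal B=\mathcal X'\times\mathcal Y'\times\mathcal Z'$ (unions of cells) together with Boolean $\fun f',\fun g'$ supported on the box, such that $\fun c=\sum \fun f'\fun g'\fun h$ holds throughout. A box is finished if $\delta_{\fun f'}\le2^{-d}$ or $\delta_{\fun g'}\le2^{-d}$, or if both $\fun f'$ and $\fun g'$ are $(\epsilon,d)$-regular and satisfy $\epsilon$-min-degree. Otherwise I act as follows.

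\begin{enumerate}
\item If some row $z$ (or column) of $\fun f'$ has degree below $(1-\epsilon)\delta$, I split it off into its own box. Within the remaining rows the density rises by a factor $(1+\Omega(\epsilon))$ unless the low-degree part is small, in which case it is handled by the standard AFKLM pruning.
\item If instead $\fun f'$ is not regular, i.e.\ $\|\fun f'\|_{U(2,d)}>(1+\epsilon)\delta$, I invoke the Kelley--Meka/KLM sifting lemma. It produces $\mathcal Z''\subseteq\mathcal Z'$ and $\mathcal Y''\subseteq \mathcal Y'$ of relative densities at least $2^{-O(d)}$ (a common-neighbourhood set from $d$ sampled rows) on which the density of $\fun f'$ is at least $(1+\epsilon/2)\delta$.
\item I partition the box accordingly into $\mathcal Z''$ versus its complement, and similarly for $\mathcal Y''$. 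The same is done for $\fun g'$, with $\mathcal X$ in place of $\mathcal Y$.
\end{enumerate}

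Finite-type is essential at this point. The sifted sets are unions of cells, because the functions are constant on cells, so all new boxes remain unions of cells. Restricting $\fun f',\fun g'$ to subboxes keeps them Boolean and constant on cell products, and the identity with $\fun h$ is preserved.

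The bookkeeping then follows AFKLM. The density of $\fun f'$ or $\fun g'$ increases by a $(1+\Omega(\epsilon))$ factor at each increment step and can never exceed $1$, so along any branch there are $O_\epsilon(d)$ increments before the density drops below $2^{-d}$ or exceeds $1$. Each increment multiplies the number of pieces by at most $\exp(\mathrm{poly}(d))$, which gives $k\le\exp(c_\epsilon d^7)$ once the sifting-lemma dependence $d'=O(d^{\text{poly}})$ is propagated. For the measure bound $\sum_i\mu_0(\mathcal B_i)=O(d^2)$, I use that each split of a box into a dense piece plus complements covers the parent's measure with overlap at most a constant. Equivalently, I organize the complement pieces as in AFKLM, so that every point of $\mathcal X\times\mathcal Y\times\mathcal Z$ lies in $O(d^2)$ boxes: $O(d)$ increment levels for $\fun f$ times $O(d)$ for $\fun g$. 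Note that $\mu_0(\mathcal B_i)$ is simply the product measure, since $\mu_0$ is a product of Haar measures.

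The main obstacle will be the simultaneous handling of $\fun f$ and $\fun g$, which share the coordinate $\mathcal Z$. Restricting $\mathcal Z$ to boost $\fun f$ can destroy the regularity or min-degree of $\fun g$. My first approach is a potential argument on the pair $(\log\delta_{\fun f},\log\delta_{\fun g})$: every step increases one coordinate by $\Omega(\epsilon)$ while the other decreases by at most a controlled amount. If the other density collapses below $2^{-d}$, the box is finished as sparse. This should bound the depth by $O_\epsilon(d)$ in each coordinate and is the source of the $d^2$ in the overlap bound.
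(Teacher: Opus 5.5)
Your reduction to finite uniform index sets and your invariant $\fun c=\sum\fun f'\fun g'\fun h$ match the paper. The gap is in the decomposition itself, which you rebuild rather than import, and it sits exactly at the point you flag as the main obstacle.

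\textbf{The potential argument fails.} When you sift $\fun f'$ to $\mathcal Z''\subseteq\mathcal Z'$, the set is a common neighbourhood determined by $\fun f'$ alone. Nothing then controls the density of $\fun g'$ on $\mathcal X'\times\mathcal Z''$; it may fall by a factor as large as, or larger than, the $(1+\epsilon/2)$ that $\fun f'$ gains. So your potential on $(\log\delta_{\fun f},\log\delta_{\fun g})$ has no reason to increase: steps boosting $\fun f'$ and steps boosting $\fun g'$ can cancel each other. The only remaining depth bound is the shrinking of the domain, i.e.\ the number of cells. That is not enough, because $k\le\exp(c_\epsilon d^7)$ must hold independently of $|\mathcal P_X|,|\mathcal P_Y|,|\mathcal P_Z|$.

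\textbf{The bookkeeping is incomplete.} Your count tracks only increment steps along a branch. The complement pieces ($\mathcal Z'\setminus\mathcal Z''$ and the low-degree parts) receive no increment, and their re-processing is deferred to ``standard AFKLM pruning'' without an argument. Likewise, $\sum_i\mu_0(\mathcal B_i)=O(d^2)$ is asserted by analogy (``$O(d)$ levels for $\fun f$ times $O(d)$ for $\fun g$'') rather than derived from the procedure you describe.

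\textbf{The missing ingredient.} This is the GoodCube lemma of Abboud et al.\ (Lemma~\ref{lem:afklm_goodcube}), which never trades one density against the other. It first fixes a cube $\mathcal I^*\times\mathcal K^*$ with $\E G[\mathcal I^*,\mathcal K^*]\ge\E G$, and then splits only $F\mathbf 1_{\mathcal K^*}=\sum_iF_i$. Each pair $F_i$, $G[\mathcal I_i,\mathcal K_i]$ is then sparse or simultaneously regular. The cost is dropping sets $\mathcal I^*\setminus\mathcal I_i$ of total volume at most $\eta(d+2)|\mathcal I^*||\mathcal J||\mathcal K^*|$.

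\textbf{How the recursion controls $k$ and the volume.}
\begin{itemize}
\item The $\fun{AB}$-decomposition recurses on the dropped pieces with a counter capped at $d$. The choice $\eta=1/[4(d+2)^2]$ gives geometric volume decay across counter levels.
\item It recurses on $G-G^*$ over the \emph{same} domain, not a complement box. Each such call removes $G$-mass at least $2^{-d}/M$, which bounds chain lengths independently of the cell count.
\item The potential $(d+2+\log_2\delta)V$ controls the overlap of successive cubes. Together with GoodCube's $(d+2)$ factor, this is the actual source of the bound $4(d+2)^2$.
\end{itemize}

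The paper uses this machinery as a black box. It checks only what the machinery does not state verbatim: the split~\eqref{eq:ab_triple_split} and the terminal steps preserve $F(y,z)G(x,z)$ at every triple, so multiplying by $H(x,y)$ gives the cylinder decomposition. If you cite Lemmas~\ref{lem:afklm_goodcube} and~\ref{lem:afklm_ab_bounds} and verify that pointwise identity, you do not need to rebuild the simultaneous regularization yourself.
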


The proof of Theorem~\ref{thm:regularity_decomposition} is in Appendix~\ref{app:proofs}. Since $\fun f_{i}$ and $\fun g_{i}$ are $0$ outside the support, let  $\fun h_i := \fun h \cdot \mathbf{1}_{\mathcal{X}_{i}}\cdot \mathbf{1}_{\mathcal{Y}_{i}}$ and $\fun c_i := \fun f_i \fun g_i \fun h_i$; then  $\fun c = \sum_{i=1}^{k} \fun c_{i}$. We further define $\delta_{\fun h,i} := \E_{x\sim \mathcal{X}_{i}, y\sim \mathcal{Y}_{i}}[\fun h_{i}(x,y)]$.

\medskip

With Theorem~\ref{thm:regularity_decomposition} in hand, we treat $\{\fun c_i\}_{i=1}^{k}$ in two cases: \textit{regular}  and  \textit{sparse}. Define the regular set and sparse set  as \[
\mathcal{I}_{r}:= \{ i\in [k]: \delta_{\fun f,i}> 2^{-d}, \delta_{\fun g,i}> 2^{-d}, \delta_{\fun h,i}> 2^{-\epsilon d} \},\quad \text{and}\;\; \mathcal{I}_{s}:=[k]\setminus \mathcal{I}_{r}.
\]
For the next two lemmas, fix a sufficiently small constant $\epsilon>0$ and a sufficiently large integer $d$.

\begin{lemma}[Regular case]
\label{lem:regulary_corruption_bound}
    There exist universal constants $\kappa>0$ and $\alpha>1$ such that for any $i\in \mathcal{I}_{r}$, we have \[
    \mu_{1}(\fun c_{i})\le \alpha
    \cdot \mu_{0}(\fun c_{i})+  \mathrm{exp}(-\kappa\sqrt{m}/d)
    \]
\end{lemma}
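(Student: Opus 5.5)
We will compare $\mu_1(\fun c_i)$ and $\mu_0(\fun c_i)$ by expressing both as expectations over the pair $(X,Y)$ and isolating the $Z$-dependence. Under $\mu_0$, $Z=(E_1,E_2)$ is independent of $(X,Y)$. Under $\mu_1$, $E_1$ is Haar and $E_2=(PE_1Q)^{-1}$, where $P=P(X,Y)$ and $Q=Q(X,Y)$. Since $\fun h_i$ depends only on $(X,Y)$, we write
\[
\mu_j(\fun c_i)=\E_{X,Y}\Bigl[\fun h_i(X,Y)\,\E_{Z\sim \mu_j\mid X,Y}\bigl[\fun f_i(Y,Z)\fun g_i(X,Z)\bigr]\Bigr].
\]
Define the kernel $\fun K(X,Y):=\E_{Z\sim\mu_1\mid X,Y}[\fun f_i\fun g_i]-\E_{Z\sim\nu^{\otimes2}}[\fun f_i\fun g_i]$. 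We then aim to show $\fun K$ is small on average against $\fun h_i$ relative to the main term. The main term under $\mu_0$ should be of size roughly $\delta_{\fun f,i}\delta_{\fun g,i}\delta_{\fun h,i}\mu_0(\mathcal B_i)$, and this lower bound is the source of the constant $\alpha$. We get it from the min-degree condition together with regularity: a regular function with $\epsilon$-min-degree is ``spread'', so its product with another spread function has density at least $(1-O(\epsilon))$ times the product of the densities. We will state this density statement as an auxiliary claim, following the counting lemma of \cite{kelley2024explicit}.

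For the error term we apply Cauchy--Schwarz/H\"older in the Kelley--Lovett--Meka style to remove $\fun h_i$ and $\fun g_i$ and reduce to the grid norm of $\fun f_i$. Concretely, apply H\"older with exponent $d$ in the variable $X$. This bounds $\E[\fun h_i\fun K]$ by $\|\fun h_i\|$ to the power $1-1/d$, times the $d$-th moment of an $X$-averaged correlation. Expanding that $d$-th power produces $d$ independent copies of the $Y$-variables. The $\mu_1$-correlation then becomes an expectation of products of $\fun f_i(Y,Z)$ over a grid whose $Z$-coordinates are coupled to $Y$ through the map $E_2=(PE_1Q)^{-1}$.

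The key input is the matrix-product estimate of \cite{arunachalam2023one}. It says that for products of independent Haar matrices in which each factor is held by a different party, the induced distribution on the relevant coordinates is close to Haar when tested against bounded functions of limited ``rank'': the error is $\exp(-\Omega(\sqrt m))$ per product. Here $P=A_1B_1C_1D_1$ interleaves the parties, so conditioned on one player's view the hidden factors still randomize $E_2$. We will use this to replace the coupled $\mu_1$-grid by an independent grid, with additive error $\exp(-\Omega(\sqrt m))$ per factor. Taking $d$ copies, with the $1/d$ exponent coming from H\"older, turns this into the $\exp(-\kappa\sqrt m/d)$ term.

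The remaining grid-norm expression is then bounded by $\|\fun f_i\|_{U(2,d)}^{2d}\le(1+\epsilon)^{2d}\delta_{\fun f,i}^{2d}$ using $(\epsilon,d)$-regularity, and $\fun g_i$ is handled symmetrically. Since $\delta_{\fun f,i},\delta_{\fun g,i}>2^{-d}$ and $\delta_{\fun h,i}>2^{-\epsilon d}$, the losses from H\"older normalization, of order $\delta^{-O(1/d)}$ and $(1+\epsilon)^{O(1)}$, are absorbed into the constant $\alpha$.

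The main obstacle is the H\"older/grid-norm step. Because of the non-product coupling, $Z$ depends on both $X$ and $Y$ under $\mu_1$, and the estimate of \cite{arunachalam2023one} must be invoked in a form uniform over the conditioning that the grid expansion creates. I would first try to handle $E_1$ by a change of variables, using Haar invariance to absorb $P$ into $E_1$, so that only $E_2$ carries the coupling, and then apply their estimate to $Q$ alone.
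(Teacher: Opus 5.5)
There is a genuine gap in the upper bound on $\mu_1(\fun c_i)$, and that is where the whole difficulty of the lemma lies. Your skeleton matches the paper's: lower-bound $\mu_0(\fun c_i)$ via the Kelley--Lovett--Meka moment lemma, upper-bound $\mu_1(\fun c_i)$ by grid norms, then compare. For the lower bound you need the $\ell$-th moment form with $\ell\approx\epsilon d$ and H\"older against $\fun h_i$, using $\delta_{\fun h,i}>2^{-\epsilon d}$.

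\textbf{The version of the Arunachalam--Girish--Lifshitz estimate you use is false.} Their Lemma~5.2 is a \emph{relative} bound. For Boolean $\fun u(A,C)$ and $\fun w(B,D)$ with densities at least $e^{-\kappa\sqrt m}$, the coupled average is within $\E[\fun u]\E[\fun w]/30$ of $\E[\fun u]\E[\fun w]$. An additive $\exp(-\Omega(\sqrt m))$ closeness against product test functions would be a discrepancy bound. Discrepancy lower-bounds quantum communication, so this would contradict the $O(\log m)$-qubit protocol; the paper stresses exactly this point. That is why the lemma is a corruption bound with $\alpha>1$.

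Consequently $\alpha$ does not come from H\"older-normalization or regularity losses. It comes from the multiplicative factor $\lambda_0=31/30$, together with a factor $2$ from a level-set argument. Your step ``replace the coupled grid by an independent grid with additive error per factor'' has no valid justification.

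\textbf{Two further problems.} As you describe it, the expansion produces $d$ copies of $Y$. That heads toward the transposed grid norm, but the decomposition only supplies regularity of $\|\fun f_i\|_{U(2,d)}$, i.e.\ two copies of $Y$ and $d$ copies of $Z$. Your proposed handling of the coupling also fails. Substituting $E_1'=PE_1$ gives $\fun f(Y,Z)=\fun f(Y,P^{-1}E_1',(E_1'Q)^{-1})$, which now depends on $P$ and hence on $X$. Moreover $E_2=(PE_1Q)^{-1}$ involves both $P$ and $Q$, so the coupling cannot be pushed onto $Q$ alone.

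\textbf{The missing idea is to condition on $L=(P,Q)$.} Given $L$, $Z$ is independent of $(X,Y)$. The pair $(X,Y)$ has the law ``$ABCD=L$'' in $\sugroup(m)^2$, and $Z$ has the law ``$PE_1QE_2=I_m$''. The argument then runs as follows.
\begin{enumerate}
\item Apply H\"older in $(X,Y)$ jointly with exponent $d$, which strips $\fun h_i$.
\item Expand $\|\E_{\mu_1}[\fun f\fun g\mid X,Y]\|_d^d$ using $d$ conditionally independent copies $Z^{(1)},\ldots,Z^{(d)}$ given $L$.
\item For fixed $(L,Z^{[d]})$, $\prod_j\fun g(X,Z^{(j)})$ is a single Boolean function of $(A,C)$ and $\prod_j\fun f(Y,Z^{(j)})$ is one of $(B,D)$. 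So the AGL bound, applied once per coordinate with the low-density case giving the additive term, decouples them a single time with loss $\lambda_0^2$, independent of $d$.
\item Cauchy--Schwarz separates $\fun f$ from $\fun g$. One then has $\E_{L,Z^{[d]}}[(\E_Y\prod_j\fun f(Y,Z^{(j)}))^2]=\E_{Y,Y',L}[(\E[\fun f(Y,Z)\fun f(Y',Z)\mid L])^d]$.
\item A second application of AGL to $(P,E_1,Q,E_2)$, on the level set $\{\E[\fun u(Z)\mid L]>2\lambda_0\E\fun u\}$, bounds this moment by $(2\lambda_0)^d\|\fun f\|_{U(2,d)}^{2d}+2e^{-\kappa_0\sqrt m}$.
\item Converting global grid norms to local ones on $\mathcal B_i$ gives $\mu_1(\fun c_i)\le\lambda^{1+2/d}(1+\epsilon)^2\mu_0(\mathcal B_i)\delta_{\fun f,i}\delta_{\fun g,i}\delta_{\fun h,i}^{1-1/d}+e^{-\kappa\sqrt m/d}$, with $\lambda=2\lambda_0$.
\end{enumerate}
Comparing with your $\mu_0$ lower bound then yields $\alpha=4\lambda^2$.
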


\begin{lemma}[Sparse case]\label{lem:sparse_currption_bound}
There exist  universal constants $C,\kappa>0$ such that
\[
\sum_{i\in\mathcal I_s}\mu_1(\fun c_i)
\le Cd^2 2^{-\epsilon d/2}+|\mathcal I_s|e^{-\kappa\sqrt m/2}.
\]
\end{lemma}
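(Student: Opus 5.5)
We split $\mathcal I_s$ according to which density condition fails. Let $\mathcal I_s^{(1)}$ be the set of indices with $\delta_{\fun f,i}\le 2^{-d}$ or $\delta_{\fun g,i}\le2^{-d}$. Let $\mathcal I_s^{(2)}$ be the remaining indices, for which $\delta_{\fun h,i}\le 2^{-\epsilon d}$ while $\fun f_i,\fun g_i$ are dense. By Theorem~\ref{thm:regularity_decomposition}(3), the indices in $\mathcal I_s^{(2)}$ fall in case (b), so $\fun f_i,\fun g_i$ are $(\epsilon,d)$-regular with $\epsilon$-min-degree. For every $i\in\mathcal I_s$ the goal is a bound of the form
\[
\mu_1(\fun c_i)\le C\,2^{-\epsilon d/2}\mu_0(\mathcal B_i)+e^{-\kappa\sqrt m/2}.
\]
Summing this over $i$ and using $\sum_i\mu_0(\mathcal B_i)=O(d^2)$ from item~2 gives the stated bound.

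\textbf{Key tool.} Each $\fun c_i$ is at most the indicator of a single two-variable factor times the box indicator $\mathbf 1_{\mathcal B_i}$. Under $\mu_1$, every pair of the three blocks is independent Haar (Claim~\ref{claim:haar_measure_marginal}). Hence, for instance,
\[
\mu_1(\fun f_i\mathbf 1_{\mathcal X_i})=\E_{(Y,Z)}\bigl[\fun f_i(Y,Z)\Pr[X\in\mathcal X_i\mid Y,Z]\bigr].
\]
So we must control the conditional law of $X$ given $(Y,Z)$ under $\mu_1$, where $X$ is constrained by $E_2=(A_1B_1C_1D_1E_1A_2B_2C_2D_2)^{-1}$. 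Conditioned on $(Y,Z)$, this law is Haar on $\sugroup(m)^4$ restricted to a codimension-$\dim\sugroup(m)$ "slice", i.e.\ a fixed product of the $A$'s and $C$'s interleaved with fixed matrices. We would invoke the matrix-product estimate of Arunachalam, Girish, and Lifshitz: the slice measure of any set $\mathcal X_i$ satisfies
\[
\Pr\le \mu(\mathcal X_i)^{1-o(1)}\cdot O(1)+e^{-\kappa\sqrt m},
\]
possibly after averaging over $(Y,Z)$. In this form it is a hypercontractive bound $\Pr\lesssim \|\mathbf 1_{\mathcal X_i}\|_{2}\cdot\|\text{density}\|_2$ plus a small-level tail.

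\textbf{Case $\mathcal I_s^{(1)}$.} Say $\delta_{\fun f,i}\le 2^{-d}$. We bound $\mu_1(\fun c_i)\le\mu_1(\fun f_i\cdot\mathbf 1_{\mathcal X_i})$. Applying Cauchy--Schwarz (or H\"older) together with the level-$d$ AGL estimate to the conditional density of $X$, we would get
\[
\mu_1(\fun f_i\mathbf 1_{\mathcal X_i})\lesssim \sqrt{\delta_{\fun f,i}}\;\mu_0(\mathcal B_i)+e^{-\kappa\sqrt m/2},
\]
which is at most $2^{-d/2}\mu_0(\mathcal B_i)+e^{-\kappa\sqrt m/2}$. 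The case $\delta_{\fun g,i}\le 2^{-d}$ is symmetric, using the pairwise independence of $(Y,Z)$ and $(X,Z)$ instead.

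\textbf{Case $\mathcal I_s^{(2)}$.} Here we bound $\mu_1(\fun c_i)\le\mu_1(\fun h_i\mathbf 1_{\mathcal Z_i})$. Since $(X,Y)$ is Haar and $Z$ given $(X,Y)$ is close to uniform on $\sugroup(m)^2$ up to the AGL error, we expect
\[
\mu_1(\fun c_i)\lesssim 2^{-\epsilon d/2}\mu_0(\mathcal B_i)+e^{-\kappa\sqrt m/2}.
\]
The square-root loss again comes from Cauchy--Schwarz. The normalization by $\mu_0(\mathcal B_i)=\mu(\mathcal X_i)\mu(\mathcal Y_i)\mu(\mathcal Z_i)$ needs care, because the box sides can be small. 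To handle this, we would condition on the box throughout and use that $\fun h_i$ vanishes off $\mathcal X_i\times\mathcal Y_i$.

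\textbf{Main obstacle.} The main obstacle is the normalization just described. The AGL estimate naturally yields additive errors, whereas here the error must be multiplicative in $\mu_0(\mathcal B_i)$ up to an additive $e^{-\kappa\sqrt m/2}$ per index, which is allowed since the statement carries a $|\mathcal I_s|$ factor. Our first attempt would be a H\"older bound against the $L^2$ norm of the conditional density of one block given the other two, restricted to the box; its deviation from $1$ is controlled by AGL's bound on the Fourier tail of products of Haar matrices.
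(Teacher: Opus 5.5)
Your outer bookkeeping matches the paper: a per-index bound $\mu_1(\fun c_i)\le C2^{-\epsilon d/2}\mu_0(\mathcal B_i)+e^{-\kappa\sqrt m/2}$, summed using $\sum_i\mu_0(\mathcal B_i)=O(d^2)$. But that per-index bound is the entire content of the lemma, and your proposal does not prove it. Every key step is phrased as ``we would invoke'' or ``we expect,'' and the mechanism you sketch cannot work.

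Under $\mu_1$, the conditional law of $X$ given $(Y,Z)$ is supported on the slice $\{X: A_1B_1C_1D_1E_1A_2B_2C_2D_2E_2=I_m\}$, which is $\nu^{\otimes4}$-null; indeed $\mu_1$ is singular with respect to $\mu_0$. So there is no conditional density whose $L^2$ norm you could control. Nor can a bound $\Pr_{\mu_1}[X\in\mathcal X_i\mid Y,Z]\lesssim\nu^{\otimes4}(\mathcal X_i)^{1-o(1)}+e^{-\kappa\sqrt m}$ hold pointwise. A thin neighborhood of the slice at one $(Y_0,Z_0)$ has tiny Haar measure, yet conditional probability $1$ for all $(Y,Z)$ near $(Y_0,Z_0)$. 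The same objection applies to ``$Z$ given $(X,Y)$ is close to uniform'' in your second case.

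The AGL estimate is only available in averaged bilinear form, $\E[\fun a\,\fun b]\le\lambda\E[\fun a]\E[\fun b]+e^{-\kappa\sqrt m}$, for $[0,1]$-valued functions of complementary blocks. Once you use it in that form, the ``normalization obstacle'' you flag disappears. The product of expectations is automatically $\mu_0(\mathcal B_i)$ times a density, and the additive error is exactly what the $|\mathcal I_s|e^{-\kappa\sqrt m/2}$ term pays for.

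The paper needs no case split. Corollary~\ref{lem:local-grid-bound} with $p=2$ (H\"older plus Theorem~\ref{thm:Grid_norm}, with the support-rescaling identities producing $\mu_0(\mathcal B_i)$) is combined with $\|\fun u\|_{U(2,2)}^4\le(\E\fun u)^2$ for Boolean $\fun u$. This gives $\mu_1(\fun c_i)\le\lambda^2\mu_0(\mathcal B_i)\sqrt{\delta_{\fun f,i}\delta_{\fun g,i}\delta_{\fun h,i}}+e^{-\kappa\sqrt m/2}$, and for $i\in\mathcal I_s$ one of the three densities is at most $2^{-\epsilon d}$.

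Your factor-dropping route can also be completed, but via Lemma~\ref{lem:AGL_weighted_products} rather than densities. Condition on $L=(P,Q)$; given $L$, $Z$ is independent of $(X,Y)$.
\begin{itemize}
\item If $\delta_{\fun f,i}\le2^{-d}$, apply part~(2) to $\fun a=\mathbf 1_{\mathcal X_i}$ and $\fun b(Y)=\E[\fun f_i(Y,Z)\mid L]$, then average over $L$. This gives $\mu_1(\fun f_i\mathbf 1_{\mathcal X_i})\le\lambda_0^2\mu_0(\mathcal B_i)\delta_{\fun f,i}+(1+\lambda_0)e^{-\kappa_0\sqrt m}$. The case $\delta_{\fun g,i}\le2^{-d}$ is symmetric.
\item If $\delta_{\fun h,i}\le2^{-\epsilon d}$, apply part~(1) to $(P,E_1,Q,E_2)$ with $T=I_m$, $\fun a(P,Q)=\E[\fun h_i(X,Y)\mid L]$ and $\fun b=\mathbf 1_{\mathcal Z_i}$.
\end{itemize}
This route even avoids the square-root loss. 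But it is this bilinear step, not a density or slice estimate, that your proposal is missing.
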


We prove Lemma~\ref{lem:regulary_corruption_bound} and Lemma~\ref{lem:sparse_currption_bound} in Section~\ref{sec:regular_case} and Section~\ref{sec:sparse_case} respectively. Now combining these two lemmas we are ready to  prove   Lemma~\ref{lem:corruption_bound}.

\begin{proof}[Proof of Lemma~\ref{lem:corruption_bound}]
Fix $\epsilon=10^{-4}$ and first suppose $\fun c$ is  finite-type. Theorem~\ref{thm:regularity_decomposition} gives Boolean functions $\fun c_i$ with $\fun c=\sum_{i=1}^k\fun c_i$ and $k\le e^{c_\epsilon d^7}$. Their supports are disjoint because $\fun c$ is Boolean. Lemma~\ref{lem:regulary_corruption_bound} compares each regular term with its $\mu_0$-mass, while Lemma~\ref{lem:sparse_currption_bound} bounds the total sparse mass. Taking a common smaller $\kappa>0$ and $C\ge1$, we obtain
\begin{align*}
\mu_1(\fun c)
&=\sum_{i\in\mathcal I_r}\mu_1(\fun c_i)+\sum_{i\in\mathcal I_s}\mu_1(\fun c_i)\\
&\le\alpha\sum_{i\in\mathcal I_r}\mu_0(\fun c_i)+Cd^2 2^{-\epsilon d/2}
+|\mathcal I_r|e^{-\kappa\sqrt m/d}+|\mathcal I_s|e^{-\kappa\sqrt m/2}\\
&\le\alpha\mu_0(\fun c)+Cd^2 2^{-\epsilon d/2}
+\exp(c_\epsilon d^7-\kappa\sqrt m/d).
\end{align*}
Choose a fixed $\gamma>0$ with $c_\epsilon\gamma^8\le\kappa/2$, and put $d=\lfloor\gamma m^{1/16}\rfloor$. For sufficiently large $m$, $d$ satisfies the hypotheses of both lemmas, and
\[
c_\epsilon d^7\le\frac{\kappa\sqrt m}{2d},\qquad
Cd^2 2^{-\epsilon d/2}+\exp\!\left(-\frac{\kappa\sqrt m}{2d}\right)
\le\exp(-\beta m^{1/16})
\]
for a universal $\beta>0$.

For any measurable $\fun c$, Claim~\ref{claim:step_function_arppox} supplies finite-type $\fun c_r$ with $|\mu_j(\fun c-\fun c_r)|\le1/r$ for $j=0,1$. Applying the bound just proved to $\fun c_r$ gives
\[
\mu_1(\fun c)\le\mu_1(\fun c_r)+\frac1r
\le\alpha\mu_0(\fun c)+\frac{\alpha+1}{r}+e^{-\beta m^{1/16}}.
\]
Let $r\to\infty$ and we complete the proof. 
\end{proof}

\subsubsection{Corruption bound on regular cylinder intersections.}
\label{sec:regular_case}

In this section, we prove Lemma~\ref{lem:regulary_corruption_bound}, i.e., the corruption bound of the regular part. Let $\mathcal{X}=\mathcal{Y} = \sugroup(m)^{4}$ and $\mathcal{Z} = \sugroup(m)^{2}$. Recall that $\fun f:\mathcal{Y}\times \mathcal{Z}\to \{0,1\}$, $\fun g:\mathcal{X}\times \mathcal{Z}\to \{0,1\}$ and $\fun h:\mathcal{X}\times \mathcal{Y}\to \{0,1\}$; $\nu$ is the Haar measure of $\sugroup(m)$;  by Claim~\ref{claim:haar_measure_marginal}, for both  $\mu_0$ and  $\mu_1$ we have  \[
(X,Y) \sim \nu^{\otimes 4}\times \nu^{\otimes 4},
\qquad
(X,Z) \sim \nu^{\otimes 4}\times \nu^{\otimes 2},
\qquad
(Y,Z) \sim \nu^{\otimes 4}\times \nu^{\otimes 2}.
\]

We first bound $\mu_1(\fun c_i)$ by the grid norms of $\fun f_i,\fun g_i$; this bound will also handle the sparse terms. For regular terms, a moment bound then gives a matching lower bound on $\mu_0(\fun c_i)$.

\medskip
At first, 
we derive the following grid-norm bound from
Arunachalam, Girish, and Lifshitz~\cite[Lemma~5.2]{arunachalam2023one} using a moment argument and
Cauchy--Schwarz, which might be of independent interest.
\begin{theorem}[Grid norm bound] 
\label{thm:Grid_norm}
Let $\lambda>1$ and $\kappa>0$ be universal constants.
For any $\fun f:\mathcal{Y}\times \mathcal{Z}\to \{0,1\}$, $\fun g:\mathcal{X}\times \mathcal{Z}\to \{0,1\}$  and integer $d\ge 2$,  we have \[
\big \|\E_{\mu_1} [\fun f(Y,Z)\cdot \fun g(X,Z)\mid X,Y] \big \|_{d}\le \lambda^{1+2/d} \|\fun f\|_{U(2,d)}\|\fun g\|_{U(2,d)} + \mathrm{exp}(-\kappa\sqrt{m}/d).
\]
    
\end{theorem}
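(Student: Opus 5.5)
The plan is to expand the $d$-th moment, compare the law of the resulting $(d+2)$-tuple of inputs with an idealised product law in which the $Z$-copies are independent Haar, and finish with a single Cauchy--Schwarz that produces the two grid norms. The constants in the statement suggest the comparison costs a multiplicative factor $\lambda^{d+2}$ and an additive error $\exp(-\kappa\sqrt m)$, since taking $d$-th roots turns these into $\lambda^{1+2/d}$ and $\exp(-\kappa\sqrt m/d)$.

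\textbf{Step 1 (expansion).} Under $\mu_1$, conditioned on $(X,Y)$, we have $Z=(E_1,(PE_1Q)^{-1})$ with $E_1\sim\nu$, where $P=A_1B_1C_1D_1$ and $Q=A_2B_2C_2D_2$. Take $d$ independent copies $E^{(1)},\dots,E^{(d)}\sim\nu$ and write $Z_j=(E^{(j)},(PE^{(j)}Q)^{-1})$. Since the integrand is nonnegative,
\[
\bigl\|\E_{\mu_1}[\fun f\fun g\mid X,Y]\bigr\|_d^d=\E\Bigl[\prod_{j=1}^d\fun f(Y,Z_j)\,\fun g(X,Z_j)\Bigr]=:\E[F(Y,\vec Z)\,G(X,\vec Z)],
\]
where $\vec Z=(Z_1,\dots,Z_d)$, and $X,Y$ are independent Haar.

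\textbf{Step 2 (idealised distribution and Cauchy--Schwarz).} Let $\pi$ be the law under which $X\sim\nu^{\otimes4}$, $Y\sim\nu^{\otimes4}$ and $Z_1,\dots,Z_d\sim\nu^{\otimes2}$ are all mutually independent. Under $\pi$, conditioning on $\vec Z$ and applying Cauchy--Schwarz gives
\[
\E_\pi[FG]=\E_{\vec Z}\bigl[\E_Y F\cdot\E_X G\bigr]\le\Bigl(\E_{\vec Z}(\E_YF)^2\Bigr)^{1/2}\Bigl(\E_{\vec Z}(\E_XG)^2\Bigr)^{1/2}.
\]
Expanding the square, $\E_{\vec Z}(\E_YF)^2=\E_{y,y'}\bigl[(\E_z\fun f(y,z)\fun f(y',z))^d\bigr]=\|\fun f\|_{U(2,d)}^{2d}$, and similarly for $G$. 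Hence
\[
\E_\pi[FG]\le\|\fun f\|_{U(2,d)}^d\,\|\fun g\|_{U(2,d)}^d.
\]

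\textbf{Step 3 (comparison; the main obstacle).} It remains to show that for all Boolean $F(Y,\vec Z)$ and $G(X,\vec Z)$ of this form,
\[
\E_{\mu_1\text{-coupling}}[FG]\le\lambda^{d+2}\,\E_\pi[FG]+e^{-\kappa\sqrt m}.
\]
This is where \cite[Lemma~5.2]{arunachalam2023one} enters. Each second coordinate $(PE^{(j)}Q)^{-1}$ is an interleaved product $A_1B_1C_1D_1E^{(j)}A_2B_2C_2D_2$ in which every $X$-factor is separated from the next by an independent Haar $Y$-factor, and vice versa. The AGL estimate controls the low-degree (Schur/representation) moments of such alternating products against arbitrary bounded functions of each player's factors. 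I would apply it to the joint density of $(X,Y,\vec Z)$ with respect to $\pi$, expanded in representations of $\sugroup(m)$. The idea is to split this expansion at degree $\approx\sqrt m$:
\begin{itemize}[nosep]
\item Low-degree terms: each is damped by a factor from Lemma~5.2, and summing them with multiplicity across the $d+2$ blocks should give the factor $\lambda^{d+2}$. This is done by Hölder across the blocks, each block contributing a constant factor.
\item High-degree terms: these should give the $e^{-\kappa\sqrt m}$ error via the concentration part of Lemma~5.2.
\end{itemize}
The difficulty is that $F$ depends on $Y$ and $G$ on $X$ jointly with the same $\vec Z$. So the test is not a product test, and I expect to need one auxiliary Cauchy--Schwarz over $X$ (or $Y$) to reduce it to product-type tests before Lemma~5.2 applies.

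\textbf{Step 4 (conclusion).} Combining Steps 1--3,
\[
\bigl\|\E_{\mu_1}[\fun f\fun g\mid X,Y]\bigr\|_d^d\le\lambda^{d+2}\|\fun f\|_{U(2,d)}^d\|\fun g\|_{U(2,d)}^d+e^{-\kappa\sqrt m}.
\]
Taking $d$-th roots and using $(a+b)^{1/d}\le a^{1/d}+b^{1/d}$ yields exactly
\[
\bigl\|\E_{\mu_1}[\fun f\fun g\mid X,Y]\bigr\|_d\le\lambda^{1+2/d}\|\fun f\|_{U(2,d)}\|\fun g\|_{U(2,d)}+e^{-\kappa\sqrt m/d}.
\]
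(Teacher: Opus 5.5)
\textbf{There is a genuine gap in Step 3.} Your Steps 1, 2 and 4 are sound. The inequality you target in Step 3 is in fact true, with additive error $O(e^{-\kappa_0\sqrt m})$; that constant then needs a separate small-$m$ argument before you can write the error as $e^{-\kappa\sqrt m/d}$. But Step 3 is the entire content of the theorem, and the route you sketch for it does not work.

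\textbf{Why the density-expansion plan fails.} The coupled law of $(X,Y,\vec Z)$ has no density with respect to $\pi$. For $d=1$ it lives on the $\pi$-null set $\{W=I_m\}$. For $d\ge2$, write $S_j=(PE^{(j)}Q)^{-1}$; then $S_1^{-1}S_2=PE^{(1)}(E^{(2)})^{-1}P^{-1}$. So $S_1^{-1}S_2$ is always conjugate to $E^{(1)}(E^{(2)})^{-1}$, which is again a $\pi$-null event. Consequently:
\begin{itemize}
\item there is no representation expansion whose high-degree tail you can discard;
\item the comparison is false for arbitrary Boolean $F(Y,\vec Z)$, $G(X,\vec Z)$ (take both to be the indicator of this conjugacy relation), so any proof must use the product form $\prod_j\fun f(Y,Z_j)$ essentially;
\item the AGL input actually available is the black-box Theorem~\ref{thm:AGL_product} for Boolean tests of density at least $e^{-\kappa_0\sqrt m}$, not a statement about representation coefficients, and your plan never handles that density threshold.
\end{itemize}

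\textbf{The missing idea is to condition on $L=(P,Q)$.} This makes $(X,Y)$ and $\vec Z$ conditionally independent. With $L$ and $\vec Z$ fixed, $G(X,\vec Z)F(Y,\vec Z)$ is a $[0,1]$-valued function of $(A,C)$ times one of $(B,D)$, where $X=(A,C)$ and $Y=(B,D)$ are conditioned on $ABCD=L$ coordinatewise. That is precisely the product test you thought was unavailable. Apply Theorem~\ref{thm:AGL_product} once per coordinate of $L$, using the trivial bound $\min\{\E\fun a,\E\fun b\}$ below the density threshold and a layer-cake extension to $[0,1]$-valued functions; this is Lemma~\ref{lem:AGL_weighted_products}. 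It gives
\[
\E[FG]\le\lambda_0^2\,\E_{L,\vec Z}\bigl[\E_YF\cdot\E_XG\bigr]+(1+\lambda_0)e^{-\kappa_0\sqrt m}.
\]

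\textbf{Where the product form enters.} The remaining dependence is between $L$ and $\vec Z$. Take fresh independent $X',Y'$ and set $\fun u(Z)=\fun f(Y',Z)\fun g(X',Z)$. The expectation on the right then equals $\E_{X',Y'}\E_L[\fun r(L)^d]$, where $\fun r(L)=\E[\fun u(Z)\mid L]$. Apply the same AGL estimate to the alternating product $PE_1QE_2=I_m$ with tests $\mathbf 1[\fun r>2\lambda_0\E\fun u]$ and $\fun u$. This gives $\E_L[\fun r^d]\le(2\lambda_0\E\fun u)^d+2e^{-\kappa_0\sqrt m}$. Averaging over $X',Y'$ yields your Step 3 with $\lambda=2\lambda_0$, and your Step 2 then finishes the proof.

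\textbf{Comparison with the paper.} The paper performs the same two decouplings, but applies Cauchy--Schwarz before the $L$--$\vec Z$ comparison and bounds each second moment separately. Your ordering (compare first, then Cauchy--Schwarz under $\pi$) works equally well once Step 3 is proved this way, and it avoids the square-root loss in the error term.
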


The proof of Theorem~\ref{thm:Grid_norm} is in Appendix~\ref{app:proof-grid-norm}.

We next apply this bound to each $\fun c_i=\fun f_i\fun g_i\fun h_i$. For an integer $p\ge2$, define the local grid norms using normalized Haar measure:
\[
\|\fun f_i\|_{U(2,p),i}
:=\left(\E_{Y,Y'\sim\mathcal Y_i}\!\left[\left(\E_{Z\sim\mathcal Z_i}[\fun f_i(Y,Z)\fun f_i(Y',Z)]\right)^p\right]\right)^{1/(2p)},
\]
\[
\|\fun g_i\|_{U(2,p),i}
:=\left(\E_{X,X'\sim\mathcal X_i}\!\left[\left(\E_{Z\sim\mathcal Z_i}[\fun g_i(X,Z)\fun g_i(X',Z)]\right)^p\right]\right)^{1/(2p)}.
\]
We will use the following corollary adapted  from Theorem~\ref{thm:Grid_norm}
by H\"older's inequality and normalization on $\mathcal B_i$.
\begin{corollary}[Grid norm bound on $\mathcal B_i$]\label{lem:local-grid-bound}
With constants  $\lambda>1 $ and $\kappa>0$ from Theorem~\ref{thm:Grid_norm}, for every $i\in[k]$ and integer $p\ge2$,
\[
\mu_1(\fun c_i)
\le\lambda^{1+2/p}\mu_0(\mathcal B_i)\delta_{\fun h,i}^{1-1/p}
\|\fun f_i\|_{U(2,p),i}\|\fun g_i\|_{U(2,p),i}
+e^{-\kappa\sqrt m/p}.
\]
\end{corollary}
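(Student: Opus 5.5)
The plan is to condition on $(X,Y)$, apply H\"older with exponents $p$ and $p'=p/(p-1)$, invoke Theorem~\ref{thm:Grid_norm} with $d=p$, and then convert the global grid norms into the local ones by rescaling on the supports.

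\textbf{Step 1 (conditioning).} By Claim~\ref{claim:haar_measure_marginal}, under $\mu_1$ the pair $(X,Y)$ is distributed as $\nu^{\otimes4}\times\nu^{\otimes4}$. Since $\fun h_i$ depends only on $(X,Y)$,
\[
\mu_1(\fun c_i)=\E_{X,Y}\bigl[\fun h_i(X,Y)\,\fun F_i(X,Y)\bigr],\qquad \fun F_i(X,Y):=\E_{\mu_1}[\fun f_i(Y,Z)\fun g_i(X,Z)\mid X,Y].
\]

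\textbf{Step 2 (H\"older).} This gives $\mu_1(\fun c_i)\le\|\fun h_i\|_{p'}\|\fun F_i\|_p$, with norms taken over the full product Haar space. Because $\fun h_i$ is Boolean and supported on $\mathcal X_i\times\mathcal Y_i$, and $\delta_{\fun h,i}$ is its normalized mean there,
\[
\|\fun h_i\|_{p'}=\bigl(\nu^{\otimes4}(\mathcal X_i)\,\nu^{\otimes4}(\mathcal Y_i)\,\delta_{\fun h,i}\bigr)^{1-1/p}.
\]

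\textbf{Step 3 (grid-norm bound).} Apply Theorem~\ref{thm:Grid_norm} to $\fun f_i,\fun g_i$ with $d=p\ge2$:
\[
\|\fun F_i\|_p\le\lambda^{1+2/p}\|\fun f_i\|_{U(2,p)}\|\fun g_i\|_{U(2,p)}+e^{-\kappa\sqrt m/p}.
\]
The error term is multiplied by $\|\fun h_i\|_{p'}\le1$, so it survives unchanged.

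\textbf{Step 4 (global to local norms).} This is the only real bookkeeping and where care is needed. Since $\fun f_i$ vanishes off $\mathcal Y_i\times\mathcal Z_i$, the pair $(Y,Y')$ contributes only when both lie in $\mathcal Y_i$, and the inner $Z$-average picks up a factor $\nu^{\otimes2}(\mathcal Z_i)$ relative to the normalized average. Hence
\[
\|\fun f_i\|_{U(2,p)}^{2p}=\nu^{\otimes4}(\mathcal Y_i)^2\,\nu^{\otimes2}(\mathcal Z_i)^p\,\|\fun f_i\|_{U(2,p),i}^{2p},
\]
that is, $\|\fun f_i\|_{U(2,p)}=\nu^{\otimes4}(\mathcal Y_i)^{1/p}\,\nu^{\otimes2}(\mathcal Z_i)^{1/2}\,\|\fun f_i\|_{U(2,p),i}$. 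The same computation gives the analogous identity for $\fun g_i$ with $\mathcal X_i$ in place of $\mathcal Y_i$.

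\textbf{Step 5 (combining).} Multiplying the factors from Steps 2 and 4, the powers of the measures combine as
\[
\nu^{\otimes4}(\mathcal X_i)^{(1-1/p)+1/p}\;\nu^{\otimes4}(\mathcal Y_i)^{(1-1/p)+1/p}\;\nu^{\otimes2}(\mathcal Z_i)^{1/2+1/2}=\nu^{\otimes4}(\mathcal X_i)\,\nu^{\otimes4}(\mathcal Y_i)\,\nu^{\otimes2}(\mathcal Z_i).
\]
Under $\mu_0$ all coordinates are independent Haar, so this product equals $\mu_0(\mathcal B_i)$. What remains is exactly $\lambda^{1+2/p}\mu_0(\mathcal B_i)\,\delta_{\fun h,i}^{1-1/p}\,\|\fun f_i\|_{U(2,p),i}\|\fun g_i\|_{U(2,p),i}+e^{-\kappa\sqrt m/p}$, which is the claimed bound.

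I expect no genuine obstacle here. The only step requiring attention is the exponent bookkeeping in Steps 4 and 5, which must make the measure factors collapse exactly to $\mu_0(\mathcal B_i)$.
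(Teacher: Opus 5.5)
Your proposal is correct and follows the paper's own proof essentially step for step. You condition on $(X,Y)$, apply H\"older with exponents $p$ and $p/(p-1)$, invoke Theorem~\ref{thm:Grid_norm} with $d=p$ (using $\|\fun h_i\|_{p/(p-1)}\le 1$ on the error term), and use the support-rescaling identities, whose measure factors collapse exactly to $\mu_0(\mathcal B_i)$.
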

\begin{proof}[Proof of Corollary~\ref{lem:local-grid-bound}]
By Claim~\ref{claim:haar_measure_marginal}, $(X,Y)$ has product Haar measure under $\mu_1$. Conditioning on $(X,Y)$ and applying H\"older's inequality gives
\[
\mu_1(\fun c_i)
=\E_{X,Y}[\fun h_i\E_{\mu_1}[\fun f_i\fun g_i\mid X,Y]]
\le\|\fun h_i\|_{p/(p-1)}
\|\E_{\mu_1}[\fun f_i\fun g_i\mid X,Y]\|_p.
\]
Theorem~\ref{thm:Grid_norm} applies to the Boolean functions $\fun f_i,\fun g_i$ with moment $p$; using $\|\fun h_i\|_{p/(p-1)}\le1$ gives
\[
\mu_1(\fun c_i)
\le\lambda^{1+2/p}\|\fun h_i\|_{p/(p-1)}
\|\fun f_i\|_{U(2,p)}\|\fun g_i\|_{U(2,p)}
+e^{-\kappa\sqrt m/p}.
\]
All norms here use Haar measure on the full input spaces. The support restrictions give
\begin{align*}
\|\fun f_i\|_{U(2,p)}
&=\nu^{\otimes4}(\mathcal Y_i)^{1/p}\nu^{\otimes2}(\mathcal Z_i)^{1/2}\|\fun f_i\|_{U(2,p),i},\\
\|\fun g_i\|_{U(2,p)}
&=\nu^{\otimes4}(\mathcal X_i)^{1/p}\nu^{\otimes2}(\mathcal Z_i)^{1/2}\|\fun g_i\|_{U(2,p),i},\\
\|\fun h_i\|_{p/(p-1)}
&=[\nu^{\otimes4}(\mathcal X_i)\nu^{\otimes4}(\mathcal Y_i)\delta_{\fun h,i}]^{1-1/p}.
\end{align*}
Multiplying these identities yields
\[
\|\fun h_i\|_{p/(p-1)}\|\fun f_i\|_{U(2,p)}\|\fun g_i\|_{U(2,p)}
=\mu_0(\mathcal B_i)\delta_{\fun h,i}^{1-1/p}
\|\fun f_i\|_{U(2,p),i}\|\fun g_i\|_{U(2,p),i},
\]
which proves the bound.
\end{proof}

\medskip
The next theorem bounds the relative deviation of
$\E_{Z\sim\mathcal Z_i}[\fun f_i(Y,Z)\fun g_i(X,Z)]$
from $\delta_{\fun f,i}\delta_{\fun g,i}$,
using normalized Haar measure on $\mathcal X_i\times\mathcal Y_i$. The functions supplied by Theorem~\ref{thm:regularity_decomposition} are constant on finite equal-measure cells, so their normalized Haar averages equal the arithmetic means of their constant values on the equal-measure cells.
\begin{theorem}[{   \cite[Lemma~4.8]{kelley2024explicit}    }]\label{thm:KLM_moment}
Fix $i\in[k]$, $0<\epsilon<1/20$, and an even integer $\ell\ge2$ with $\lceil\ell/\epsilon\rceil\le d$. If the nonzero functions $\fun f_i,\fun g_i$ are $(\epsilon,d)$-regular and satisfy $\epsilon$-min-degree on $\mathcal Y_i\times\mathcal Z_i$ and $\mathcal X_i\times\mathcal Z_i$, respectively, then
\[
\left\|\frac{\E_{Z\sim\mathcal Z_i}[\fun f_i(Y,Z)\fun g_i(X,Z)]}{\delta_{\fun f,i}\delta_{\fun g,i}}-1\right\|_\ell\le20\epsilon.
\]
The norm uses normalized Haar measure on $\mathcal X_i\times\mathcal Y_i$.
\end{theorem}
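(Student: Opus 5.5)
Since $\fun f_i,\fun g_i$ are constant on products of equal-measure cells and all averages use normalized Haar measure, the first step is to reduce to a purely finite statement. We identify $\mathcal X_i,\mathcal Y_i,\mathcal Z_i$ with the uniform distributions on their sets of cells. Then $\fun f_i,\fun g_i$ become bipartite $0/1$ matrices, and every quantity in the statement (densities, grid norms, min-degree, and the $\ell$-norm) is unchanged. After this reduction the statement is exactly the setting of \cite[Lemma~4.8]{kelley2024explicit}, and I would follow their argument.

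Normalize $\fun f'=\fun f_i/\delta_{\fun f,i}$ and $\fun g'=\fun g_i/\delta_{\fun g,i}$, and set $\fun F(X,Y)=\E_{Z}[\fun f'(Y,Z)\fun g'(X,Z)]$, a nonnegative function; the goal is $\|\fun F-1\|_\ell\le 20\epsilon$.

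\textbf{Upper moments.} For every integer $2\le r\le d$, expand the $r$-th moment over $r$ independent copies of $Z$:
\[
\E_{X,Y}\bigl[\fun F^r\bigr]=\E_{Z_1,\dots,Z_r}\Bigl[\E_Y\prod_{j\le r}\fun f'(Y,Z_j)\cdot\E_X\prod_{j\le r}\fun g'(X,Z_j)\Bigr].
\]
By Cauchy--Schwarz this is at most $\|\fun f'\|_{U(2,r)}^{r}\|\fun g'\|_{U(2,r)}^{r}$. Monotonicity of $U(2,r)$ in $r$ and $(\epsilon,d)$-regularity then give $\|\fun F\|_r\le(1+\epsilon)^2$ for all $r\le d$. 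In particular this holds at $r=\lceil\ell/\epsilon\rceil\le d$.

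\textbf{Lower bound on the mean.} Here I would use $\epsilon$-min-degree on both sides. Integrating first over the $Y$ and $X$ variables and applying the min-degree bounds should give $\E_{X,Y}[\fun F]\ge(1-\epsilon)^2$, or at least a near-1 lower bound on a suitable average of $\fun F$. The expression $\E\fun F=\E_Z[\E_Y\fun f'\cdot\E_X\fun g']$ involves $Z$-marginals, whereas min-degree controls row sums in the $Y$ and $X$ variables. Reconciling the two, for example by combining min-degree with the $U(2,2)$ bound on the $Z$-marginals, is the first thing I would check carefully.

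\textbf{Main obstacle: converting the moments into $\|\fun F-1\|_\ell=O(\epsilon)$.} Write $\fun F-1=(\fun F-1)_+-(\fun F-1)_-$. For the positive part, Hölder with the exponent $r=\ell/\epsilon$ together with $\|\fun F\|_r\le1+2\epsilon$ shows that $\fun F$ rarely exceeds $1+O(\epsilon)$ and that its $\ell$-th moment above this level is $O(\epsilon)^\ell$. The negative part is bounded by $1$, so a crude bound only gives $\|(\fun F-1)_-\|_\ell\le \E[(1-\fun F)_+]^{1/\ell}$, which is too lossy. To get $O(\epsilon)$ I would follow KLM's sifting argument: use the pointwise min-degree structure to show $\fun F$ cannot be small on a set of non-negligible measure without violating $\|\fun F\|_{r}\le 1+2\epsilon$ combined with $\E\fun F\ge1-2\epsilon$. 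Tracking constants then yields the bound $20\epsilon$ under $\epsilon<1/20$. This last step is where I expect the real work, and I would lean directly on KLM's proof for it.
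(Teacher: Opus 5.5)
Your reduction to the finite cell setting, followed by an appeal to \cite[Lemma~4.8]{kelley2024explicit}, is exactly the paper's proof. The paper's only extra check (the Remark after the statement) is that $(\epsilon,d)$-regularity implies the $(\epsilon,\lceil\ell/\epsilon\rceil)$-regularity KLM assume, by monotonicity of the grid norm; you use the same monotonicity. Your bound $\|\fun F\|_r\le(1+\epsilon)^2$ and your positive-part estimate are also correct. So as a citation the proposal stands.

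The mechanism you sketch for the negative part, however, would fail if you tried to carry it out. The three facts you list ($\|\fun F\|_r\le1+2\epsilon$, $\E\fun F\ge1-2\epsilon$, pointwise min-degree) do not imply $\|(\fun F-1)_-\|_\ell=O(\epsilon)$. For a counterexample in the finite setting, take $\mathcal Z=\mathbb{F}_2^n$ with $n$ large, $\mathcal X=\mathcal Y=[K]$, distinct nonzero $v_1,\dots,v_K$, and
\[
\fun f(j,z)=\mathbf 1[\langle v_j,z\rangle=0],\qquad \fun g(i,z)=\mathbf 1[\langle v_i,z\rangle=1].
\]
Every row of both functions has density exactly $1/2$, and $\fun F(i,j)=\mathbf 1[i\ne j]$. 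Hence $\|\fun F\|_r\le1$ for all $r$ and $\E\fun F=1-1/K$. Yet for $K=\lceil\epsilon^{-\ell/2}\rceil$ we get $\|\fun F-1\|_\ell=K^{-1/\ell}\approx\sqrt\epsilon$, far above $20\epsilon$. Only the regularity of $\fun f$ and $\fun g$ themselves excludes this example. The Gram kernel $\E_z\fun f'(j,z)\fun f'(j',z)$ equals $2$ on the diagonal and $1$ off it, so $(\epsilon,p)$-regularity forces $K$ to be exponentially large in $p$. The moments of $\fun F$ cannot detect this, because $\fun F-1$ is a cross kernel between $\mathcal X_i$ and $\mathcal Y_i$ and has no positivity.

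The missing idea is to prove the two-sided bound for the Gram kernels, where positivity is available, and only then transfer it to $\fun F$; no sifting is needed.
\begin{itemize}
\item Set $a(y)=\E_z\fun f'(y,z)\ge1-\epsilon$ and $\tilde{\fun f}(y,z)=\fun f'(y,z)/a(y)-1$, and define $b(x)$ and $\tilde{\fun g}$ likewise.
\item The kernel $\tilde G_{\fun f}(y,y')=\E_z\tilde{\fun f}(y,z)\tilde{\fun f}(y',z)$ satisfies $\tilde G_{\fun f}\ge-1$. By regularity and min-degree, $\|1+\tilde G_{\fun f}\|_p\le(1+\epsilon)^2/(1-\epsilon)^2$ for $p=\lceil\ell/\epsilon\rceil$.
\item Being a Gram kernel, it has nonnegative moments: $\E[\tilde G_{\fun f}^k]=\E_{z_1,\dots,z_k}\bigl[\bigl(\E_y\prod_{s\le k}\tilde{\fun f}(y,z_s)\bigr)^2\bigr]\ge0$ for every $k$.
\item Taking $k=\ell+1$, which is odd, gives $\|(\tilde G_{\fun f})_-\|_\ell\le\|(\tilde G_{\fun f})_+\|_{\ell+1}$. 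Your positive-part argument bounds the right side by $O(\epsilon)$. This is the Kelley--Meka positivity trick.
\item The rows of $\tilde{\fun f},\tilde{\fun g}$ have mean zero, so $\fun F(x,y)=a(y)b(x)\bigl(1+\Phi(x,y)\bigr)$ with $\Phi(x,y)=\E_z\tilde{\fun f}(y,z)\tilde{\fun g}(x,z)$.
\item Cauchy--Schwarz over $z_1,\dots,z_\ell$, as in your upper-moment step, gives $\|\Phi\|_\ell\le(\|\tilde G_{\fun f}\|_\ell\|\tilde G_{\fun g}\|_\ell)^{1/2}=O(\epsilon)$.
\item Since $a,b\ge1-\epsilon$, we have $(\fun F-1)_-\le2\epsilon+|\Phi|$, which bounds the negative part.
\end{itemize}
So min-degree enters through this row normalization. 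Your lower bound on $\E\fun F$ is not needed; in any case it follows from regularity alone, via the $Z$-marginals.
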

\begin{remark}
    In our notation, Lemma~4.8 of Kelley, Lovett, and Meka~\cite{kelley2024explicit} requires $(\epsilon,p)$-regularity, where $p=\lceil\ell/\epsilon\rceil$. Since $p\le d$, H\"older's inequality and the assumed $(\epsilon,d)$-regularity give
\[
\|\fun f_i\|_{U(2,p),i}\le\|\fun f_i\|_{U(2,d),i}\le(1+\epsilon)\delta_{\fun f,i},
\qquad
\|\fun g_i\|_{U(2,p),i}\le\|\fun g_i\|_{U(2,d),i}\le(1+\epsilon)\delta_{\fun g,i},
\]
with normalized Haar measure on $\mathcal Y_i\times\mathcal Z_i$ and $\mathcal X_i\times\mathcal Z_i$, respectively. Thus both functions satisfy the required $(\epsilon,p)$-regularity.
\end{remark}

\medskip
Now we are ready to  prove Lemma~\ref{lem:regulary_corruption_bound}.
\begin{proof}[Proof of Lemma~\ref{lem:regulary_corruption_bound}]
Fix $0<\epsilon\le10^{-4}$, $d\ge8/\epsilon$, and $i\in\mathcal I_r$. We build a lower bound on $\mu_0(\fun c_i)$ and an upper bound on $\mu_1(\fun c_i)$, then compare them.

\paragraph{Step 1. Lower bound $\mu_0(\fun c_i)$.}
Since $\delta_{\fun f,i},\delta_{\fun g,i}>2^{-d}$, Theorem~\ref{thm:regularity_decomposition} gives regularity and min-degree on the respective subsets:
\[
\|\fun f_i\|_{U(2,d),i}\le(1+\epsilon)\delta_{\fun f,i},\qquad
\|\fun g_i\|_{U(2,d),i}\le(1+\epsilon)\delta_{\fun g,i},
\]
\[
\E_{Z\sim\mathcal Z_i}[\fun f_i(Y,Z)]\ge(1-\epsilon)\delta_{\fun f,i}\quad(\forall Y\in\mathcal Y_i),\qquad
\E_{Z\sim\mathcal Z_i}[\fun g_i(X,Z)]\ge(1-\epsilon)\delta_{\fun g,i}\quad(\forall X\in\mathcal X_i).
\]
Take $\ell=2\lfloor\epsilon d/2\rfloor$. Then $\ell\ge\epsilon d/2\ge2$ and $\lceil\ell/\epsilon\rceil\le d$, so Theorem~\ref{thm:KLM_moment} gives
\[
\left\|\frac{\E_{Z\sim\mathcal Z_i}[\fun f_i(Y,Z)\fun g_i(X,Z)]}{\delta_{\fun f,i}\delta_{\fun g,i}}-1\right\|_\ell\le20\epsilon.
\]
H\"older's inequality and $\|\fun h_i\|_{\ell/(\ell-1)}=\delta_{\fun h,i}^{1-1/\ell}$, with normalized Haar measure on $\mathcal X_i\times\mathcal Y_i$, therefore imply
\[
\left|\E_{X\sim\mathcal X_i,Y\sim\mathcal Y_i}
\left[\fun h_i(X,Y)\left(\E_{Z\sim\mathcal Z_i}[\fun f_i(Y,Z)\fun g_i(X,Z)]-\delta_{\fun f,i}\delta_{\fun g,i}\right)\right]\right|
\le20\epsilon\delta_{\fun f,i}\delta_{\fun g,i}\delta_{\fun h,i}^{1-1/\ell}.
\]
Since $\delta_{\fun h,i}>2^{-\epsilon d}$, we have $\delta_{\fun h,i}^{-1/\ell}\le4$. Hence
\begin{align}
\mu_0(\fun c_i)
&=\mu_0(\mathcal B_i)\E_{X\sim\mathcal X_i,Y\sim\mathcal Y_i}
\big[\fun h_i(X,Y)\E_{Z\sim\mathcal Z_i}[\fun f_i(Y,Z)\fun g_i(X,Z)]\big]\notag\\
&\ge\mu_0(\mathcal B_i)\delta_{\fun f,i}\delta_{\fun g,i}\delta_{\fun h,i}
(1-20\epsilon\delta_{\fun h,i}^{-1/\ell})\notag\\
&\ge(1-80\epsilon)\mu_0(\mathcal B_i)\delta_{\fun f,i}\delta_{\fun g,i}\delta_{\fun h,i}.
\label{eq:regular-independent}
\end{align}

\paragraph{Step 2. Upper bound $\mu_1(\fun c_i)$.}
Corollary~\ref{lem:local-grid-bound} bounds $\mu_1(\fun c_i)$ by the local grid norms. Taking $p=d$ and substituting the regularity bounds in Step~1 gives
\begin{equation}\label{eq:regular-constrained}
\mu_1(\fun c_i)\le\lambda^{1+2/d}(1+\epsilon)^2\mu_0(\mathcal B_i)\delta_{\fun f,i}\delta_{\fun g,i}\delta_{\fun h,i}^{1-1/d}+e^{-\kappa\sqrt m/d}.
\end{equation}

\paragraph{Step 3. Compare the bounds.}
Since $\delta_{\fun h,i}^{-1/d}<2^\epsilon$, the lower bound~\eqref{eq:regular-independent} and upper bound~\eqref{eq:regular-constrained} imply
\[
\mu_1(\fun c_i)\le\frac{\lambda^{1+2/d}(1+\epsilon)^2 2^\epsilon}{1-80\epsilon}\mu_0(\fun c_i)+e^{-\kappa\sqrt m/d}
\le4\lambda^2\mu_0(\fun c_i)+e^{-\kappa\sqrt m/d}.
\]
Thus $\alpha=4\lambda^2$ suffices.
\end{proof}

\subsubsection{Corruption bound on sparse cylinder intersections.}
\label{sec:sparse_case}

In this section, we prove Lemma~\ref{lem:sparse_currption_bound}. We use Corollary~\ref{lem:local-grid-bound} with $p=2$: for Boolean functions, the grid norm $U(2,2)$ is at most the square root of the density.

\begin{proof}[Proof of Lemma~\ref{lem:sparse_currption_bound}]
For any Boolean function $\fun u:\mathcal S\times\mathcal T\to\{0,1\}$ on probability spaces $\mathcal S,\mathcal T$, take independent samples $s,s'\sim\mathcal S$ and $t,t'\sim\mathcal T$. Since $0\le\fun u\le1$,
\[
\|\fun u\|_{U(2,2)}^4
=\E[\fun u(s,t)\fun u(s',t)\fun u(s,t')\fun u(s',t')]
\le\E[\fun u(s,t)\fun u(s',t')]
=(\E\fun u)^2.
\]
Applying this to $\fun f_i,\fun g_i$ on their respective subsets gives
\[
\|\fun f_i\|_{U(2,2),i}\le\delta_{\fun f,i}^{1/2},\qquad
\|\fun g_i\|_{U(2,2),i}\le\delta_{\fun g,i}^{1/2}.
\]
For $i\in\mathcal I_s$, at least one of $\delta_{\fun f,i},\delta_{\fun g,i},\delta_{\fun h,i}$ is at most $2^{-\epsilon d}$, since $0<\epsilon\le1$. Corollary~\ref{lem:local-grid-bound} with $p=2$ therefore yields
\[
\mu_1(\fun c_i)
\le\lambda^2\mu_0(\mathcal B_i)\sqrt{\delta_{\fun f,i}\delta_{\fun g,i}\delta_{\fun h,i}}+e^{-\kappa\sqrt m/2}
\le\lambda^2\mu_0(\mathcal B_i)2^{-\epsilon d/2}+e^{-\kappa\sqrt m/2}.
\]
Summing and using $\sum_i\mu_0(\mathcal B_i)=O(d^2)$ from Theorem~\ref{thm:regularity_decomposition} gives a universal constant $C>0$ such that
\[
\sum_{i\in\mathcal I_s}\mu_1(\fun c_i)
\le\lambda^2 2^{-\epsilon d/2}\sum_{i\in\mathcal I_s}\mu_0(\mathcal B_i)
+|\mathcal I_s|e^{-\kappa\sqrt m/2}
\le Cd^2 2^{-\epsilon d/2}+|\mathcal I_s|e^{-\kappa\sqrt m/2}.\qedhere
\]
\end{proof}

\section*{AI Declaration}
AI are used extensively for this work. Below, we list our AI usages following the AI Methodology template recommended by the
\href{https://simons.berkeley.edu/news-publications-videos/ai-tcs-working-group}
{Simons Institute's AI + TCS Working Group}.
\begin{flushleft}
\renewcommand{\arraystretch}{1.3}
\begin{tabular}{@{}l@{\qquad}c@{}}
\textbf{Activity} & \textbf{AI used} \\
Asking the research question & No \\
Coming up with the approach & Yes \\
Proof development & Yes \\
Writing and exposition & Yes \\
Checking for bugs & Yes \\
Other supporting tasks & Yes \\
\end{tabular}
\end{flushleft}

\paragraph{Details.}
\begin{enumerate}

\item \textbf{Asking the research question.}
The authors formulated the research question and the goal of establishing
the  $\operatorname{poly} (\log n)$ vs $\operatorname{poly}(n)$ quantum-classical separation in the standard 3-party NOF communication.

\item \textbf{Coming up with the approach.}
During discussions with the authors on how to use their prior
work~\cite{yang2025deterministic,yang2026exponential,wang2026randomized}
to construct a three-party function along the lifting route, GPT-5.6 Pro suggested using the ABCD problem of Arunachalam, Girish, and
Lifshitz~\cite{arunachalam2023one} as the building block; following subsequent trials and manual screening, we formalized the
Interleaved Unitary Product problem (IUP)
(Definition~\ref{def:overview-abcde}).

\item \textbf{Proof development.} Focusing on the IUP problem, the authors proposed applying  
the structure-versus-pseudorandomness framework on cylinder intersections analysis. 
Following this suggestion, GPT-5.6 Pro identified the regularity decomposition
of Abboud, Fischer, Kelley, Lovett, and Meka~\cite{abboud2024new}
and adapted it to the key regularity decomposition theorem
(Theorem~\ref{thm:regularity_decomposition}).
GPT-5.6 Pro developed the proof of the grid-norm bound
(Theorem~\ref{thm:Grid_norm}).
GPT-5.6 Pro also developed the regular and sparse estimates
(Lemmas~\ref{lem:regulary_corruption_bound}
and~\ref{lem:sparse_currption_bound}) and the resulting corruption bound
(Lemma~\ref{lem:corruption_bound}).

\item \textbf{Writing and exposition.}
The authors structured the manuscript and wrote a first draft.
AI tools (Codex GPT-6, Claud Opus 5.5) were subsequently used aggressively to 
polish the language and presentation.

\item \textbf{Checking for bugs.}
AI helped checking the proof and did not identify consequential errors in the proofs developed
along the successful route; it later helped simplify these proofs.

\item \textbf{Other supporting tasks.}
GPT-5.6 Pro also assisted with literature searches and language polishing.

\end{enumerate}
The authors reviewed and finalized all mathematical claims and proofs, and take full responsibility for the correctness, exposition, and attribution in the final manuscript.

\bibliographystyle{alpha}
\bibliography{references}

@book{knapp2016advanced,
  author    = {Anthony W. Knapp},
  title     = {Advanced Real Analysis},
  edition   = {Digital second},
  publisher = {Published by the author},
  year      = {2016},
  url       = {https://www.math.stonybrook.edu/~aknapp/download/a2-realanal-inside.pdf}
}

@misc{carlen2010divisibility,
  author       = {Eric A. Carlen},
  title        = {Notes on Divisibility of Non Atomic Measures for {Math 501, Fall 2010}},
  howpublished = {Lecture notes, Rutgers University},
  year         = {2010},
  month        = oct,
  url          = {https://sites.math.rutgers.edu/~carlen/501F10/atomic.pdf}
}

@inproceedings{wang2026randomized,
  title={Randomized and Quantum Lifting for One-Way Conservative NOF Model},
  author={Wang, Haoyu and Wu, Pei},
  booktitle={51st International Symposium on Mathematical Foundations of Computer Science (MFCS 2026)},
  pages={78--1},
  year={2026},
  organization={Schloss Dagstuhl--Leibniz-Zentrum f{\"u}r Informatik}
}

@inproceedings{kelley2023strong,
  title={Strong Bounds for 3-Progressions},
  author={Kelley, Zander and Meka, Raghu},
  booktitle={2023 IEEE 64th Annual Symposium on Foundations of Computer Science (FOCS)},
  pages={933--973},
  year={2023}
}

@article{klauck2018power,
  title={The power of one clean qubit in communication complexity},
  author={Klauck, Hartmut and Lim, Debbie},
  journal={arXiv preprint arXiv:1807.07762},
  year={2018}
}

@article{arunachalam2023one,
  title={One clean qubit suffices for quantum communication advantage},
  author={Arunachalam, Srinivasan and Girish, Uma and Lifshitz, Noam},
  journal={arXiv preprint arXiv:2310.02406},
  year={2023}
}

@inproceedings{abboud2024new,
  title={New graph decompositions and combinatorial boolean matrix multiplication algorithms},
  author={Abboud, Amir and Fischer, Nick and Kelley, Zander and Lovett, Shachar and Meka, Raghu},
  booktitle={Proceedings of the 56th Annual ACM Symposium on Theory of Computing},
  pages={935--943},
  year={2024}
}

@inproceedings{kelley2024explicit,
  title={Explicit separations between randomized and deterministic number-on-forehead communication},
  author={Kelley, Zander and Lovett, Shachar and Meka, Raghu},
  booktitle={Proceedings of the 56th Annual ACM Symposium on Theory of Computing},
  pages={1299--1310},
  year={2024}
}

@inproceedings{klauck2001lower,
  title={Lower bounds for quantum communication complexity},
  author={Klauck, Hartmut},
  booktitle={Proceedings 42nd IEEE Symposium on Foundations of Computer Science},
  pages={288--297},
  year={2001},
  organization={IEEE}
}

@inproceedings{chandra1983multi,
  title={Multi-party protocols},
  author={Chandra, Ashok K and Furst, Merrick L and Lipton, Richard J},
  booktitle={Proceedings of the fifteenth annual ACM symposium on Theory of computing},
  pages={94--99},
  year={1983}
}

@inproceedings{bar2004exponential,
  title={Exponential separation of quantum and classical one-way communication complexity},
  author={Bar-Yossef, Ziv and Jayram, Thathachar S and Kerenidis, Iordanis},
  booktitle={Proceedings of the thirty-sixth annual ACM symposium on Theory of computing},
  pages={128--137},
  year={2004}
}

@inproceedings{buhrman1998quantum,
  title={Quantum vs. classical communication and computation},
  author={Buhrman, Harry and Cleve, Richard and Wigderson, Avi},
  booktitle={Proceedings of the thirtieth annual ACM symposium on Theory of computing},
  pages={63--68},
  year={1998}
}

@inproceedings{raz1999exponential,
  title={Exponential separation of quantum and classical communication complexity},
  author={Raz, Ran},
  booktitle={Proceedings of the thirty-first annual ACM symposium on Theory of computing},
  pages={358--367},
  year={1999}
}

@inproceedings{gavinsky2007exponential,
  title={Exponential separations for one-way quantum communication complexity, with applications to cryptography},
  author={Gavinsky, Dmitry and Kempe, Julia and Kerenidis, Iordanis and Raz, Ran and De Wolf, Ronald},
  booktitle={Proceedings of the thirty-ninth annual ACM symposium on Theory of computing},
  pages={516--525},
  year={2007}
}

@inproceedings{regev2011quantum,
  title={Quantum one-way communication can be exponentially stronger than classical communication},
  author={Regev, Oded and Klartag, Bo'az},
  booktitle={Proceedings of the forty-third annual ACM symposium on Theory of computing},
  pages={31--40},
  year={2011}
}

@inproceedings{gavinsky2016entangled,
  title={Entangled simultaneity versus classical interactivity in communication complexity},
  author={Gavinsky, Dmitry},
  booktitle={Proceedings of the forty-eighth annual ACM symposium on Theory of Computing},
  pages={877--884},
  year={2016}
}

@article{girish2022quantum,
  title={Quantum versus randomized communication complexity, with efficient players},
  author={Girish, Uma and Raz, Ran and Tal, Avishay},
  journal={computational complexity},
  volume={31},
  number={2},
  pages={17},
  year={2022},
  publisher={Springer}
}

@article{gavinsky2019quantum,
  title={Quantum versus classical simultaneity in communication complexity},
  author={Gavinsky, Dmitry},
  journal={IEEE Transactions on Information Theory},
  volume={65},
  number={10},
  pages={6466--6483},
  year={2019},
  publisher={IEEE}
}

@article{goos2024quantum,
  title={Quantum Communication Advantage in TFNP},
  author={G{\"o}{\"o}s, Mika and Gur, Tom and Jain, Siddhartha and Li, Jiawei},
  journal={arXiv preprint arXiv:2411.03296},
  year={2024}
}

@inproceedings{gavinsky2020bare,
  title={Bare quantum simultaneity versus classical interactivity in communication complexity},
  author={Gavinsky, Dmitry},
  booktitle={Proceedings of the 52nd Annual ACM SIGACT Symposium on Theory of Computing},
  pages={401--411},
  year={2020}
}

@inproceedings{ABK24,
  author       = {Scott Aaronson and
                  Harry Buhrman and
                  William Kretschmer},
  editor       = {Venkatesan Guruswami},
  title        = {A Qubit, a Coin, and an Advice String Walk into a Relational Problem},
  booktitle    = {Proceedings of the 15th Innovations in Theoretical Computer Science Conference},
  series       = {LIPIcs},
  volume       = {287},
  pages        = {1:1--1:24},
  publisher    = {Schloss Dagstuhl - Leibniz-Zentrum f{\"{u}}r Informatik},
  year         = {2024},
  url          = {https://doi.org/10.4230/LIPIcs.ITCS.2024.1},
  doi          = {10.4230/LIPICS.ITCS.2024.1}
}

@misc{KGDGGGHMNHA25,
  author       = {William Kretschmer and
                  Sabee Grewal and
                  Matthew DeCross and
                  Justin A. Gerber and
                  Kevin Gilmore and
                  Dan Gresh and
                  Nicholas Hunter-Jones and
                  Karl Mayer and
                  Brian Neyenhuis and
                  David Hayes and
                  Scott Aaronson},
  title        = {Demonstrating an Unconditional Separation between Quantum and Classical Information Resources}, 
  journal      = {CoRR},
  volume       = {abs/2509.07255},
  year         = {2025},
  url          = {https://arxiv.org/abs/2509.07255},
  eprinttype   = {arXiv},
  eprint       = {2509.07255},
  primaryClass = {quant-ph}
}

@INPROCEEDINGS{QCCinNOF,
  author={Gavinsky, Dmitry and Pudlák, Pavel},
  booktitle={2008 23rd Annual IEEE Conference on Computational Complexity}, 
  title={Exponential Separation of Quantum and Classical Non-interactive Multi-party Communication Complexity}, 
  year={2008},
  volume={},
  number={},
  pages={332-339},
  doi={10.1109/CCC.2008.27}
}

@article{yang2025quantum,
  author       = {Guangxu Yang and
                  Jiapeng Zhang},
  title        = {Quantum versus Classical Separation in Simultaneous Number-on-Forehead Communication},
  journal      = {CoRR},
  volume       = {abs/2506.16804},
  year         = {2025},
  url          = {https://doi.org/10.48550/arXiv.2506.16804},
  doi          = {10.48550/ARXIV.2506.16804},
  eprinttype    = {arXiv},
  eprint       = {2506.16804}
}

@article{yang2025deterministic,
  title={Deterministic Lifting Theorems for One-Way Number-on-Forehead Communication},
  author={Yang, Guangxu and Zhang, Jiapeng},
  journal={arXiv preprint arXiv:2506.12420},
  year={2025}
}

@inproceedings{Yao,
  author    = {Andrew Chi-Chih Yao},
  title     = {Some Complexity Questions Related to Distributive Computing (Preliminary Report)},
  booktitle = {Proceedings of the Eleventh Annual ACM Symposium on Theory of Computing},
  pages     = {209--213},
  publisher = {ACM},
  year      = {1979},
  doi       = {10.1145/800135.804414}
}

@book{kushilevitz1997communication,
  author    = {Eyal Kushilevitz and Noam Nisan},
  title     = {Communication Complexity},
  publisher = {Cambridge University Press},
  year      = {1997},
  doi       = {10.1017/CBO9780511574948}
}

@book{nielsen2010quantum,
  author    = {Michael A. Nielsen and Isaac L. Chuang},
  title     = {Quantum Computation and Quantum Information},
  edition   = {10th Anniversary},
  publisher = {Cambridge University Press},
  year      = {2010},
  doi       = {10.1017/CBO9780511976667}
}

@article{trojek2005experimental,
  author  = {Pavel Trojek and Christian Schmid and Mohamed Bourennane and {\v{C}}aslav Brukner and Marek {\.{Z}}ukowski and Harald Weinfurter},
  title   = {Experimental quantum communication complexity},
  journal = {Physical Review A},
  volume  = {72},
  number  = {5},
  pages   = {050305},
  year    = {2005},
  doi     = {10.1103/PhysRevA.72.050305}
}

@article{xu2015experimental,
  author  = {Feihu Xu and Juan Miguel Arrazola and Kejin Wei and Wenyuan Wang and Pablo Palacios-Avila and Chen Feng and Shihan Sajeed and Norbert L{\"u}tkenhaus and Hoi-Kwong Lo},
  title   = {Experimental quantum fingerprinting with weak coherent pulses},
  journal = {Nature Communications},
  volume  = {6},
  pages   = {8735},
  year    = {2015},
  doi     = {10.1038/ncomms9735}
}

@article{guan2016observation,
  author  = {Jian-Yu Guan and Feihu Xu and Hua-Lei Yin and Yuan Li and Wei-Jun Zhang and Si-Jing Chen and Xiao-Yan Yang and Li Li and Li-Xing You and Teng-Yun Chen and Zhen Wang and Qiang Zhang and Jian-Wei Pan},
  title   = {Observation of Quantum Fingerprinting Beating the Classical Limit},
  journal = {Physical Review Letters},
  volume  = {116},
  number  = {24},
  pages   = {240502},
  year    = {2016},
  doi     = {10.1103/PhysRevLett.116.240502}
}

@article{kumar2019experimental,
  author  = {Niraj Kumar and Iordanis Kerenidis and Eleni Diamanti},
  title   = {Experimental demonstration of quantum advantage for one-way communication complexity surpassing best-known classical protocol},
  journal = {Nature Communications},
  volume  = {10},
  pages   = {4152},
  year    = {2019},
  doi     = {10.1038/s41467-019-12139-z}
}

@article{shen2025experimental,
  author  = {Ao Shen and Yu-Shuo Lu and Xiping Wu and Jinping Lin and Xiao-Yu Cao and Chengfang Ge and Shan-Feng Shao and Hua-Lei Yin and Lai Zhou and Zhiliang Yuan},
  title   = {Experimental Quantum Fingerprinting without the Shared Randomness Loophole},
  journal = {Physical Review Letters},
  volume  = {135},
  number  = {1},
  pages   = {010801},
  year    = {2025},
  doi     = {10.1103/v1fc-q1n9}
}

@article{knill1998power,
  author  = {Emanuel Knill and Raymond Laflamme},
  title   = {Power of One Bit of Quantum Information},
  journal = {Physical Review Letters},
  volume  = {81},
  number  = {25},
  pages   = {5672--5675},
  year    = {1998},
  doi     = {10.1103/PhysRevLett.81.5672}
}

@inproceedings{yang2026exponential,
  author        = {Guangxu Yang and Jiapeng Zhang},
  title         = {Exponential Separation of Quantum and Classical One-Way {Numbers-on-Forehead} Communication},
  booktitle     = {Proceedings of the IEEE Symposium on Foundations of Computer Science (FOCS)},
  year          = {2026},
  note          = {To appear},
  eprint        = {2603.22795},
  archivePrefix = {arXiv},
  primaryClass  = {quant-ph},
  url           = {https://arxiv.org/abs/2603.22795}
}

@article{gavinsky2026total,
  title={On the quantum communication complexity of total functions},
  author={Gavinsky, Dmytro},
  journal={arXiv preprint arXiv:2608.18784},
  year={2026}
}

\appendix
\section{Regularity decomposition for cylinder intersections}\label{app:proofs}

Abboud, Fischer, Kelley, Lovett and Meka~\cite{abboud2024new} give a regularity decomposition of two Boolean matrices. We check that their recursion preserves the product at every $(X,Y,Z)$; multiplying by $\fun h(X,Y)$ then gives the required cylinder decomposition. We will restate all the cited theorems and algorithms  below for  completeness.

Throughout the argument, $\mathcal I,\mathcal J,\mathcal K$ are nonempty finite sets, $F:\mathcal J\times\mathcal K\to\{0,1\}$ and $G:\mathcal I\times\mathcal K\to\{0,1\}$. A matrix on a subdomain is extended by zero outside that subdomain; its density, grid norm, and min-degree use the uniform probability measure on its stated domain. We write $G[\mathcal I',\mathcal K']$ for a restriction. The notation $(X,Y,Z,A,B)$ of Abboud et al.~\cite{abboud2024new} corresponds to $(\mathcal J,\mathcal K,\mathcal I,F,G^{\mathrm T})$ here.

The following GoodCube procedure selects $\mathcal I^*\times\mathcal K^*$ and splits $F$ on $\mathcal J\times\mathcal K^*$. 
\begin{lemma}[GoodCube, {\cite[Lemma 5.3]{abboud2024new}}]\label{lem:afklm_goodcube}
Let $0<\epsilon<1$, $0<\eta<1/2$, and let $d\ge1$ be an integer. If $\E G\ge2^{-d}$, there are nonempty $\mathcal I^*\subseteq\mathcal I$, $\mathcal K^*\subseteq\mathcal K$, and a list
$(\mathcal I_i,\mathcal J_i,\mathcal K_i,F_i)_{i=1}^{\ell}$, where $\mathcal I_i\subseteq\mathcal I^*$, $\mathcal J_i\subseteq\mathcal J$, $\mathcal K_i\subseteq\mathcal K^*$, and $F_i:\mathcal J_i\times\mathcal K_i\to\{0,1\}$, with the following properties. Let  $G_i=G[\mathcal I_i,\mathcal K_i]$.
\begin{enumerate}
\item $F(y,z)\mathbf1_{\mathcal K^*}(z)=\sum_{i=1}^{\ell}F_i(y,z)$ for every $(y,z)\in\mathcal J\times\mathcal K$.
\item For each $i$, either $\E F_i\le2^{-d}$, or both $F_i,G_i$ are $(\epsilon,d)$-regular and satisfy $\epsilon$-min-degree.
\item $\E G[\mathcal I^*,\mathcal K^*]\ge\E G$, and
\[
\sum_i|\mathcal I_i||\mathcal J_i||\mathcal K_i|\le(d+2)|\mathcal I^*||\mathcal J||\mathcal K^*|,
\]
\[
\sum_i|\mathcal I^*\setminus\mathcal I_i||\mathcal J_i||\mathcal K_i|\le\eta(d+2)|\mathcal I^*||\mathcal J||\mathcal K^*|.
\]
\item For a constant $c_\epsilon>0$ depending only on $\epsilon$,
\[
|\mathcal I^*||\mathcal K^*|\ge e^{-c_\epsilon d^4/\eta}|\mathcal I||\mathcal K|,
\qquad \ell\le e^{c_\epsilon d^3}.
\]
\end{enumerate}
The output is computed by a deterministic algorithm, denoted by $\fun{GoodCube}$.
\end{lemma}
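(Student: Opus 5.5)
Since this lemma is quoted from \cite[Lemma~5.3]{abboud2024new}, I would reproduce their density-increment argument in our notation, with $(\mathcal J,\mathcal K,\mathcal I,F,G^{\mathrm T})$ in place of their $(X,Y,Z,A,B)$. I would carry it out in two phases. The first phase finds the cube $\mathcal I^*\times\mathcal K^*$ on which $G$ is robustly pseudorandom. The second phase splits $F$ on $\mathcal J\times\mathcal K^*$ into pieces that are either sparse or regular together with the matching restriction of $G$.

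\emph{Phase 1 (finding $\mathcal I^*,\mathcal K^*$).} Start from $\mathcal I\times\mathcal K$, where $\E G\ge 2^{-d}$. As long as the current restriction of $G$ is not $(\epsilon,d)$-regular, or fails $\epsilon$-min-degree, in a robust sense that survives passing to subcubes that keep a $(1-\eta)$-fraction of rows, do one of two things:
\begin{itemize}
\item apply the grid-norm density increment of Kelley--Meka/KLM, which turns $\|G\|_{U(2,d)}>(1+\epsilon)\E G$ into a subcube on which the density rises by a factor $1+\Omega(\epsilon)$, with relative size at least $2^{-O(d^2)}$ times a power of the density;
\item or delete the rows of low degree, which also raises the density.
\end{itemize}
Density never exceeds $1$ and starts at least $2^{-d}$, so there are at most $O(d/\epsilon)$ increments. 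Multiplying the losses gives
\[
|\mathcal I^*||\mathcal K^*|\ge e^{-c_\epsilon d^4/\eta}|\mathcal I||\mathcal K|,
\]
and monotonicity of the density gives $\E G[\mathcal I^*,\mathcal K^*]\ge\E G$. The $1/\eta$ in the exponent comes from requiring regularity to be robust under removing an $\eta$-fraction of $\mathcal I^*$.

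\emph{Phase 2 (splitting $F$).} Restrict $F$ to $\mathcal J\times\mathcal K^*$ and run a recursive decomposition. For the current piece $\mathcal J'\times\mathcal K'$:
\begin{itemize}
\item If $F$ has density at most $2^{-d}$, output it as a sparse piece.
\item Otherwise, if $F$ is regular with good min-degree, output it. Take $\mathcal I_i\subseteq\mathcal I^*$ to be the rows that keep good degree into $\mathcal K'$. By the robustness from Phase 1, $G[\mathcal I_i,\mathcal K']$ is then $(\epsilon,d)$-regular with $\epsilon$-min-degree, and $|\mathcal I^*\setminus\mathcal I_i|\le\eta|\mathcal I^*|$.
\item Otherwise, use the density increment for $F$ to carve out a denser subcube, and recurse on both it and its complement, split into at most two or three rectangles.
\end{itemize}
Every piece produced is a product rectangle, and $F\mathbf 1_{\mathcal K^*}$ is exactly the sum of the pieces, which gives item~1.

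For the accounting, a point $(y,z)$ lies in one piece per recursion level. Its density level rises geometrically from $2^{-d}$ toward $1$, so it lies in at most $d+2$ pieces overall when the cubes are taken with multiplicity. This yields
\[
\sum_i|\mathcal I_i||\mathcal J_i||\mathcal K_i|\le(d+2)|\mathcal I^*||\mathcal J||\mathcal K^*|.
\]
The $\eta$-loss inequality follows from $|\mathcal I^*\setminus\mathcal I_i|\le\eta|\mathcal I^*|$ for each piece. The recursion has depth $O(d/\epsilon)$, and the number of branches per step is at most $e^{O(d^2)}$, which gives $\ell\le e^{c_\epsilon d^3}$. Every choice is deterministic, so the whole procedure is a deterministic algorithm.

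\emph{Main obstacle.} The delicate step is making the regularity of $G$ robust. $G_i=G[\mathcal I_i,\mathcal K_i]$ must stay regular for every column set $\mathcal K_i$ that Phase 2 generates. I would first try the AFKLM route. In Phase 1, define the termination condition in terms of every sufficiently large column subset, namely subsets of size at least $e^{-O(d^3)}|\mathcal K^*|$, rather than only $\mathcal K^*$ itself. If some large subset violates the condition, that violation is itself a density increment. The price is the larger $d^4/\eta$ exponent in item~4.
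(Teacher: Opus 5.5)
The paper does not prove this lemma; it quotes it from AFKLM (Lemma~5.3), so your sketch has to stand on its own. The architecture is right: a robust cube for $G$, then a decomposition of $F$ on $\mathcal J\times\mathcal K^*$, with $\mathcal I_i$ obtained by deleting low-degree rows. However, two steps fail as written.

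\textbf{Phase 2.} Suppose you carve out a denser subcube of $F$ and then recurse on the complement rectangles. Nothing increases along the complement branches, and their density can drop. So the recursion depth is not controlled by the $O(d/\epsilon)$ increment count, and no bound $\ell\le e^{c_\epsilon d^3}$ follows. Worse, nothing prevents leaf column sets $\mathcal K_i$ from being much smaller than $e^{-O(d^3)}|\mathcal K^*|$, which puts them outside the range where your Phase~1 robustness applies. Your accounting sentence (one piece per level, density rising geometrically, multiplicity $d+2$) also does not describe a partition recursion, since there every point lies in exactly one leaf.

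What works is extract-and-subtract on a fixed domain. Keep $\mathcal J\times\mathcal K^*$ and run the single-matrix increment on the current remainder $F'$. Stop when it reaches a subcube $\mathcal J_i\times\mathcal K_i$ of relative size $e^{-O_\epsilon(d^3)}$ and density at least $\E F'$, on which $F'$ is regular with min-degree. Output $F_i=F'[\mathcal J_i,\mathcal K_i]$ and replace $F'$ by $F'-F_i$. Once the density is at most $2^{-d}$, output the remainder on all of $\mathcal J\times\mathcal K^*$. Each extraction lowers $\log_2\E F'$ by at least $|\mathcal J_i||\mathcal K_i|/(|\mathcal J||\mathcal K^*|)$. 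This gives three things at once: $\ell\le e^{c_\epsilon d^3}$; the area bound $(d+2)|\mathcal J||\mathcal K^*|$, via the same potential $(d+2+\log_2\delta)V$ as in Step~1 of the proof of Lemma~\ref{lem:afklm_ab_bounds}; and largeness of every $\mathcal K_i$.

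\textbf{Phase 1.} Your claim that a violation on a large column subset is itself a density increment is false as stated. Put $\delta^*=\E G[\mathcal I^*,\mathcal K^*]$. Suppose $G$ is much sparser than $\delta^*$ on a large $\mathcal K'\subseteq\mathcal K^*$. Then a grid-norm or min-degree failure of $G[\mathcal I^*,\mathcal K']$ only yields a subcube denser than $\E G[\mathcal I^*,\mathcal K']$, which may still lie below $\delta^*$, so your potential does not move. Such $\mathcal K'$ cannot be avoided, because the $\mathcal K_i$ are determined by $F$ alone.

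You need a lower bound on the density of every column subset. For example, first delete the columns of $\mathcal K^*$ whose $G$-degree into $\mathcal I^*$ is below $(1-\theta)\delta^*$, with $\theta\ll\epsilon\eta$. This only raises the density, and afterwards every $\mathcal K'$ has density at least $(1-\theta)\delta^*$. A failure on $\mathcal K'$ then takes one of two forms. Either more than $\eta|\mathcal I^*|$ rows have degree below $(1-\epsilon)\E G[\mathcal I^*,\mathcal K']$, and deleting them gives density $(1+\Omega(\epsilon\eta))\delta^*$. Or, after that deletion, the grid norm is large, which gives a $(1+\Omega(\epsilon))$-increment.

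Since the row-deletion increments are only $1+\Omega(\epsilon\eta)$, there are $O(d/(\epsilon\eta))$ rounds, not $O(d/\epsilon)$. Each round costs a factor $e^{-O_\epsilon(d^3)}$ for the size of $\mathcal K'$. This is the actual source of $e^{-c_\epsilon d^4/\eta}$ in item~4, which your count of $O(d/\epsilon)$ increments does not explain.
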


The recursion below returns the pairs from $\fun{GoodCube}$, then recurses on $\mathcal I^*\setminus\mathcal I_i$ and on $G-G[\mathcal I^*,\mathcal K^*]$. This is Algorithm~5.5 of Abboud et al.~\cite{abboud2024new}, with its terminal step specified below.

\paragraph{$\fun{AB}$-decomposition.}
The $\fun{AB}$-decomposition procedure is as follows.

\noindent\textbf{Input:} $\mathcal I,\mathcal J,\mathcal K,F,G$, parameters $0<\epsilon<1$, integer $d\ge1$, and a counter $r\in\{0,\ldots,d\}$. The first call uses $r=0$; a call with an empty domain returns the empty list.

\noindent\textbf{Output:} A list of tuples $(\mathcal I_i,\mathcal J_i,\mathcal K_i,F_i,G_i)$.
\begin{enumerate}
\item If $\E G\le2^{-d}$, return $\{(\mathcal I,\mathcal J,\mathcal K,F,G)\}$.
\item If $r=d$, return the following list and stop.
\begin{enumerate}
\item If $|\mathcal I||\mathcal K|\ge2^d$, partition the $1$ entries of $G$ into groups of size $\lfloor2^{-d}|\mathcal I||\mathcal K|\rfloor$, except possibly the last group. For each group's indicator $G^{(j)}$, return $(\mathcal I,\mathcal J,\mathcal K,F,G^{(j)})$.
\item Otherwise, for each $G(x,z)=1$, put $\mathcal J_{x,z}=\{y:F(y,z)=1\}$. If this set is nonempty, return
\[
\bigl(\{x\},\mathcal J_{x,z},\{z\},F[\mathcal J_{x,z},\{z\}],G[\{x\},\{z\}]\bigr).
\]
\end{enumerate}
\item Set $\eta=1/[4(d+2)^2]$. Since $\E G>2^{-d}$, apply the GoodCube procedure of Lemma~\ref{lem:afklm_goodcube} to $(\mathcal I,\mathcal J,\mathcal K,F,G;\epsilon,\eta,d)$. For its output, put
\[
G_i=G[\mathcal I_i,\mathcal K_i],\qquad
G_i'=G[\mathcal I^*\setminus\mathcal I_i,\mathcal K_i],\qquad
G^*=G[\mathcal I^*,\mathcal K^*].
\]
\item Make the recursive calls
\[
\mathcal O_i=\text{$\fun{AB}$-decomposition}(\mathcal I^*\setminus\mathcal I_i,\mathcal J_i,\mathcal K_i,F_i,G_i';\epsilon,d,r+1)
\quad (i=1,\ldots,\ell),
\]
\[
\mathcal O^*=\text{$\fun{AB}$-decomposition}(\mathcal I,\mathcal J,\mathcal K,F,G-G^*;\epsilon,d,r).
\]
\item Return the list $(\mathcal I_i,\mathcal J_i,\mathcal K_i,F_i,G_i)_{i=1}^{\ell}$ followed by $\mathcal O_1,\ldots,\mathcal O_\ell,\mathcal O^*$.
\end{enumerate}

Step~2(a) returns at most $2^{d+1}$ tuples, each with $\E G^{(j)}\le2^{-d}$; both output matrices in step~2(b) are identically $1$. Fixing orders on the finite sets makes the choices deterministic.

The small omitted sets bound the total size of the returned domains, and the density removed at each call bounds the number of calls. The following is the bound of Abboud et al.~\cite[Theorem 3.1]{abboud2024new} with the stated terminal step.
\begin{lemma}[$\fun{AB}$-decomposition bounds]\label{lem:afklm_ab_bounds}
The first call of $\fun{AB}$-decomposition returns $k$ tuples satisfying
\[
\sum_{i=1}^{k}|\mathcal I_i||\mathcal J_i||\mathcal K_i|\le4(d+2)^2|\mathcal I||\mathcal J||\mathcal K|,
\qquad k\le\exp(c_\epsilon d^7).
\]
Every output is sparse in at least one matrix, with density at most $2^{-d}$, or both matrices are $(\epsilon,d)$-regular and satisfy $\epsilon$-min-degree.
\end{lemma}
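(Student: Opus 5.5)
The plan is to prove all three assertions by induction along the recursion tree of the $\fun{AB}$-decomposition. Each call is labelled by its counter $r$ and its domain size $N=|\mathcal I||\mathcal J||\mathcal K|$. I treat a call together with its chain of $\mathcal O^*$ calls, which share the same counter $r$ and the same domain, as a single \emph{stage}. The $\mathcal O_i$ calls form the children of that stage at level $r+1$.

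\emph{Structure of outputs.} Every returned tuple comes from one of four places.
\begin{itemize}
\item Step~1 returns a tuple whose $G$ has density at most $2^{-d}$.
\item Step~2(a) returns tuples with $\E G^{(j)}\le 2^{-d}$ by construction of the groups.
\item Step~2(b) returns two identically-$1$ matrices. For these, $\|\cdot\|_{U(2,d)}=1=\delta$ and every row average equals $\delta$, so they are $(\epsilon,d)$-regular and satisfy $\epsilon$-min-degree.
\item Step~5 returns the GoodCube tuples $(\mathcal I_i,\mathcal J_i,\mathcal K_i,F_i,G_i)$, which satisfy the dichotomy by item~2 of Lemma~\ref{lem:afklm_goodcube}.
\end{itemize}
All remaining outputs are produced by recursive calls, so the dichotomy follows by induction.

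\emph{One stage: the $\mathcal O^*$ chain.} Let $G^{(0)}=G,G^{(1)},\dots$ be the successive matrices in the chain, with GoodCube cubes $\mathcal I^*_j\times\mathcal K^*_j$, and write $a_j=|\mathcal I^*_j||\mathcal K^*_j|/(|\mathcal I||\mathcal K|)$. By item~3 of Lemma~\ref{lem:afklm_goodcube}, step $j$ removes at least $|\mathcal I^*_j||\mathcal K^*_j|\,\E G^{(j)}$ ones. Hence
\[
|G^{(j+1)}|\le(1-a_j)\,|G^{(j)}|.
\]
The chain continues only while $|G^{(j)}|>2^{-d}|\mathcal I||\mathcal K|$, and it starts with at most $|\mathcal I||\mathcal K|$ ones. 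Using $-\ln(1-a)\ge a$ for all steps but the last, and $a\le1$ for the last, gives
\[
\sum_j a_j\le d\ln 2+1\le d+1.
\]
Item~4 gives $a_j\ge e^{-c_\epsilon d^4/\eta}=e^{-c'd^6}$ since $\eta=1/[4(d+2)^2]$. So the chain has length at most $(d+1)e^{c'd^6}$.

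\emph{Size bound.} By item~3, the GoodCube tuples emitted in one stage have total size at most $(d+2)\sum_j|\mathcal I^*_j||\mathcal J||\mathcal K^*_j|\le(d+2)(d+1)N$. The children at level $r+1$ have total domain size at most
\[
\eta(d+2)(d+1)N\le \tfrac14 N.
\]
Summing this geometric series over levels $r=0,\dots,d-1$ gives at most $\tfrac43(d+2)(d+1)|\mathcal I||\mathcal J||\mathcal K|$ from non-terminal steps. The level-$d$ domains total at most $4^{-d}|\mathcal I||\mathcal J||\mathcal K|$, and the terminal steps cost as follows.
\begin{itemize}
\item Step~2(a) repeats the domain at most $2^{d+1}$ times, contributing at most $2^{d+1}4^{-d}\le 2$ times the original size.
\item Step~2(b) contributes $\sum_{x,z}|\mathcal J_{x,z}|$, which is at most the domain size.
\item Step~1 contributes at most the domain of its call, which is already counted.
\end{itemize}
Together these stay below $4(d+2)^2|\mathcal I||\mathcal J||\mathcal K|$.

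\emph{Count.} The number of tuples is bounded by combining three factors:
\begin{itemize}
\item each stage has at most $(d+1)e^{c'd^6}$ chain steps, each spawning $\ell\le e^{c_\epsilon d^3}$ children and tuples;
\item the depth is at most $d$;
\item each terminal call emits at most $2^{d+1}$ tuples (when $|\mathcal I||\mathcal K|\ge 2^d$), or at most $|\mathcal I||\mathcal K|<2^d$ tuples.
\end{itemize}
This gives
\[
k\le\bigl(e^{c''d^6}\bigr)^{d}\,2^{d+1}\le e^{c_\epsilon d^7}.
\]

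I expect the main obstacle to be the chain bound $\sum_j a_j\le d+1$. Successive cubes need not be disjoint, so the argument must go through the ones-removal potential rather than through geometry. It also relies on the density guarantee $\E G[\mathcal I^*,\mathcal K^*]\ge\E G$ being applied to the current residual matrix at every step of the chain.
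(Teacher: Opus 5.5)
Your proposal is correct and is essentially the paper's argument. Your ones-removal inequality $|G^{(j+1)}|\le(1-a_j)|G^{(j)}|$ along the $\mathcal O^*$ chain is the paper's $\log_2\delta^*\le\log_2\delta-V^*/V$, and your factor-$\tfrac14$ shrinkage of the children's total volume (from $\eta(d+2)(d+1)\le\tfrac14$) does the job of the weight $2^{r+2}$ in the paper's potential $2^{r+2}(d+2)(d+2+\log_2\delta)V$. So you have simply unrolled the paper's lexicographic induction on $(d-r,|G|)$, with a slightly sharper chain length $(d+1)e^{O_\epsilon(d^6)}$ in place of $2^d e^{O_\epsilon(d^6)}$.

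One small bookkeeping slip: the step-1 tuples that end each chain are not "already counted" in the $\tfrac43(d+2)(d+1)|\mathcal I||\mathcal J||\mathcal K|$ term. They add at most another $\tfrac43|\mathcal I||\mathcal J||\mathcal K|$ in total, which still fits under $4(d+2)^2|\mathcal I||\mathcal J||\mathcal K|$.
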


\begin{proof}[Proof of Lemma~\ref{lem:afklm_ab_bounds}]
\leavevmode
\paragraph{Step 1. Bound the total size of the returned domains.}
Write $V=|\mathcal I||\mathcal J||\mathcal K|$ and $\delta=\E G$ for the current call. The sum $\sum_i|\mathcal I_i||\mathcal J_i||\mathcal K_i|$ over its returned tuples is at most $V$ if $\delta\le2^{-d}$, and otherwise at most
\begin{equation}\label{eq:ab_volume_potential}
2^{r+2}(d+2)(d+2+\log_2\delta)V.
\end{equation}
Induct lexicographically on $(d-r,|\{(x,z):G(x,z)=1\}|)$. The lists returned in steps~1 and~2 of $\fun{AB}$-decomposition satisfy $\sum_i|\mathcal I_i||\mathcal J_i||\mathcal K_i|\le2^{d+1}V$. In a recursive call put $V^*=|\mathcal I^*||\mathcal J||\mathcal K^*|$. The GoodCube size bounds in Lemma~\ref{lem:afklm_goodcube}(3) are
\[
\sum_i|\mathcal I_i||\mathcal J_i||\mathcal K_i|\le(d+2)V^*,\qquad
\sum_i|\mathcal I^*\setminus\mathcal I_i||\mathcal J_i||\mathcal K_i|\le\eta(d+2)V^*.
\]
For each recursive call at $r+1$, the induction hypothesis bounds the total size of its returned domains by $2^{r+3}(d+2)^2$ times the product of the sizes of its three input sets. Hence the pairs supplied by $\fun{GoodCube}$ and the outputs in $\mathcal O_1,\ldots,\mathcal O_\ell$ contribute at most
\[
(d+2)V^*+2^{r+3}(d+2)^2\eta(d+2)V^*\le2^{r+2}(d+2)V^*.
\]
GoodCube also guarantees $\E G[\mathcal I^*,\mathcal K^*]\ge\delta$. Thus the remaining density satisfies
\[
\delta^*:=\E(G-G^*)=\delta-\frac{V^*}{V}\E G[\mathcal I^*,\mathcal K^*]\le\delta(1-V^*/V).
\]
If $\delta^*>2^{-d}$, including the outputs in $\mathcal O^*$ gives \eqref{eq:ab_volume_potential}, since
\[
\log_2\delta^*\le\log_2\delta-V^*/V
\quad\Longrightarrow\quad
V^*+(d+2+\log_2\delta^*)V\le(d+2+\log_2\delta)V.
\] If $\delta^*\le2^{-d}$, their the sum over all returned tuples is at most $2^{r+2}(d+2)V^*+V$, which is also at most \eqref{eq:ab_volume_potential} since $d+2+\log_2\delta>2$. Taking $r=0$ proves the claimed bound on this sum.

\paragraph{Step 2. Bound the number of outputs.}
The bounds on $|\mathcal I^*||\mathcal K^*|$ and $\ell$ in Lemma~\ref{lem:afklm_goodcube}(4), with $\eta=1/[4(d+2)^2]$, give numbers $M=\exp(O_\epsilon(d^6))\ge1$ and $L=\exp(O_\epsilon(d^3))\ge1$ such that
\[
\frac{V^*}{V}\ge\frac1M,\qquad \ell\le L,\qquad
\delta-\delta^*\ge\delta\frac{V^*}{V}>\frac{2^{-d}}M.
\]
Let $k_r$ be the maximum number of tuples returned by a call with counter $r$. Each chain of calls on $G-G^*$ has at most $2^dM$ nonterminal calls, so
\[
k_d\le2^{d+1},\qquad k_r\le2^dML(1+k_{r+1})+2^{d+1}\quad(0\le r<d).
\]
Induction on $d-r$ gives
\[
k\le k_0\le\bigl(2^{d+2}M(L+1)\bigr)^{d+1}=\exp(O_\epsilon(d^7)).
\]
In $\fun{AB}$-decomposition, steps~1 and~2(a) return pairs with a matrix of density at most $2^{-d}$, while step~2(b) returns two all-$1$ matrices. The pairs supplied by $\fun{GoodCube}$ are sparse or regular by Lemma~\ref{lem:afklm_goodcube}(2). The bounds on recursive calls also show that $\fun{AB}$-decomposition terminates.
\end{proof}

We now prove the regularity decomposition on cylinder intersections.
\begin{proof}[Proof of Theorem~\ref{thm:regularity_decomposition}]\leavevmode
\paragraph{Step 1. Preserve $F(y,z)G(x,z)$ at every triple.}
GoodCube preserves $F$ on $\mathcal J\times\mathcal K^*$ by Lemma~\ref{lem:afklm_goodcube}(1):
\[
\sum_iF_i(y,z)=F(y,z)\mathbf1_{\mathcal K^*}(z),\qquad
F_i(y,z)\bigl(G_i(x,z)+G_i'(x,z)\bigr)=F_i(y,z)G(x,z)\mathbf1_{\mathcal I^*}(x).
\]
The second identity follows from $\mathcal I^*=\mathcal I_i\sqcup(\mathcal I^*\setminus\mathcal I_i)$. Summing it and adding $F(G-G^*)$ gives
\begin{equation}\label{eq:ab_triple_split}
F(y,z)G(x,z)=\sum_iF_i(y,z)G_i(x,z)+\sum_iF_i(y,z)G_i'(x,z)+F(y,z)(G-G^*)(x,z).
\end{equation}
The last two terms are the recursive products in step~4 of $\fun{AB}$-decomposition. In step~2(a) of $\fun{AB}$-decomposition, the indicators $G^{(j)}$ sum to $G$; in step~2(b), each triple $(x,y,z)$ with $F(y,z)G(x,z)=1$ appears only in the output indexed by $(x,z)$. Induction therefore gives
\begin{equation}\label{eq:ab_pointwise_identity}
F(y,z)G(x,z)=\sum_{i=1}^{k}F_i(y,z)G_i(x,z).
\end{equation}

\paragraph{Step 2. Return functions on $\mathcal X,\mathcal Y,\mathcal Z$.}
Let $\mathcal I,\mathcal J,\mathcal K$ index the equal-measure partitions defining the finite-type functions $\fun f,\fun g,\fun h$. Write $x,y,z$ for the labels of the sets containing $X,Y,Z$, and define the Boolean matrices
\[
F(y,z)=\fun f(Y,Z),\qquad G(x,z)=\fun g(X,Z),\qquad H(x,y)=\fun h(X,Y).
\]
Apply $\fun{AB}$-decomposition with the integer $q=d+1$ in place of $d$ and $\min\{\epsilon,1/2\}\in(0,1)$ in place of $\epsilon$. For each returned tuple, let $\mathcal X_i,\mathcal Y_i,\mathcal Z_i$ be the unions of the sets indexed by $\mathcal I_i,\mathcal J_i,\mathcal K_i$, respectively, and set
\[
\fun f_i(Y,Z)=F_i(y,z),\qquad \fun g_i(X,Z)=G_i(x,z),\qquad \mathcal B_i=\mathcal X_i\times\mathcal Y_i\times\mathcal Z_i.
\]
Their supports have the required form. Multiplying \eqref{eq:ab_pointwise_identity} by $H(x,y)$ gives
\[
\sum_i\fun f_i(Y,Z)\fun g_i(X,Z)\fun h(X,Y)=F(y,z)G(x,z)H(x,y)=\fun c(X,Y,Z).
\]
All terms are Boolean, so they are also pairwise disjoint wherever they equal $1$.

Equal cell measures identify the normalized integrals with uniform finite averages. Hence the densities, grid norms, and averages defining min-degree agree with those of $F_i,G_i$. Lemma~\ref{lem:afklm_ab_bounds} classifies each output: either a density is at most $2^{-q}<2^{-d}$, or both functions are regular and satisfy min-degree. In the latter case, monotonicity of probability-space $L^p$ norms gives
\[
\|\fun f_i\|_{U(2,d),i}\le\|\fun f_i\|_{U(2,q),i}\le(1+\epsilon)\delta_{\fun f,i},\qquad
\|\fun g_i\|_{U(2,d),i}\le\|\fun g_i\|_{U(2,q),i}\le(1+\epsilon)\delta_{\fun g,i},
\]
and increasing the min-degree parameter to $\epsilon$ only weakens its lower bound.

The same lemma bounds the finite total size and number of outputs, which translate to
\[
\sum_i\mu_0(\mathcal B_i)=\frac{\sum_i|\mathcal I_i||\mathcal J_i||\mathcal K_i|}{|\mathcal I||\mathcal J||\mathcal K|}\le4(d+3)^2=O(d^2),
\qquad k\le\exp(c_\epsilon d^7),
\]
after increasing $c_\epsilon$. This proves all three claims.
\end{proof}

\section{The grid norm bound}\label{app:agl}

Arunachalam, Girish and Lifshitz~\cite{arunachalam2023one} bound how much the constraint $ABCD=I_m$ can increase the average of $\fun u(A,C)\fun w(B,D)$. Applying this estimate to the two matrix products in $W=PE_1QE_2$ will give the grid norm bound.

\begin{theorem}[Arunachalam, Girish and Lifshitz~{\cite[Lemma~5.2]{arunachalam2023one}}]\label{thm:AGL_product}
There is a universal constant $\kappa>0$ such that the following holds. Let $A,B,C,D$ be independent Haar matrices in $\sugroup(m)$. For measurable Boolean functions $\fun u,\fun w:\sugroup(m)^2\to\{0,1\}$ with $\E[\fun u],\E[\fun w]\ge e^{-\kappa\sqrt m}$, we have
\[
\Big|\E[\fun u(A,C)  \fun w(B, (ABC)^{-1}) ] -\E[\fun u(A,C) \fun w(B,D)] \Big| \le \frac{\E[\fun u]\E[\fun w]}{30}.
\]
\end{theorem}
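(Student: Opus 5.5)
The plan is to prove the estimate by noncommutative Fourier analysis on $\sugroup(m)\times\sugroup(m)$, combined with a hypercontractive level-$d$ inequality for $\sugroup(m)$.

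\textbf{Step 1 (rewrite as a convolution).} Set $D=(ABC)^{-1}$. Then $(A,B,C,D)$ is Haar-distributed on the variety $\{ABCD=I_m\}$, and any three of the four matrices are independent Haar. Conditioning on $(A,C)$, the constrained expectation becomes
\[
\E_{A,C}\bigl[\fun u(A,C)\,(T\fun w)(A,C)\bigr],
\]
where $(T\fun w)(A,C)=\E_B[\fun w(B,C^{-1}B^{-1}A^{-1})]$. The operator $T$ is an averaging operator: it is a convolution-type operator that commutes with the two-sided translations of $\sugroup(m)$. The unconstrained expectation is $\E[\fun u]\E[\fun w]$. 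So the quantity to bound is
\[
\bigl|\langle \fun u-\E\fun u,\;T(\fun w-\E\fun w)\rangle\bigr|.
\]

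\textbf{Step 2 (Fourier expansion).} Expand $\fun u$ and $\fun w$ along the Peter--Weyl decomposition of $L^2(\sugroup(m)^2)$ into isotypic pieces $\rho\otimes\sigma$ of irreducibles. By Schur's lemma, $T$ acts on each isotypic piece through matrix coefficients. Because $B$ is integrated out, a piece contributes only when the two irreps are linked, and it then carries an operator-norm factor of at most $1/\dim\rho$. Using Cauchy--Schwarz within each piece, I will aim for
\[
|\text{difference}|\le\sum_{(\rho,\sigma)\ne(1,1)}\frac{\|\fun u^{=\rho,\sigma}\|_2\,\|\fun w^{=\rho,\sigma}\|_2}{\min(\dim\rho,\dim\sigma)}.
\]

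\textbf{Step 3 (level inequality, the main obstacle).} Group the irreps by degree: an irrep of degree $\ell$ appears in polynomials of degree $\ell$ in the matrix entries. Such irreps have dimension at least $(cm/\ell)^{\ell}$ for $\ell\le\sqrt m$. Higher degrees, up to the threshold where the dimension drops, are handled by Plancherel together with the dimension bound $\dim\ge e^{c\sqrt m}$. For degree at most $\ell$, I will invoke the hypercontractivity and level-$\ell$ inequality for $\sugroup(m)$ due to Filmus, Kindler, Lifshitz and Minzer. It states that a Boolean function of density $\delta$ has degree-$\ell$ weight at most $\delta^2(C\log(1/\delta)/\ell)^{\ell}$, valid for $\ell\lesssim\log(1/\delta)\lesssim\sqrt m$.

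Proving this inequality, or verifying that it applies for $\log(1/\delta)$ as large as $\kappa\sqrt m$, is where I expect the difficulty. The restriction $\E\fun u,\E\fun w\ge e^{-\kappa\sqrt m}$ is exactly what keeps us in this range.

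\textbf{Step 4 (summing the series).} With $\delta_u=\E\fun u$ and $\delta_w=\E\fun w$, the level-$\ell$ contribution is at most
\[
\delta_u\delta_w\left(\frac{C\sqrt{\log(1/\delta_u)\log(1/\delta_w)}}{\ell}\right)^{\ell}\left(\frac{\ell}{cm}\right)^{\ell}\le\delta_u\delta_w\left(\frac{C'\kappa\sqrt m}{m}\right)^{\ell}.
\]
This is geometric in $\ell$ with ratio $O(\kappa/\sqrt m)$. For degrees above the range where the level inequality applies, the trivial Plancherel bound $\|\fun u\|_2\|\fun w\|_2=\sqrt{\delta_u\delta_w}$, multiplied by the inverse dimension $e^{-c\sqrt m}$, is at most $\delta_u\delta_w\,e^{(\kappa-c)\sqrt m}$. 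This is tiny relative to $\delta_u\delta_w$ once $\kappa<c$. Summing both regimes and choosing $\kappa$ small gives total error at most $\E[\fun u]\E[\fun w]/30$.
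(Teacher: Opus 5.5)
The paper does not prove this statement; it quotes it from Arunachalam--Girish--Lifshitz (Lemma~5.2 there), so your proposal has to stand on its own. Its architecture is the right one: Peter--Weyl, Schur orthogonality for the $B$-average, dimension lower bounds, and a level inequality. But Step~2 as displayed is wrong in a way that would sink Step~4. On matrix coefficients, $T\bigl(\rho(B)_{ij}\sigma(D)_{kl}\bigr)=0$ unless $\sigma\cong\rho$, in which case it equals $(\dim\rho)^{-1}\rho(C^{-1})_{kj}\rho(A^{-1})_{il}$. So $T$ maps the $\rho\otimes\rho$ component of $\fun w$ onto the $\bar\rho\otimes\bar\rho$ component in $(A,C)$, as $(\dim\rho)^{-1}$ times an isometry, and annihilates every other component. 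The correct bound is therefore $\sum_{\rho\neq1}(\dim\rho)^{-1}\|\fun u^{=\bar\rho\otimes\bar\rho}\|_2\|\fun w^{=\rho\otimes\rho}\|_2$. Your $\min(\dim\rho,\dim\sigma)$ over all pairs gives the pieces $(\rho,1)$ and $(1,\sigma)$ a factor $1$, and no level inequality can absorb those.

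The bookkeeping in Step~4 is also off. A surviving piece has level $2\ell$ on $\sugroup(m)^2$ but gains only one factor $(\dim\rho)^{-1}\le(\ell/cm)^\ell$. The level-$\ell$ term is thus $\delta_u\delta_w\bigl(C\log(1/\delta_u)\log(1/\delta_w)/(\ell m)\bigr)^{\ell}\le\delta_u\delta_w(C\kappa^2/\ell)^\ell$. The ratio is $O(\kappa^2)$, not $O(\kappa/\sqrt m)$.

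That still sums, but it shows $e^{-\kappa\sqrt m}$ is exactly where the method breaks, so both inputs must be sharp in the exponent. First, you need $\dim\rho\ge(cm/\ell)^\ell$, with level counted as degree in the entries of $U$ and $\bar U$; otherwise the dual $\bar V\cong\wedge^{m-1}V$ of the defining representation would have degree $m-1$. Second, you need $\|\fun f^{=k}\|_2^2\le\delta^2(C\log(1/\delta)/k)^k$ for $k\lesssim\log(1/\delta)$, with $\log(1/\delta)$ up to order $\sqrt m$. This must hold on the product group $\sugroup(m)^2$, hence in a tensorizable (hypercontractive) form. A constant-factor loss in either exponent moves the threshold. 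For instance, the Bakry--\'Emery route has $\mathrm{Ric}=m/2$ while the Casimir of the defining representation is $\approx m$, so it gives exponent $2k$ and only reaches $e^{-\kappa m^{1/4}}$.

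This sharp inequality is the real content of the lemma, and you cite the wrong source. Filmus--Kindler--Lifshitz--Minzer treat the symmetric group; the compact-Lie-group analogue is later work of Ellis, Kindler, Lifshitz and Minzer, and you would have to check its range and product form.

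Finally, Step~4 skips the levels $\log(1/\delta)\lesssim\ell\lesssim\sqrt m$, for example when $\delta$ is constant. There neither the level inequality nor $\dim\ge e^{c\sqrt m}$ applies. In that range you must pair $\dim\rho\ge(cm/\ell)^\ell$ with $(\delta_u\delta_w)^{-1/2}\le e^{O(\ell)}$. The same treatment is needed in the mixed range where $\delta_u$ and $\delta_w$ differ.
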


Write $\mathcal G=\sugroup(m)$ and $\mathcal H=\mathcal G^2$, with coordinatewise multiplication. Let $\kappa_0>0$ be the constant in Theorem~\ref{thm:AGL_product}, and put $\lambda_0=31/30$ and $\lambda=2\lambda_0$.
In either group, conditioning on $ABCD=T$ means sampling $A,B,C$ independently by Haar measure and setting $D=(ABC)^{-1}T$. Unconditioned averages below use independent Haar pairs.

The next lemma removes the density condition by bounding small averages directly, then applies the resulting estimate twice to handle both matrix products.
\begin{lemma}[Weighted product estimates]\label{lem:AGL_weighted_products}
\begin{enumerate}
\item For $T\in\mathcal G$ and measurable $\fun a,\fun b:\mathcal G^2\to[0,1]$,
\begin{equation}\label{eq:AGL_weighted_one}
\E[\fun a(A,C)\fun b(B,D)\mid ABCD=T]
\le \lambda_0\E[\fun a]\E[\fun b]+e^{-\kappa_0\sqrt m}.
\end{equation}
\item For $L\in\mathcal H$ and measurable $\fun a,\fun b:\mathcal H^2\to[0,1]$,
\begin{equation}\label{eq:AGL_weighted_two}
\E[\fun a(A,C)\fun b(B,D)\mid ABCD=L]
\le\lambda_0^2\E[\fun a]\E[\fun b]+(1+\lambda_0)e^{-\kappa_0\sqrt m}.
\end{equation}
\end{enumerate}
\end{lemma}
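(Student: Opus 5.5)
Part 1 goes from Boolean to $[0,1]$-valued functions and removes the density hypothesis of Theorem~\ref{thm:AGL_product}; part 2 then applies part 1 twice, once in each coordinate of $\mathcal H$.

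\textbf{Part 1.} First reduce to $T=I$. Substitute $D'=DT^{-1}$ and set $\fun b'(B,D')=\fun b(B,D'T)$. Right translation preserves Haar measure, so $\E[\fun b']=\E[\fun b]$, and the constraint $ABCD=T$ becomes $ABCD'=I$.

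Next, pass from $[0,1]$-valued to Boolean functions by the layer-cake representation
\[
\fun a=\int_0^1\mathbf 1[\fun a>s]\,ds,\qquad \fun b=\int_0^1\mathbf 1[\fun b>t]\,dt .
\]
Both sides of \eqref{eq:AGL_weighted_one} are bilinear in $(\fun a,\fun b)$, except for the additive error. So it suffices to prove, for Boolean $\fun u,\fun w$,
\[
\E[\fun u(A,C)\fun w(B,D)\mid ABCD=I]\le\lambda_0\E[\fun u]\E[\fun w]+e^{-\kappa_0\sqrt m}.
\]
Integrating over $s,t\in[0,1]$ then gives the claim, because the error integrates to exactly $e^{-\kappa_0\sqrt m}$.

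Split the Boolean case in two.
\begin{itemize}
\item If both $\E[\fun u],\E[\fun w]\ge e^{-\kappa_0\sqrt m}$, Theorem~\ref{thm:AGL_product} applies. Since $(A,B,C)$ is independent of an independent $D$, the unconditioned term equals $\E[\fun u]\E[\fun w]$. The theorem therefore bounds the left side by $(1+1/30)\E[\fun u]\E[\fun w]=\lambda_0\E[\fun u]\E[\fun w]$.
\item Otherwise one density, say $\E[\fun u]$, is below $e^{-\kappa_0\sqrt m}$. Under the constraint, $(A,C)$ is still a pair of independent Haar matrices, since $D$ is determined by $A,B,C$. Hence the left side is at most $\E[\fun u]<e^{-\kappa_0\sqrt m}$. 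If instead $\E[\fun w]$ is small, use that $(B,D)$ is marginally Haar under the constraint; this holds because $D=(ABC)^{-1}$ with $A$ Haar independent of $B,C$.
\end{itemize}

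\textbf{Part 2.} Write $A=(A^1,A^2)$, and similarly for $B,C,D,L$. The condition $ABCD=L$ means the two coordinate constraints $A^jB^jC^jD^j=L^j$ hold independently.

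Condition on the second coordinates $(A^2,B^2,C^2,D^2)$ subject to their constraint. Define the partial functions
\[
\fun a_2(A^1,C^1)=\fun a\bigl((A^1,A^2),(C^1,C^2)\bigr),\qquad \fun b_2(B^1,D^1)=\fun b\bigl((B^1,B^2),(C^1,D^2)\bigr)\text{-analogue},
\]
more precisely $\fun b_2(B^1,D^1)=\fun b\bigl((B^1,B^2),(D^1,D^2)\bigr)$. Both are $[0,1]$-valued. Part 1 in the first coordinate gives
\[
\text{LHS}\le\lambda_0\,\E_{2}\bigl[\bar{\fun a}(A^2,C^2)\,\bar{\fun b}(B^2,D^2)\mid A^2B^2C^2D^2=L^2\bigr]+e^{-\kappa_0\sqrt m},
\]
where $\bar{\fun a}(A^2,C^2)=\E_{A^1,C^1}[\fun a]$ and $\bar{\fun b}(B^2,D^2)=\E_{B^1,D^1}[\fun b]$ are again $[0,1]$-valued. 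Applying part 1 once more in the second coordinate bounds the inner expectation by $\lambda_0\E[\fun a]\E[\fun b]+e^{-\kappa_0\sqrt m}$. This yields the total bound $\lambda_0^2\E[\fun a]\E[\fun b]+(1+\lambda_0)e^{-\kappa_0\sqrt m}$.

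\textbf{Main obstacle.} The step needing care is the measure-theoretic justification of the conditioning. Conditioning on $ABCD=T$ is defined here as the explicit parametrization $D=(ABC)^{-1}T$, so every conditional expectation should be rewritten as an integral over independent Haar $A,B,C$. With that, Fubini justifies both the layer-cake exchange and the coordinatewise iteration in part 2. The other step to check is the marginal claims used in the sparse case of part 1, which follow from the invariance of Haar measure.
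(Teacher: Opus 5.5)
Your proposal is correct and follows essentially the same route as the paper. For part~1 you translate by $T$ so that Theorem~\ref{thm:AGL_product} applies, split the Boolean case at the density threshold using the product Haar marginals of $(A,C)$ and $(B,D)$, and extend to $[0,1]$-valued functions by the layer-cake formula; for part~2 you apply part~1 twice coordinatewise, first to the first coordinates with the second fixed and then to the averaged functions.
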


\begin{proof}
For~(1), first let $\fun a,\fun b$ be Boolean. If both averages are at least $e^{-\kappa_0\sqrt m}$, apply the relative-error bound in Theorem~\ref{thm:AGL_product} to $\fun a$ and $(B,D)\mapsto\fun b(B,DT)$. Haar invariance preserves $\E[\fun b]$, giving
\[
\E[\fun a(A,C)\fun b(B,D)\mid ABCD=T]
\le\lambda_0\E[\fun a]\E[\fun b].
\]
Otherwise, both $(A,C)$ and $(B,D)$ still have product Haar marginals, so
\[
\E[\fun a\fun b\mid ABCD=T]
\le\min\{\E[\fun a],\E[\fun b]\}<e^{-\kappa_0\sqrt m}.
\]
For functions in $[0,1]$, integrate the Boolean inequality using
$\fun a=\int_0^1\mathbf1[\fun a\ge s]\,ds$ and
$\fun b=\int_0^1\mathbf1[\fun b\ge t]\,dt$.

For~(2), write $A=(A_1,A_2)$, and similarly for $B,C,D,L$. Fix the second coordinates and apply~(1) to the first coordinates. The resulting functions
$\fun a_2(A_2,C_2)=\E_{A_1,C_1}[\fun a(A,C)]$ and
$\fun b_2(B_2,D_2)=\E_{B_1,D_1}[\fun b(B,D)]$ lie in $[0,1]$.
Averaging over the second coordinates and applying~(1) again gives
\[
\E[\fun a\fun b\mid ABCD=L]
\le\lambda_0\E[\fun a_2\fun b_2\mid A_2B_2C_2D_2=L_2]+e^{-\kappa_0\sqrt m}
\le\lambda_0^2\E[\fun a]\E[\fun b]+(1+\lambda_0)e^{-\kappa_0\sqrt m}.\qedhere
\]
\end{proof}

We now prove Theorem~\ref{thm:Grid_norm} using Lemma~\ref{lem:AGL_weighted_products} in the following section.

\subsection{Proof of Theorem~\ref{thm:Grid_norm}}\label{app:proof-grid-norm}

\paragraph{Sampling $\mu_1$ through $P,Q$.}
For the proof below, write $A=(A_1,A_2)$ and similarly for $B,C,D$.
Equivalently, sample $\mu_1$ by first sampling independent Haar matrices $P,Q$ and putting $L=(P,Q)$. Given $L$, sample $E_1$ by Haar measure and set $E_2=(PE_1Q)^{-1}$; independently, sample $A,B,C$ by product Haar measure and set $D=(ABC)^{-1}L$.
Thus $(X,Y)$ and $Z$ are independent conditional on $L$, with
\[
ABCD=L,\qquad PE_1QE_2=I_m,\qquad
L\sim\nu^{\otimes2},\quad Z\sim\nu^{\otimes2}.
\]
Lemma~\ref{lem:AGL_weighted_products}(1) therefore controls functions of $(P,Q)$ and $(E_1,E_2)$, while part~(2) controls functions of
$X=(A_1,C_1,A_2,C_2)$ and $Y=(B_1,D_1,B_2,D_2)$ conditional on $L$.

We expand the $d$-th moment using conditional samples of $Z$. The product bounds above replace the dependent averages by independent ones, whose second moments give the grid norms.
Fix $\fun u:\mathcal G^2\to[0,1]$.

\paragraph{Step 1. Bound $\E_L[(\E[\fun u(Z)\mid L])^d]$.}
Define
$\fun r(L)=\E[\fun u(Z)\mid L]$ and
$\mathcal S=\{L:\fun r(L)>2\lambda_0\E[\fun u]\}$.
Lemma~\ref{lem:AGL_weighted_products}(1) controls functions of the alternating factors in $PE_1QE_2=I_m$. Apply it with
$(A,B,C,D)=(P,E_1,Q,E_2)$, $T=I_m$, and
$(\fun a,\fun b)=(\mathbf1_{\mathcal S},\fun u)$.
Both functions lie in $[0,1]$, and the displayed product condition holds;
hence
\[
\E[\fun r\mathbf1_{\mathcal S}]
\le\lambda_0\E[\fun u]\Pr[L\in\mathcal S]+e^{-\kappa_0\sqrt m}
\le\tfrac12\E[\fun r\mathbf1_{\mathcal S}]+e^{-\kappa_0\sqrt m}.
\]
Since $0\le\fun r\le1$, for every integer $d\ge1$ we obtain
\begin{equation}\label{eq:AGL_outer_moment}
\E_L\!\left[\big(\E[\fun u(Z)\mid L]\big)^d\right]
\le \min\{1,(2\lambda_0\E[\fun u])^d\}+2e^{-\kappa_0\sqrt m}.
\end{equation}

\paragraph{Step 2. Bound the moments of the products of $\fun f$ and $\fun g$.}
Given $L$, let $Z^{(1)},\ldots,Z^{(d)}$ be independent samples from
the conditional distribution of $Z$; write their tuple as $Z^{[d]}$.
For fixed $L,Z^{[d]}$, Lemma~\ref{lem:AGL_weighted_products}(2) replaces the average of a function of $X$ times a function of $Y$ by their independent averages. Apply it to
$\prod_{j=1}^d\fun g(X,Z^{(j)})$ and
$\prod_{j=1}^d\fun f(Y,Z^{(j)})$.
These are Boolean functions of $(A,C)$ and $(B,D)$, respectively.
Given $L$, the samples $Z^{(1)},\ldots,Z^{(d)}$ are mutually independent and independent of $(X,Y)$. Thus
\begin{align}
&\big\|\E_{\mu_1}[\fun f\fun g\mid X,Y]\big\|_d^d\notag\\
&=\E_{L,Z^{[d]}}\!\left[
 \E\!\left[\prod_{j=1}^d\fun f(Y,Z^{(j)})\fun g(X,Z^{(j)})
 \mid L,Z^{[d]}\right]\right]\notag\\
&\le\lambda_0^2\E_{L,Z^{[d]}}\!\left[
 \left(\E_Y\!\left[\prod_{j=1}^d\fun f(Y,Z^{(j)})\right]\right)
 \left(\E_X\!\left[\prod_{j=1}^d\fun g(X,Z^{(j)})\right]\right)
 \right]+(1+\lambda_0)e^{-\kappa_0\sqrt m}.
 \label{eq:AGL_inner_moment}
\end{align}
Here the inner averages on the last line use $X\sim\nu^{\otimes4}$ and $Y\sim\nu^{\otimes4}$ independently.
For independent $Y,Y'\sim\nu^{\otimes4}$, apply the conditional-moment bound~\eqref{eq:AGL_outer_moment}
to $\fun u(Z)=\fun f(Y,Z)\fun f(Y',Z)$ and average over $Y,Y'$:
\begin{align}
&\E_{L,Z^{[d]}}\!\left[
 \left(\E_Y\!\left[\prod_{j=1}^d\fun f(Y,Z^{(j)})\right]\right)^2\right]\notag\\
&=\E_{Y,Y',L}\!\left[
 \big(\E[\fun f(Y,Z)\fun f(Y',Z)\mid L,Y,Y']\big)^d\right]\notag\\
&\le\min\{1,(2\lambda_0)^d\|\fun f\|_{U(2,d)}^{2d}\}
   +2e^{-\kappa_0\sqrt m}.
\label{eq:AGL_square_moment}
\end{align}
The last line uses
$\E[\min\{1,V\}]\le\min\{1,\E[V]\}$ for a nonnegative random variable $V$, and the definition of
$\|\cdot\|_{U(2,d)}$; the same bound holds for $\fun g$.

\paragraph{Step 3. Bound $\|\E_{\mu_1}[\fun f\fun g\mid X,Y]\|_d$.}
For $0\le a,b\le1$ and $q\ge0$,
$\sqrt{(a+q)(b+q)}\le\sqrt{ab}+2\sqrt q+q$.
The expectation of the product in~\eqref{eq:AGL_inner_moment} is at most the geometric mean of the two second moments, by Cauchy--Schwarz. Substitute their bounds~\eqref{eq:AGL_square_moment} and the preceding inequality with
$q=2e^{-\kappa_0\sqrt m}$.  Since
$2\sqrt2\lambda_0^2+2\lambda_0^2+1+\lambda_0<8$, we get
\[
\big\|\E_{\mu_1}[\fun f\fun g\mid X,Y]\big\|_d^d
\le\lambda_0^2(2\lambda_0)^d
 \big(\|\fun f\|_{U(2,d)}\|\fun g\|_{U(2,d)}\big)^d
 +8e^{-\kappa_0\sqrt m/2}.
\]
Recall $\lambda=2\lambda_0$.  Taking the $d$th root gives
\begin{equation}\label{eq:AGL_root_bound}
\big\|\E_{\mu_1}[\fun f\fun g\mid X,Y]\big\|_d
\le\lambda^{1+2/d}\|\fun f\|_{U(2,d)}\|\fun g\|_{U(2,d)}
 +(8e^{-\kappa_0\sqrt m/2})^{1/d}.
\end{equation}
If $\sqrt m\ge4\log(8)/\kappa_0$, then
\[
(8e^{-\kappa_0\sqrt m/2})^{1/d}\le e^{-\kappa_0\sqrt m/(4d)}.
\]
For the remaining dimensions, put
$p=\|\fun f\|_{U(2,d)}\|\fun g\|_{U(2,d)}$.
By Claim~\ref{claim:haar_measure_marginal}, $\mu_1(\fun f)=\mu_0(\fun f)$ and $\mu_1(\fun g)=\mu_0(\fun g)$. Since $\fun f\fun g\in\{0,1\}$ and each grid norm dominates the corresponding average, Jensen's inequality gives
\[
\big\|\E_{\mu_1}[\fun f\fun g\mid X,Y]\big\|_d^d
\le\min\{\mu_0(\fun f),\mu_0(\fun g)\}\le\sqrt p.
\]
If $p\ge1/\lambda$, the main term in the claimed bound is at least $1$.
Otherwise, the last display gives
$\|\E_{\mu_1}[\fun f\fun g\mid X,Y]\|_d\le p^{1/(2d)}<\lambda^{-1/(2d)}$.
Choose
$0<\kappa\le\min\{\kappa_0/4,\kappa_0\log(\lambda)/(8\log(8))\}$;
then $\lambda^{-1/(2d)}\le e^{-\kappa\sqrt m/d}$ in these remaining
dimensions.  This proves the stated bound for every $d\ge2$.

\section{Additional Proofs}\label{app:additional-proofs}

\begin{proof}[Proof of Claim~\ref{claim:unitarization}]
The singular values of $U$ lie in $\big[\sqrt{1-\sigma},\sqrt{1+\sigma} \big]$, since
$\|U^\dagger U-I_m\|_{\operatorname{op}}\le\|U^\dagger U-I_m\|_{\mathrm F}\le\sigma<1$.
Thus $U$ is invertible, and its polar factor satisfies
\[
\widetilde U^\dagger\widetilde U=I_m,\qquad
\|\widetilde U-U\|_{\operatorname{op}}
=\max_{s\text{ a singular value of }U}\frac{|s^2-1|}{s+1}\le\sigma,
\qquad \|U\|_{\operatorname{op}}\le\sqrt{1+\sigma}.
\]
Write $U_1,\ldots,U_{10}$ for the factors in $W$, in their product order. Replacing one factor at a time gives
\[
\widetilde W-W
=\sum_{r=1}^{10}\widetilde U_1\cdots\widetilde U_{r-1}
(\widetilde U_r-U_r)U_{r+1}\cdots U_{10}.
\]
The preceding bounds and $|\trace M|\le m\|M\|_{\operatorname{op}}$ now give
\[
\frac{|\operatorname{Re}\trace\widetilde W-\operatorname{Re}\trace W|}{m}
\le\|\widetilde W-W\|_{\operatorname{op}}
\le\sum_{r=1}^{10}\sigma(1+\sigma)^{(10-r)/2}
\le10\sigma(1+\sigma)^{9/2}.
\]
\end{proof}

\begin{proof}[Proof of Claim~\ref{claim:haar_measure_marginal}]
Let $U_1,\ldots,U_{10}$ be the factors of $W$ in order. Under $\mu_1$,
\[
(U_1,\ldots,U_9)\sim\nu^{\otimes9},
\qquad
U_{10}=(U_1\cdots U_9)^{-1}.
\]
Fix $j\in[9]$ and condition on $(U_r)_{r\in[9]\setminus\{j\}}$.
Then $U_j$ remains Haar, and
\[
U_{10}=(U_{j+1}\cdots U_9)^{-1}U_j^{-1}(U_1\cdots U_{j-1})^{-1}.
\]
Both surrounding products are fixed. Thus Haar invariance gives, for every measurable $S\subseteq\sugroup(m)$,
\[
\Pr_{\mu_1}\!\left[U_{10}\in S\mid(U_r)_{r\in[9]\setminus\{j\}}\right]=\nu(S).
\]
Together with the independent Haar distribution of the other eight matrices, this implies
\[
(U_r)_{r\in[10]\setminus\{j\}}\sim\nu^{\otimes9}
\qquad(j\in[10]).
\]
The views $(X,Y)$, $(X,Z)$, and $(Y,Z)$ contain $8$, $6$, and $6$ distinct matrices, respectively, so each has the same product Haar distribution as under $\mu_0$.
\end{proof}

\begin{proof}[Proof of Claim~\ref{claim:rounding}]
Let $\widetilde U_{\mathrm r}$ denote the rounded version of a Haar matrix $U$; the subscript $\mathrm r$ marks rounding, as opposed to the polar factor $\widetilde U$. Put $\Delta=\widetilde U_{\mathrm r}-U$.
Each real coordinate changes by at most $2^{-b-1}$, so
\[
\|\Delta\|_{\mathrm F}\le\frac{m2^{-b}}{\sqrt2}\le\frac mn,
\qquad
\|\widetilde U_{\mathrm r}^\dagger\widetilde U_{\mathrm r}-I_m\|_{\mathrm F}
\le2\|\Delta\|_{\mathrm F}+\|\Delta\|_{\mathrm F}^2
\le\frac{2m}{n}+\frac{m^2}{n^2}<10^{-4}
\]
for all sufficiently large $n$, because $m=\Theta(\sqrt{n/\log n})$.
Let $\widetilde W_{\mathrm r}$ be the product of the ten rounded factors. Expanding the product difference one factor at a time, as in Claim~\ref{claim:unitarization}, gives
\[
\frac{|\operatorname{Re}\trace\widetilde W_{\mathrm r}-\operatorname{Re}\trace W|}{m}
\le\|\widetilde W_{\mathrm r}-W\|_{\operatorname{op}}
\le\frac{10m}{n}\left(1+\frac mn\right)^9<10^{-3}.
\]
Under $\mu_1$, $W=I_m$, so
\[
\frac{\operatorname{Re}\trace W}{m}=1,\qquad
\frac{\operatorname{Re}\trace\widetilde W_{\mathrm r}}{m}>1-10^{-3}>0.9,\qquad
\fun F_n(\fun R(X,Y,Z))=1.
\]
Under $\mu_0$, $W$ is Haar. Haar invariance gives
\[
\E_{\mu_0}[W_{ii}\overline{W_{jj}}]=\frac{\mathbf1[i=j]}m,
\qquad \E_{\mu_0}[|\trace W|^2]=1.
\]
Markov's inequality gives the continuous-input estimate
\[
\Pr_{\mu_0}[\operatorname{Re}\trace W>0.1m]
\le\frac{\E_{\mu_0}[|\trace W|^2]}{0.1^2m^2}
=\frac{100}{m^2}\le\frac1m.
\]
The rounded matrices satisfy 
$\bigl\|\widetilde U_{\mathrm r}^{\dagger}\widetilde U_{\mathrm r}-I_m\bigr\|_{\mathrm F}\le10^{-4},$  and rounding changes the normalized trace by less than $10^{-3}$. Hence
\[
\Pr_{\widetilde\mu_0}[\fun F_n\ne0]
\le\Pr_{\mu_0}[\operatorname{Re}\trace W>0.099m]
\le\frac{\E_{\mu_0}[|\trace W|^2]}{0.099^2m^2}
\le\frac1m
\]
for all sufficiently large $m$. The event $\fun F_n\ne0$ includes inputs outside the promise, so this proves the stated probability of $\fun F_n=0$.
\end{proof}

\begin{proof}[Proof of Claim~\ref{claim:step_function_arppox}]
All pairwise expectations below use independent Haar inputs.

\paragraph{Step 1. Construct Boolean approximations on finite partitions.}
Let $\mathcal E=\{\fun f=1\}$.
By compact inner and open outer approximation for Haar measure
(see Knapp~\cite[Section~VI.2]{knapp2016advanced}),
there are a compact set $\mathcal K$ and an open set $\mathcal O$ with
\[
\mathcal K\subseteq\mathcal E\subseteq\mathcal O,
\qquad
(\nu^{\otimes4}\otimes\nu^{\otimes2})(\mathcal O\setminus\mathcal K)
<\frac1{6r}.
\]
Every point of $\mathcal K$ has an open rectangle neighborhood contained
in $\mathcal O$. A finite subcover gives a union of rectangles $\mathcal D$
with $\mathcal K\subseteq\mathcal D\subseteq\mathcal O$.
Thus $\fun f':=\mathbf1_{\mathcal D}$ satisfies
\[
\E_{Y,Z}|\fun f-\fun f'|
=(\nu^{\otimes4}\otimes\nu^{\otimes2})(\mathcal E\triangle\mathcal D)
\le(\nu^{\otimes4}\otimes\nu^{\otimes2})(\mathcal O\setminus\mathcal K)
<\frac1{6r}.
\]
Construct $\fun g',\fun h'$ similarly.
Partition each of $\mathcal X,\mathcal Y,\mathcal Z$ according to membership
in all rectangle sides used in these three constructions.
The resulting finite partitions $\mathcal P_X,\mathcal P_Y,\mathcal P_Z$
make $\fun f',\fun g',\fun h'$ constant on the corresponding cell products,
with each approximation error at most $1/(6r)$.

\paragraph{Step 2. Make the partitions equal in measure.}
Let $s=\max\{|\mathcal P_X|,|\mathcal P_Y|,|\mathcal P_Z|\}$ and choose
an integer $q\ge12rs$.
Since Haar measure is nonatomic for $m\ge2$, every measurable set
contains subsets of any prescribed measure between zero and its own
measure (see Carlen~\cite[p.~2]{carlen2010divisibility}).
Thus each $S\in\mathcal P_X$ splits into
$\lfloor q\nu^{\otimes4}(S)\rfloor$ sets of measure $1/q$
and a remainder $R_S$ of measure less than $1/q$.
Their union $R_X:=\bigcup_{S\in\mathcal P_X}R_S$ satisfies
\[
\nu^{\otimes4}(R_X)<\frac{s}{q},
\qquad
q\nu^{\otimes4}(R_X)
=q-\sum_{S\in\mathcal P_X}\lfloor q\nu^{\otimes4}(S)\rfloor
\in\mathbb Z_{\ge0}.
\]
Therefore $R_X$ also splits into sets of measure $1/q$,
completing an equal-measure partition up to null sets.
Repeat this on $\mathcal Y,\mathcal Z$, obtaining remainders with
$\nu^{\otimes4}(R_Y),\nu^{\otimes2}(R_Z)<s/q$.
Define
\begin{align*}
\fun f_r&=\fun f'\mathbf1_{\mathcal Y\setminus R_Y}
                  \mathbf1_{\mathcal Z\setminus R_Z},\\
\fun g_r&=\fun g'\mathbf1_{\mathcal X\setminus R_X}
                  \mathbf1_{\mathcal Z\setminus R_Z},\\
\fun h_r&=\fun h'\mathbf1_{\mathcal X\setminus R_X}
                  \mathbf1_{\mathcal Y\setminus R_Y}.
\end{align*}
Outside the remainders, the new cells refine the old ones;
on the remainder cells, these functions are zero.
Assign null sets to cells and extend values constantly there.
Thus $\fun c_r:=\fun f_r\fun g_r\fun h_r$ is finite-type.

\paragraph{Step 3. Bound the approximation error.}
For $\fun f_r$,
\[
\E_{Y,Z}|\fun f-\fun f_r|
\le\frac1{6r}+\nu^{\otimes4}(R_Y)+\nu^{\otimes2}(R_Z)
\le\frac1{6r}+\frac{2s}{q}\le\frac1{3r},
\]
and the same bound holds for $\fun g_r,\fun h_r$.
By Claim~\ref{claim:haar_measure_marginal}, both $\mu_0$ and $\mu_1$
have the product Haar marginals used above. Hence, for $i\in\{0,1\}$,
\begin{align*}
|\mu_i(\fun c-\fun c_r)|
&\le\E_{\mu_i}|\fun f\fun g\fun h-\fun f_r\fun g_r\fun h_r|\\
&\le\E_{Y,Z}|\fun f-\fun f_r|
   +\E_{X,Z}|\fun g-\fun g_r|
   +\E_{X,Y}|\fun h-\fun h_r|
\le\frac1r.
\end{align*}
\end{proof}

\end{document}